\documentclass[10pt, twocolumn]{IEEEtran}
\IEEEoverridecommandlockouts 

\def\withcolors{0}
\def\withnotes{0}

\usepackage[usenames, dvipsnames, table]{xcolor}
\usepackage[noend]{algpseudocode}
\usepackage[cmex10]{amsmath}
\usepackage[mathscr]{eucal}
\usepackage[noadjust]{cite}
\usepackage[colorinlistoftodos, textsize=scriptsize]{todonotes}
\usepackage[normalem]{ulem} 
\usepackage{%
	amsfonts,%
    	amssymb,%
    	amsthm,
    	bbm,%
	circuitikz,%
	enumitem,%
    	etoolbox,%
    	mathtools,%
	pgf,%
	pgfplots,%
    	stmaryrd,%
	subcaption,%
	tikz,%
	url,%
}

\interdisplaylinepenalty=2500
\newlist{steps}{enumerate}{1}
\setlist[steps, 1]{label = \quad \quad \quad Step \arabic*:}
\usepgflibrary{shapes}
\usetikzlibrary{%
    arrows,%
    decorations.text,%
    positioning,%
    scopes,%
    shapes,
    patterns,%
}	

\theoremstyle{plain} \newtheorem{thm}{Theorem}[section]
\newtheorem{lem}[thm]{Lemma} 
\newtheorem{cor}[thm]{Corollary}

\theoremstyle{definition} \newtheorem{defn}[thm]{Definition}

\newtheorem{example}[thm]{Example}

\theoremstyle{remark} \newtheorem{rem}{Remark}

\definecolor{lightgray}{gray}{0.9}

\newcommand{\ep}{\varepsilon}
\newcommand{\eq}[1]{\begin{align*}#1\end{align*}}
\newcommand{\ceil}[1]{\left\lceil#1\right\rceil}%
\newcommand{\norm}[1]{\left\lVert#1\right\rVert}%

 \newcommand{\R}{\mathbb{R}}
\newcommand{\N}{\mathbb{N}}
\newcommand{\E}[1]{\mathbb{E}\left[#1\right]}

 \newcommand{\G}{\mathcal{G}}
\newcommand{\X}{\mathcal{X}}
\newcommand{\Y}{\mathcal{Y}}
\newcommand{\uRoman}[1]{\uppercase\expandafter{\romannumeral#1}}

\DeclarePairedDelimiter\floor{\lfloor}{\rfloor}


\ifnum\withcolors=1
  \newcommand{\newer}[1]{{\color{blue} {#1}}} 
  \newcommand{\mcolor}[1]{{\color{ForestGreen}#1}} 
  \newcommand{\pcolor}[1]{{\color{RubineRed}#1}} 
  \newcommand{\tcolor}[1]{{\color{Orange}#1}} 
\else
  
  \newcommand{\newer}[1]{{{#1}}}
  \newcommand{\mcolor}[1]{{#1}}
  \newcommand{\pcolor}[1]{{#1}}
  \newcommand{\tcolor}[1]{{#1}}
\fi

\ifnum\withnotes=1
  \newcommand{\mnote}[1]{\par\mcolor{\textbf{M: }\sf #1}} 
  \newcommand{\pnote}[1]{\par\pcolor{\textbf{P: }\sf #1}} 
  \newcommand{\tnote}[1]{\par\tcolor{\textbf{T: }\sf #1}} 
  
\else
  \newcommand{\mnote}[1]{}
  \newcommand{\pnote}[1]{}
  \newcommand{\tnote}[1]{}
  
\fi
\newcommand{\ignore}[1]{\leavevmode\unskip} 

\begin{document}
\title{Optimal Source Codes for Timely Updates}
\author{\IEEEauthorblockN{Prathamesh Mayekar,~\IEEEmembership{Student Member,~IEEE} } 
\and
  \IEEEauthorblockN{Parimal Parag,~\IEEEmembership{Member,~IEEE}} 
  \and
  \IEEEauthorblockN{Himanshu Tyagi,~\IEEEmembership{Senior
      Member,~IEEE}}}

    \date{} \maketitle

{\renewcommand{\thefootnote}{}\footnotetext{
The authors are with the Department of Electrical Communication Engineering,
Indian Institute of Science, Bangalore 560012, India.  (Email: \{prathamesh, parimal, htyagi\}@iisc.ac.in).

A preliminary  version of this paper~\cite{mayekar2018optimal} was presented at the IEEE International
    Symposium on Information Theory, Vail, USA, 2018.
}
}

\maketitle \renewcommand{\thefootnote}{\arabic{footnote}}
\setcounter{footnote}{0}

\begin{abstract}
A transmitter observing a sequence of independent and identically
distributed random variables seeks to keep a receiver updated about
its latest observations.  The receiver need not be apprised about each
symbol seen by the transmitter, but needs to output a symbol at each
time instant $t$.  If at time $t$ the receiver outputs the symbol seen
by the transmitter at time $U(t)\leq t$, the age of information at the
receiver at time $t$ is $t-U(t)$.  We study the design of lossless
source codes that enable transmission with minimum average age at the
receiver.  We show that the asymptotic minimum average age can be
attained up to a constant gap by the Shannon codes for a tilted version of the original
pmf generating the symbols, which can be computed easily by solving an
optimization problem.  Furthermore, we exhibit an example with
alphabet $\X$ \newer{where  Shannon} codes for the original pmf incur an
asymptotic average age of a factor $O(\sqrt{\log |\X|})$ more
than that achieved by our codes.  Underlying our prescription for
optimal codes is a new variational formula for integer moments of
random variables, which may be of independent interest.  Also, we
discuss possible extensions of our formulation to randomized schemes
\newer{and to the erasure channel,} and include a treatment of the related problem of
source coding for minimum average queuing delay.
\end{abstract} 


\begin{IEEEkeywords}
 Timely updates, source codes, Gibbs variational formula, age of information
\end{IEEEkeywords}


\section{Introduction}
Timeliness is emerging as an important requirement for communication
{in} cyber-physical systems (CPS). {Broadly}, it refers to the
requirement of having the latest information from the transmitter
available at the receiver in a timely fashion.  It is important to
distinguish the requirement of timeliness from that of low delay
transmission: The latter places a constraint on the delay in
transmission of each message, while timeliness is concerned about how
recent is the current information at the receiver.  
In particular, the instantaneous staleness at the receiver is low if a message is received with low delay. 
However, the instantaneous staleness increases linearly at the receiver until a subsequent message is decoded successfully.
{A heuristically appealing metric that can capture the notion of timeliness of
  information in a variety of applications, termed its {\em age}, was first used
  in~\cite{KaulYatesGruteser11} for a setting involving queuing and
  link layer delays and was analyzed systematically for a queuing
  model in the pioneering work~\cite{KaulYatesGruteser12}; see~\cite{yates2015lazy,
    he2018optimal, sun2017update, bhambay2017differential, 
    bacinoglu2017scheduling, kosta2017age} for a sampling of
  subsequent developments in problems related to minimum age
  scheduling.} In this paper, we initiate a systematic study of the
design of source codes with the goal of minimizing the age of the
information at the receiver.

As a motivating application, consider 
  remote sensor data monitoring where
  at each instant the sensor observes real-valued, time-series measurements. For concreteness, the reader may consider voltage and current data recording using \newer{intelligent electronic devices} in a power distribution network. 
  The sensor communicates to a center over a network to enable fault detection and
  fault analysis. On the one hand, the communication protocol and buffer constraints at the sensor limits the rate at which the sensor can send data packets to the center. On the other hand, it is not very important for the center to get all the packets from the sensor.  
  Rather the center wants timely updates about the sensor observations.
  In fact, when operating with cheap hardware with limited front-end buffers, it is common to have observation values in the buffer
  overwritten as new recordings are made even before the previous one waiting in the buffer has been picked-up for processing. Our work focuses on data compression for such applications where there is no direct cost of skipping packets and the interest is only in timely updates.

Specifically, we consider the problem of source coding where a
transmitter receives symbols generated from a known distribution and
seeks to communicate them to a receiver in a timely fashion.\footnote{
This assumption of known distribution is realized in practice by building a model for sensor data offline, before initiating the live monitoring process.}
\newer{To that
end, it encodes a symbol $x$ to $e(x)$ using a variable length
prefix-free code $e$. } The coded sequence is then transmitted over a
noiseless communication channel that sends one bit per unit time.  We
restrict our treatment to a simple class of deterministic\footnote{Our
  analysis of average age extends to randomized schemes as well; see
  Section~\ref{s:extensions}.} update schemes, termed {\it memoryless
  update schemes}, where the transmitter does not have have a buffer to store the symbols it has seen previously and simply sends the next observed symbol once the 
  channel is free.

Specifically, denoting the source alphabet by $\X$,
the transmitter observes a symbol $X_t \in \X$ at each discrete time $t$. 
At time $t=1$, 
the transmitter communicates the symbol $X_1 = x_1$ by encoding it to codeword $e(x_1)$ of length $\ell(x_1)$ bits. 
This transmission requires $\ell(x_1)$ channel uses and is received perfectly at the decoder at time $1+\ell(x_1)$.  
Since the channel remains busy sending $e(x_1)$ for time instants $1$ to $\ell(x_1)$, the transmitter cannot send any new symbols during this period. 
At time $t' = 1+\ell(x_1)$, the transmitter observes the symbol $X_{t'} = x_{t'}$.
\newer{Under a memoryless update scheme, }
the transmitter cannot store the symbols seen during the time interval $\{2, \dots, \ell(x)\}$ and communicates codeword $e(x_{t'})$ over the next $\ell(x_{t'})$ channel uses, starting from the time instant $t'=1+\ell(x_1)$.
The communication process continues repeatedly in this fashion.

We emphasize that under memoryless schemes, the source symbols
generated and observed by the transmitter while the channel is busy
sending a previous symbol are simply skipped. This skipping is only
allowed when the channel is busy, and not at the will of the encoder
when the channel is free (see Section~\ref{s:extensions} for
discussion on randomized schemes that allow the transmitted to skip
symbols even when channel is free). Furthermore, the encoder need not
indicate to the decoder that a symbol has been skipped using a special
symbol -- the decoder can ascertain this from the received
communication since the  
channel is noiseless  \newer{and compression is done using prefix-free codes}.  


On the receiver side, at each instance $t$ the decoder outputs a time
$U(t)$ and the symbol $X_{U(t)}$ seen by the transmitter at time
$U(t)$.  Thus, the {\em age of information} at the receiver at time $t$ is
given by $A(t) = t - U(t)$.  
  We note that age of information measures timeliness at the receiver.
  When the transmitter skips source symbols, $U(t)$ remains unchanged at the receiver and the age $A(t)$ increases. Therefore, the age metric implicitly penalizes for
  skipping symbols.

We illustrate the setup in Figure~\ref{f:setup}. 
 In this example,
 the symbol $X_1$ generated at time $t=1$ is encoded to a two-bit codeword $e(X_1)$ and received at the decoder at time $t=3$ after two channel uses.
At time $t=2$, the transmitter skips symbol $X_2$  since the channel was busy sending $X_1$ when it arrived. Further, the decoder retains $U(t)=0$ since it has not received any symbol. 
 At time $t=3$, the decoder receives the codeword $e(X_1)$, updates $U(3) = 1$, and outputs the corresponding  symbol $X_1$. Thus, the age of information at the receiver at time $t = 3$ is $A(3) = 2$. Since the channel becomes available at time $t=3$, the transmitter encodes the symbol $X_3$ and transmits the one-bit codeword $e(X_3)$, which is received after a single channel-use at time $t=4$. 
At time $t=4$, the decoder outputs time $U(4) = 3$ with outputs the corresponding symbol $X_3$, 
and the age of information at the receiver is $A(4) = 1$. Once again, the channel becomes available at time $t=4$ for the transmitter. It encodes the current symbol $X_4$ into the codeword $e(X_4)$ of length $3$ bits and sends $e(X_4)$ over the channel; $e(X_4)$ is received at time $t=7$. 
The decoder retains the output $U(t) = 3$ and $X_{U(t)} = X_3$ for times $t \in \{4,5,6\}$. At time $t=7$, the decoder outputs time $U(7) = 4$ and the corresponding symbol $X_4$; the age of information at the receiver is $A(7) = 3$. 

\begin{figure}[h]
\centering \begin{tikzpicture}
[scale=1, transform shape,
    pre/.style={=stealth',semithick},
    post/.style={->,shorten >=1pt,>=stealth',semithick},
dimarrow/.style={->, >=latex, line width=1pt}
    ]
\draw[<-,line width=1pt] (0.25,3.5) to (0.25,0);
\node[align=center, rotate=90] at (0.0,3.0) {Time}; 

\draw [dashed, color=black!20] (0.3, 0.0) to (8.5,0.0);
\node[align=center] at (0.6,0.2) {t=1}; 
\draw [dashed, color=black!20] (0.3, 0.5) to (4.5, 0.5);
\node[align=center] at (0.6,0.7) {t=2}; 
 \draw [dashed, color=black!20] (5.5, 0.5) to (8.5,0.5);
 \draw [dashed, color=black!20] (0.3, 2*0.5) to (8.5,2*0.5);
\node[align=center] at (0.6,0.7+0.5) {t=3}; 
\draw [dashed, color=black!20] (0.3, 3*0.5) to (8.5,3*0.5);
\node[align=center] at (0.6,0.7+2*0.5) {t=4}; 
\draw [dashed, color=black!20] (0.3, 4*0.5) to (4.5, 4*0.5);
\node[align=center] at (0.6,0.7+3*0.5) {t=5};
\draw [dashed, color=black!20] (5.5, 4*0.5) to(7.5,4*0.5);
\draw [dashed, color=black!20] (0.3, 5*0.5) to (4.5, 5*0.5);
\node[align=center] at (0.6,0.7+4*0.5) {t=6}; 
\draw [dashed, color=black!20] (5.5, 5*0.5) to(7.5,5*0.5);
\draw [dashed, color=black!20] (0.3, 6*0.5) to (8.5,6*0.5);
\node[align=center] at (0.6,0.7+5*0.5) {t=7};

\draw[thick, rounded corners=1.mm] (0.25, -0.25) rectangle (1.75, -.75);  
\node[align=center] at (1.02,-0.5) {Source};  
\draw[dimarrow] (1.75,-0.5) to (2.5,-0.5);
\draw[thick, rounded corners=1.mm] (2.5, -0.25) rectangle (4, -.75);
\node[align=center] at (3.25,-0.5) {Encoder};
\draw[dimarrow] (4,-0.5) to (4.3,-0.5);
\draw[thick, rounded corners=1.mm] (4.3,-0.25) rectangle (5.8, -0.75);
\node[align=center] at (5.1,-0.5) {Channel };
\draw[dimarrow] (5.8,-0.5) to (6.1,-0.5);
\draw[thick, rounded corners=1.mm] (6.1, -0.250) rectangle (7.6, -0.75);
\node[align=center] at (6.9,-0.5) {Decoder};
\draw[dimarrow] (7.6,-0.5) to (8.1,-0.5);
 
\draw[pattern=vertical lines, pattern color=blue!60] (1.5,0) rectangle (2.5,0.5); 
\node[align=center] at (2.1,0.25) {$\bold{X_1}$ };

\draw[pattern=horizontal lines, pattern color=cyan!60] (1.5,0.5) rectangle (2.5,1); 
\node[align=center] at (2.1,0.75) {$\bold{X_2}$ };

\draw[pattern=north east lines, pattern color=gray!60] (1.5,1) rectangle (2.5,1.5); 
\node[align=center] at (2.1,1.25) {$\bold{X_3}$ };

\draw[pattern=north west lines, pattern color=red!20] (1.5,1.5) rectangle (2.5,2); 
\node[align=center] at (2.1,1.75) {$\bold{X_4}$ };


\draw[ color=black, fill = black!75] (2,2.15) circle (0.05cm); 
\draw[ color=black, fill = black!75] (2,2.35) circle (0.05cm); 
\draw[ color=black, fill = black!75] (2,2.55) circle (0.05cm); 

\draw[pattern=vertical lines, pattern color=blue!60] (4.5,0) rectangle (5.5,1); 

\node[align=center] at (5.1,0.25) {$\bold{e(X_1)}$ };

\draw[pattern=north east lines, pattern color=gray!60] (4.5,1) rectangle (5.5,1.5); 
\node[align=center] at (5.1,1.25) {$\bold{e(X_3)}$ };

\draw[pattern=north west lines, pattern color=red!20] (4.5,1.5) rectangle (5.5,3); 
\node[align=center] at (5.1,1.75) {$\bold{e(X_4)}$ };

\draw[ color=black, fill = black!75] (5,3.15) circle (0.05cm); 
\draw[ color=black, fill = black!75] (5,3.35) circle (0.05cm); 
\draw[ color=black, fill = black!75] (5,3.55) circle (0.05cm);


\draw[pattern= vertical lines, pattern color=blue!60] (7.5,1) rectangle (8.5,1.5); 
\node[align=center] at (8.1,1.25) {$\bold{X_1}$ };

\draw[pattern=north east lines, pattern color=gray!60] (7.5,1.5) rectangle (8.5,3); 
\node[align=center] at (8.1,1.75) {$\bold{X_3}$ };

\draw[pattern=north west lines, pattern color=red!20] (7.5,1.5+3*0.5) rectangle (8.5,2+3*0.5); 
\node[align=center] at (8.1, 1.75+3*0.5)  {$\bold{X_4}$};
\draw[ color=black, fill = black!75] (8,3.15+0.5) circle (0.05cm); 
\draw[ color=black, fill = black!75] (8,3.35+0.5) circle (0.05cm); 
\draw[ color=black, fill = black!75] (8,3.55+0.5) circle (0.05cm);
\end{tikzpicture}
\caption{Illustration of a memoryless update scheme for the first 4 packets in the source-queue.}
\label{f:setup}
\end{figure}
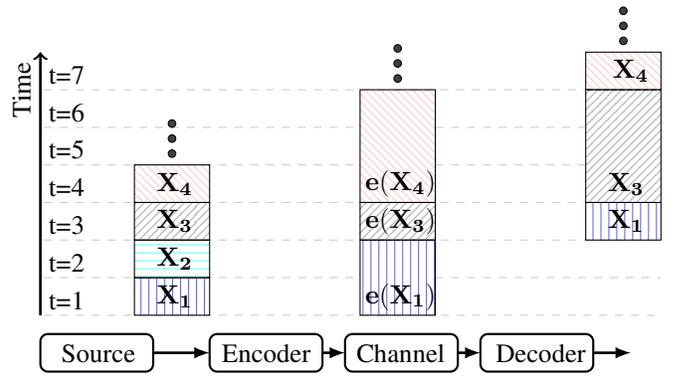

Our goal in this paper is to design prefix-free codes for which the
average age of the memoryless scheme above is minimized; namely codes
$e$ that minimize
\[
\bar{A}(e) = \lim_{T\to \infty} \frac 1T \sum_{t=1}^T A(t).
\]
This formulation is apt for the timely update problem where the
transmitter need not send each update and strives only to reduce the
average age of the information at the receiver.

Using a simple extension of the renewal reward
  theorem, we derive a closed form formula for the asymptotic average
age attained by a prefix-free code.  Interestingly, this formula is a
rational function of the first and the second moment of the random
codeword length.  Our main technical contribution in this paper is a
variational formula for the second moment of random variables that
enables an algorithm for finding the code that attains the minimum
asymptotic average age up to a constant gap.  The variational formula
is of independent interest and may be useful in other settings where
such cost functions arise; we point-out one such setting in
Section~\ref{s:extensions}. In fact, our prescribed prefix-free code
is a Shannon code\footnote{A Shannon
code for $P$ is a prefix-free code that assigns lengths $\ell_S(x) = \lceil -\log P(x)\rceil$ to
a symbol $x$ ($cf.$~\cite{CovTho06}).} for a tilted version of the original pmf. See \eqref{e:tilted_distribution} below for the \newer{description} of the tilted version; it can be computed by solving an optimization problem entailing entropy maximization.

The formula for average age that we derive yields an $O(\log |\X|)$
upper bound on the minimum average age, attained by a fixed length
code.  We show that the same upper bound of $O(\log|\X|)$ holds for
the average age of a Shannon code for the original distribution as
well. However, we exhibit an example where Shannon codes for the original
distribution have $\Omega(\log|\X|)$ age, while our aforementioned
proposed code 
yields an average age of $O(\sqrt{\log |\X|})$.

The problem of designing update codes with low average age is related
to real-time source coding ($cf.$~\cite{Mahajan}) where we seek to
transmit a stream of data under strict delay bounds.  A related
formulation has emerged in the control over communication network
literature ($cf.$~\cite{TatikondaMitter04}) where an observation is
quantized and sent to an estimator/controller to enable control.
Here, too, the requirement is that of communication under bounded
delay.

An alternative formulation for minimum age source coding is considered
in the recent work~\cite{YatesDCC}. Unlike our formulation,
skipping of symbols is prohibited
in~\cite{YatesDCC}. Instead, the authors consider
fixed-to-variable length block codes and require that each coded
symbol be transmitted over a constant rate, noiseless bit-pipe. In
this setting, an exact expression for average age is not available,
and the authors take recourse to an approximation for average age. This
approximate average age is then optimized numerically over a set of
prefix-free codes using the algorithm in
\cite{larmore1989minimum}. The authors further reduce the computational complexity of this algorithm by using  the algorithm in \cite{baer2006source}.

A recent paper~\cite{ZhongYatesSoljanin18} extends this problem to include random arrival times of source symbols and applies the algorithm from~\cite{larmore1989minimum} for optimizing the cost function.
 Note that the cost function optimized
  in \cite{larmore1989minimum} is similar to the approximate average
  age of~\cite{YatesDCC , ZhongYatesSoljanin18}, but with one crucial difference:  While the former is monotonic in both first and second moments of
  random lengths, the latter is not. In absence of this monotonicity,
  the optimality of the solution produced by algorithm in
  \cite{larmore1989minimum} is not guaranteed for the cost functions in~\cite{YatesDCC, ZhongYatesSoljanin18}. In a related
work~\cite{zhong2017backlog}, the same authors point-out that the
average age can be further reduced by allowing the encoder to
dynamically control the block-length of the fixed-to-variable length
codes.

\newer{In contrast to~\cite{YatesDCC}, which is perhaps closest to our work, 
  we derive an exact expression for average age and rigorously establish the structural properties of the optimal solution to the relaxed problem.}
In fact, our proposed minimum average age problem differs from all these prior
formulations since we need not send the entire stream and are allowed
to skip some symbols. 
In our applications of interest, such as that of real-time sensor data monitoring outlined earlier, the allowed
communication rates are much lower than the rate at which data is
generated. Thus, there is no hope of transmitting all the data at
bounded delay, as mandated by the formulations available hitherto. 
Nonetheless, our setting is related closely to that
  in~\cite{YatesDCC} and provides a complementary
  formulation for age optimal source coding.
    We note that our focus is on settings where the alphabet size of the streaming symbols is large. In such settings, the average age for any memoryless update scheme would be much larger than a small constant. Therefore, it suffices to establish optimality up to small additive constants.

In addition to our basic formulation, we present a few extensions of
our formulations and other use cases for our proposed variational
formula. Specifically, while we restrict to deterministic schemes for
the most part, our analysis can be extended easily to analyze
randomized schemes where the encoder can choose to skip an available
transmission slot randomly. This idea of skipping transmission
  slots arises also in the recent work~\cite{sun2017update}, albeit in a slightly different context. We
  exhibit an example where a particular randomized scheme outperforms
  every deterministic scheme. However, our analysis is limited and does not
  completely clarify the role of randomization; for instance, it
  remains unclear for which distributions can randomized schemes
  strictly outperform deterministic ones.

In another direction, we consider the case where the transmission
channel is not error-free, but can erase each bit with a known
probability.  Furthermore, an ACK-NACK feedback indicating the success
of transmission is available.  Note that for the standard transmission
problem, the simple repeat-until-succeed scheme is optimal in this
setting.  Our analysis can be used to design the optimal source code
when we restrict our channel coding to this simple scheme.  However,
the optimality of the ensuing source-channel coding scheme remains
unclear.

Finally, we study the related problem of source coding for ensuring
minimum queuing delays.  This problem, introduced
in~\cite{Humblet78thesis}, is closely related to the minimum age
formulation of this paper.  Interestingly, our recipe for designing
update codes with minimum average age can be extended to this setting
as well.  However, here, too, our results are somewhat unsatisfactory:
Our approach only provides a solution to the real-relaxation of the
underlying integer-valued optimization problem and naive rounding-off
is far from optimal.  Nonetheless, we have included these extensions
in the current paper since they indicate the rich potential for our
proposed techniques and provide new formulations for future research.

The next section contains a formal description of our setting and a
formula for asymptotic average age of a code. Our main technical tool
is presented in Section~\ref{s:variational}, and we apply it to the
minimum average age code design problem in
Section~\ref{s:code_design}.  Numerical evaluations of our proposed
scheme for the family of Zipf distributions is presented in
Section~\ref{s:age_numerical}. Section~\ref{s:extensions} contains a
discussion on extensions to randomized schemes and erasure channel,
along with a treatment of source codes for minimum average waiting
time.  We provide all the proofs in the final section.

{\em Notation and some preliminaries.} Random variables are denoted by capital letters $X,Y$ etc., their realizations by small letters $x, y$ etc., \newer{and their range sets} by $\X, \Y$. 
The cardinality of the set $\X$ is denoted by $|\X|$. The set of all finite length binary sequences is denoted by $\{0,1\}^*$. 


The logarithm to the base 2 is denoted by $\log a$ and the logarithm to the base $e$ is denoted by $\ln a$. All the information theoretic measures considered in this paper -- such as Entropy, R\'enyi divergence, and  Kullback-Leibler divergence -- are defined with logarithm to the base 2.

Next, we recall the notions of Shannon lengths and Shannon codes, which will be used throughout. A source code is called {\em prefix-free} if no codeword is a prefix of another. 
\begin{defn}[Shannon lengths and Shannon codes for $P$]\label{d:shannon}
  For a pmf $P$ on an alphabet $\X$, the real-values
  $\ell(x)=-\log P(x), x \in \X$, are called the {\em Shannon lengths} for the pmf $P$. A prefix-free source code for $P$ with codeword lengths $\ell(x)=\ceil{-\log P(x)}, \ \forall x \in \X,$ is called a {\em Shannon code}\footnote{There can be different codes with codeword lengths required in our definition of a Shannon code. We simply refer to all of them as a Shannon code, since any of these can serve our purpose in this paper.} for the pmf $P$. 
\end{defn}

\section{Average age for memoryless update schemes}
\label{s:timely_updates} 

Consider a discrete-time system in which at every time instant $t$, a
transmitter observes a symbol $X_t$ generated from a finite alphabet
$\X$ with pmf $P$.  We assume that the sequence $\{X_t\}_{t=1}^\infty$
is {independent and identically distributed} (iid).  The transmitter
has a noiseless communication channel at its disposal over which it
can transmit one bit per unit time. A {\em memoryless update scheme}
consists of a prefix-free code, represented by its encoder $e:\X\to
\{0,1\}^*$, and a decoder which at each time instant $t$ declares a
time index $U(t)\leq t$ and an estimate $\hat{X}_{U(t)}$ for the
symbol $X_{U(t)}$ that was observed by the encoder at time $U(t)$. We
focus on error-free schemes and require $\hat{X}_{U(t)}$ to equal
$X_{U(t)}$ with probability $1$.

In a memoryless update scheme, once the encoder starts communicating a
symbol $x$, encoded as $e(x)$, it only picks up the next symbol once
all the bits in $e(x)$ have been transmitted successfully to the
receiver.  The time index $U(t)$ is updated to a new value only upon
receiving all the encoded bits for the current symbol. That is, if the
transmission of a symbol is completed at time $t-1$, \newer{the encoder will
start transmitting $e(X_t)$ in the next instant.} Moreover, if the
  final bit of $e(X_t)$ is received at time $t^\prime$, $U(t^\prime)$
  is updated to $t$. A typical sample path for $U(t)$ is given in
Figure~\ref{f:Age_sample_path}.
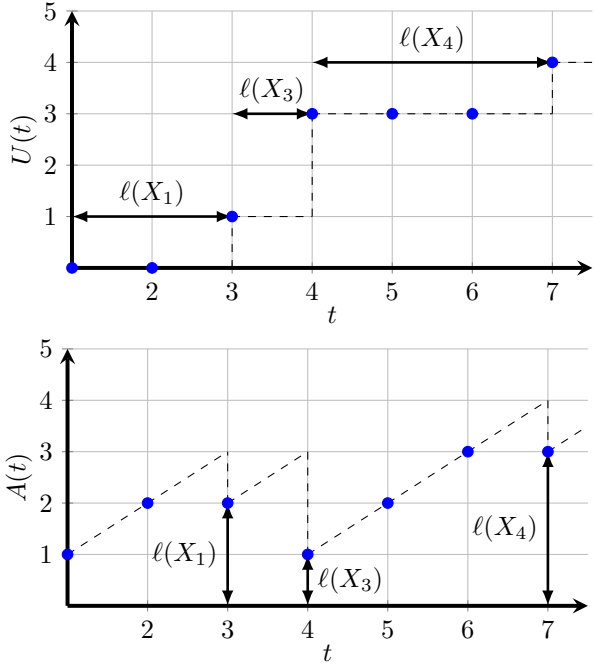
\begin{figure}
\centering
\begin{tikzpicture}[
soldot/.style={color=blue, only marks, mark=*}, 
dimarrow/.style={<->, >=latex, line width=1pt}
]

\begin{axis}[
width=8.5cm,height=5cm,
    axis line style={line width=1.5pt},
	xmajorgrids=true,
    ymajorgrids=true,
    grid style=solid,
	axis x line=middle, axis y line=middle,	
	y label style={at={(axis description cs:-0.05,.5)},rotate=90,anchor=south},
	x label style={at={(axis description cs:.5,-.15)},anchor=south},
    xmin=1, xmax=7.5,
    ymin=0, ymax=5,
    xtick={1,2,...,7},
    ytick={0,1,2,...,5},
	x label style={at={(axis description cs:.5,-.25)},anchor=south},
	xlabel={$t$},
    ylabel={$U(t)$}
]

\foreach \x in {1,2}
\addplot[soldot] coordinates{(\x,0)};
\foreach \x in {3}
\addplot[soldot] coordinates{(\x,1)};
\foreach \x in {4}
\addplot[soldot] coordinates{(\x,3)};
\foreach \x in {5}
\addplot[soldot] coordinates{(\x,3)};
\foreach \x in {6}
\addplot[soldot] coordinates{(\x,3)};
\foreach \x in {7}
\addplot[soldot] coordinates{(\x,4)};

\draw[dimarrow] (axis cs:1,1)  -- (axis cs:3,1)node[midway,above] {$\ell(X_1)$} ;
\draw[dashed] (axis cs:3,0)  -- (axis cs:3,1) ;

\draw[dimarrow] (axis cs:3,3)  -- (axis cs:4,3)node[midway,above] {$\ell(X_3)$} ;
\draw[dashed] (axis cs:3,1)  -- (axis cs:4,1) ;
\draw[dashed] (axis cs:4,1)  -- (axis cs:4,3) ;

\draw[dashed] (axis cs:4,3)  -- (axis cs:7,3) ;
\draw[dimarrow] (axis cs:4,4)  -- (axis cs:7,4)node[midway,above] {$\ell(X_4)$} ;
\draw[dashed] (axis cs:7,3)  -- (axis cs:7,4) ;
\draw[dashed] (axis cs:7,4)  -- (axis cs:7.5,4) ;
\end{axis}
\end{tikzpicture} 
\begin{tikzpicture}[
soldot/.style={color=blue, only marks, mark=*}, 
dimarrow/.style={<->, >=latex, line width=1pt}
]

\begin{axis}[
width=8.5cm,height=5cm,
    axis line style={line width=1.5pt},
	xmajorgrids=true,
    ymajorgrids=true,
    grid style=solid,
	axis x line=middle, axis y line=middle,	
	y label style={at={(axis description cs:-0.05,.5)},rotate=90,anchor=south},
	x label style={at={(axis description cs:.5,-.15)},anchor=south},
    xmin=1, xmax=7.5,
    ymin=0, ymax=5,
    xtick={1,2,...,7},
    ytick={0,1,2,...,5},
	x label style={at={(axis description cs:.5,-.25)},anchor=south},
	xlabel={$t$},
    ylabel={$A(t)$}
]

%
\foreach \x in {1,2}
\addplot[soldot] coordinates{(\x,\x)};
\foreach \x in {3}
\addplot[soldot] coordinates{(\x,2)};
\foreach \x in {4}
\addplot[soldot] coordinates{(\x,1)};
\foreach \x in {5}
\addplot[soldot] coordinates{(\x,2)};
\foreach \x in {6}
\addplot[soldot] coordinates{(\x,3)};
\foreach \x in {7}
\addplot[soldot] coordinates{(\x,3)};

\draw[dashed] (axis cs:1,1)  -- (axis cs:2,2) ;
\draw[dashed] (axis cs:2,2)  -- (axis cs:3,3) ;
\draw[dashed] (axis cs:3,2)  -- (axis cs:3,3) ;
\draw[dimarrow] (axis cs:3,0)  -- (axis cs:3,2) node[midway,left] {$\ell(X_1)$};
\draw[dashed] (axis cs:3,2)  -- (axis cs:4,3) ;
\draw[dashed] (axis cs:4,3)  -- (axis cs:4,1) ;
\draw[dimarrow] (axis cs:4,0)  -- (axis cs:4,1) node[midway,right] {$\ell(X_3)$};
\draw[dashed] (axis cs:4,1)  -- (axis cs:6,3) ;
\draw[dashed] (axis cs:6,3)  -- (axis cs:7,4) ;
\draw[dashed]  (axis cs:7,4) -- (axis cs:7,3);
\draw[dimarrow] (axis cs:7,3)  -- (axis cs:7,0) node[midway,left] {$\ell(X_4)$};
\draw[dashed]  (axis cs:7,3) -- (axis cs:7.5,3.5);
\end{axis}
\end{tikzpicture} 
\caption{A sample path of $U(t)$, $A(t)$ corresponding to Figure \ref{f:setup} starting with $U(1)=0$.}
\label{f:Age_sample_path}
\end{figure} 
The age $A(t)$ of the symbol available at the receiver at time $t$ is
given by
\begin{align}
A(t)= t-U(t).  \nonumber
\end{align}
A more general treatment can allow errors in estimates of $X_{U(t)}$ as
well as encoders with memory, but we limit ourselves to the
simple error-free and memoryless setting in this paper.

We are interested in designing prefix-free codes $e$ that minimize the
average age for the memoryless update scheme described above.
\begin{defn}
The {\it average age} for a prefix-free code $e$, denoted
$\bar{A}(e)$, is given by
\begin{align}
\bar{A}(e) = \limsup_{T\to \infty}\frac 1 T \sum_{t=1}^T {(t-U(t))}.
\nonumber
\end{align}
\end{defn}
\newer{We remark that  $\bar{A}(e)$ can be viewed as the average area under the curve of $A(t)$ (w.r.t. $t$). }
Note that $\bar{A}(e)$ is random variable, nevertheless we will prove
that this random variable is a constant almost surely.  
For any symbol $x \in \X$, we denote the length of the codeword $e(x)$ by $\ell(x)$. 
Let $X \in \X$ be a random symbol with pmf $P$ over the alphabet $\X$, 
then the length of the random codeword $e(X)$ is denoted by 
\[L= \ell(X).\] The result below
  uses a simple extension of the classical renewal reward theorem \newer{($cf.$ \cite{ross1996stochastic})} to
  provide a closed form expression for $\bar{A}(e)$ in terms of the
  first and the second moments of $L$.
\begin{thm}\label{t:average_age}
Consider a random variable $X$ with pmf $P$ on $\X$. For a prefix-free
code $e$, the average age $\bar{A}(e)$ is given by
\begin{align}
\bar{A}(e) = \E {L}+ \frac{\E{L^2}}{2\E {L}} - \frac 12 \quad
a.s. \quad.
\label{e:cost_theta}
\end{align}
\end{thm}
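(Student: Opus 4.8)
The plan is to read off $\bar A(e)$ from a renewal reward computation for the sawtooth age process, the only twist being that the reward accrued over one renewal cycle couples two consecutive cycles.

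First I would isolate the renewal structure. In a memoryless scheme only a sub-sequence of the observed symbols is ever transmitted; let $T_1 < T_2 < \cdots$ be their indices, so that $T_{k+1} = T_k + \ell(X_{T_k})$, and set $L_k := \ell(X_{T_k})$, the length of the $k$-th transmitted codeword. The first claim to prove is that $\{L_k\}_{k\ge 1}$ is iid with the law of $L = \ell(X)$, $X\sim P$. This follows because, for fixed indices $i_1 < \cdots < i_k$, the event $\{T_1 = i_1,\dots,T_k = i_k\}$ is measurable with respect to $\sigma(X_{i_1},\dots,X_{i_{k-1}})$ (it is a deterministic function of $L_1,\dots,L_{k-1}$ alone), whereas $X_{i_k}$ is independent of $(X_{i_1},\dots,X_{i_{k-1}})$; conditioning on this event and summing over the partition it generates shows that $L_k$ is independent of $(L_1,\dots,L_{k-1})$ with the law of $\ell(X)$. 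Hence the successive inter-delivery epochs, of durations $L_1,L_2,\dots$, form a renewal-type sequence.

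Next I would compute the reward of a single epoch. During the epoch in which the $k$-th codeword is in transit the receiver still holds the $(k-1)$-th transmitted symbol, so $U(\cdot)=T_{k-1}$ is frozen and $A(\cdot)$ increases by one at each time step. Tracking the age through the model's timing convention, a symbol's age at the instant it is fully received equals the number of time units its encoding occupied the channel, i.e.\ $L_{k-1}$ for the $(k-1)$-th symbol; therefore over the $L_k$ instants of the $k$-th epoch the age sweeps through $L_{k-1},L_{k-1}+1,\dots,L_{k-1}+L_k-1$, giving an accumulated age
\[
R_k \;=\; \sum_{i=0}^{L_k-1}\bigl(L_{k-1}+i\bigr)\;=\;L_{k-1}L_k+\binom{L_k}{2}.
\]

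Finally I would pass to the long-run average. Writing $\sum_{t=1}^{T} A(t) = \sum_{k=1}^{N(T)-1} R_k + \rho_T$ and $T = \sum_{k=1}^{N(T)-1} L_k + \tau_T$, where $N(T)$ is the index of the epoch containing $T$, the remainders $\rho_T,\tau_T$ are bounded (all codeword lengths, hence all per-epoch ages and durations, are bounded because $\X$ is finite) and $N(T)\to\infty$ almost surely since $L_k\ge 1$, so $\rho_T/T, \tau_T/T\to 0$. Thus $\bar A(e)$ is the ratio of $\lim_n\frac1n\sum_{k\le n}R_k$ to $\lim_n\frac1n\sum_{k\le n}L_k$, whenever both limits exist a.s.; for $\frac1n\sum_{k\le n}L_k$ and $\frac1n\sum_{k\le n}\binom{L_k}{2}$ this is the strong law. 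The term $\frac1n\sum_{k\le n}L_{k-1}L_k$ is the one place I expect a (mild) obstacle: $(L_{k-1}L_k)_k$ is not iid, but it is stationary and $1$-dependent, so I would split it into its even-$k$ and odd-$k$ sub-sums, each an average of iid terms with mean $\E{L}^2$, and conclude $\frac1n\sum_{k\le n}L_{k-1}L_k\to\E{L}^2$ a.s. This is exactly the ``simple extension'' of renewal reward needed here, and it is what makes the second moment $\E{L^2}$ (rather than only $\E{L}$) enter the answer. Combining the three limits gives $\frac1n\sum_{k\le n}R_k\to\E{L}^2+\frac{1}{2}\bigl(\E{L^2}-\E{L}\bigr)$ a.s., whence
\[
\bar A(e)\;=\;\frac{\E{L}^2+\frac{1}{2}\bigl(\E{L^2}-\E{L}\bigr)}{\E{L}}\;=\;\E{L}+\frac{\E{L^2}}{2\,\E{L}}-\frac{1}{2}\qquad\text{a.s.},
\]
so that the $\limsup$ in the definition of $\bar A(e)$ is in fact a limit and is a.s.\ equal to this constant. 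As a sanity check, a fixed-length-$\ell$ code gives $\bar A(e)=(3\ell-1)/2$, the average of $\ell,\ell+1,\dots,2\ell-1$, confirming the per-epoch bookkeeping.
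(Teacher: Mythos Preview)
Your proposal is correct and follows essentially the same route as the paper: a renewal--reward computation on the inter-delivery epochs, with the even/odd sub-sum split to handle the $1$-dependence of the per-epoch rewards. The paper carries out the argument in the more general randomized setting (and partitions the time axis into epochs $\{S_{k-1}+1,\dots,S_k\}$ rather than your shifted $\{T_k,\dots,T_{k+1}-1\}$, which changes $R_k$ by the telescoping term $L_{k-1}-L_k$), but the structure and the key idea are the same.
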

\noindent The proof is deferred to Section \ref{p:avg_age}.

Denoting by $\bar{A}^*$ the minimum average age over all prefix-free
codes $e$, as a corollary of the characterization above, we can obtain
the following bounds for $\bar{A}^*$.
\begin{cor}\label{c:age_bounds}
For any pmf $P$ over $\X$, the optimal average age $\bar{A}^*$ is
bounded as
\begin{align}
\frac 32 H(P) -\frac 12 \leq \bar{A}^* \leq \frac 32\log|\X| + 1.
\nonumber
\end{align} 
\end{cor}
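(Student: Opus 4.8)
The plan is to read both inequalities directly off the closed-form expression in Theorem~\ref{t:average_age}, namely $\bar{A}(e) = \E{L} + \E{L^2}/(2\E{L}) - 1/2$. For the upper bound I would exhibit a single good code; for the lower bound I would lower bound the right-hand side uniformly over all prefix-free codes.

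\emph{Upper bound.} Take $e$ to be the fixed-length code assigning every symbol of $\X$ a distinct binary string of length $m := \lceil \log|\X|\rceil$ (this is prefix-free and feasible since $2^m \geq |\X|$). Then $L = m$ deterministically, so $\E{L} = m$ and $\E{L^2} = m^2$, and plugging into \eqref{e:cost_theta} gives
\begin{align}
\bar{A}(e) = m + \frac{m^2}{2m} - \frac 12 = \frac 32 m - \frac 12.
\nonumber
\end{align}
Since $m = \lceil \log|\X|\rceil \leq \log|\X| + 1$, we get $\bar{A}^* \leq \bar{A}(e) \leq \frac 32 \log|\X| + 1$, as claimed.

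\emph{Lower bound.} For any prefix-free code $e$, Jensen's inequality (equivalently, nonnegativity of the variance of $L$) gives $\E{L^2} \geq (\E L)^2$, hence $\E{L^2}/(2\E L) \geq \E{L}/2$. Substituting into \eqref{e:cost_theta},
\begin{align}
\bar{A}(e) \;\geq\; \E L + \frac{\E L}{2} - \frac 12 \;=\; \frac 32 \E{L} - \frac 12.
\nonumber
\end{align}
By the converse part of the source coding theorem (Kraft's inequality together with the log-sum / Gibbs inequality), every prefix-free code satisfies $\E{L} \geq H(P)$. Therefore $\bar{A}(e) \geq \frac 32 H(P) - \frac 12$ for every prefix-free code $e$, and taking the infimum over $e$ yields $\bar{A}^* \geq \frac 32 H(P) - \frac 12$.

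\emph{Main obstacle.} There is no real obstacle here: the result is an immediate consequence of Theorem~\ref{t:average_age}. The only points requiring a moment's care are (i) checking that the fixed-length code is admissible and that substitution into \eqref{e:cost_theta} is legitimate (it is, since $\E{L} = m \geq 1$ whenever $|\X| \geq 2$, so there is no division by zero), and (ii) recalling that the entropy bound $\E{L} \geq H(P)$ applies to \emph{all} prefix-free codes, not merely to Shannon or Huffman codes, which is exactly what is needed to pass to $\bar{A}^*$.
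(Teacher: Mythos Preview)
Your proposal is correct and follows exactly the paper's own argument: the lower bound via Jensen's inequality together with $\E{L}\geq H(P)$, and the upper bound via the fixed-length code of length $\lceil\log|\X|\rceil$. There is nothing to add.
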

The proof of lower bound simply uses Jensen's inequality $\E{L^2} \geq \E{L}^2$ and the fact
that $\E{L} \geq H(P)$ for a prefix free code; the upper bound is
obtained by using codewords of constant length $\lceil
\log|\X|\rceil$.

Note that the lengths $\ell(x)$ are required to be nonnegative
integers. However, for any set of real-valued lengths $\ell(x)\geq 0$,
we can obtain integer-valued lengths by using the rounded-off values
$\lceil \ell(x)\rceil$. Unlike the average length cost, the average
age cost function identified in \eqref{e:cost_theta} is not an
increasing function of the lengths. Nevertheless, by
\eqref{e:cost_theta}, the average age $\bar{A}(e)$ achieved when we
use the rounded-off values can be bounded as follows: Denoting
  $\bar{L}:=\lceil\ell(X)\rceil$, we have
\begin{align}
\E{\bar{L}} + \frac{\E{\bar{L}^2}}{2\E {\bar{L}}} - \frac 12 & \leq
\E{L+1} + \frac{\E{(L+1)^2}}{2\E {L}} -\frac{1}{2} \nonumber \\ \nonumber & \leq \E{L} +
\frac{\E{L^2}}{2\E {L}}+ \frac{2\E{L}}{2\E {L}}\\& \hspace{2cm}+ \frac{1}{2\E {L}}
+\frac 12 \nonumber \\ & \leq \E {L} + \frac{\E{L^2}}{2\E {L}} +2.
\label{e:ceil_loss}
\end{align}

Accordingly, in our treatment below we shall ignore the integer
constraints and allow nonnegative real-valued length assignments.

Returning now to the bound of Corollary~\ref{c:age_bounds}, the upper
and lower bounds differ only by a constant $1.5$ when $P$ is uniform. In view of the foregoing discussion, Shannon codes for
a uniform distribution attain the minimum average age up to a constant gap. The next result
gives an upper bound on average age for Shannon codes for an arbitrary
$P$ on $\X$.
\begin{lem}
Given a pmf $P$ on $\X$, a Shannon code $e$ for $P$ has average age $\overline{A}(e)$ at most $O(\log |\X|)$.
\end{lem}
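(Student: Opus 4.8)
The plan is to feed the Shannon-code lengths $\ell(x)=\lceil -\log P(x)\rceil$ into the closed-form expression of Theorem~\ref{t:average_age}, $\bar A(e)=\E{L}+\E{L^2}/(2\E{L})-\tfrac12$, and bound the two moment-dependent terms separately. The first moment is easy: $\ell(x)<-\log P(x)+1$ gives $\E{L}\le H(P)+1\le \log|\X|+1$. I will also record the trivial lower bound $\E{L}\ge 1$ (valid for $|\X|\ge 2$, the only case of interest: then every $P(x)<1$, so $\ell(x)\ge 1$ and hence $L\ge 1$ almost surely), which is needed to keep the denominator of the ratio term away from $0$.

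The heart of the argument is the estimate $\E{L^2}\le 2M\,\E{L}+O(1)$ with $M:=\lceil\log|\X|\rceil$. I would partition $\X$ into $\X_{\mathrm{lo}}=\{x:\ell(x)\le 2M\}$ and $\X_{\mathrm{hi}}=\{x:\ell(x)>2M\}$ and split $\E{L^2}=\sum_{x\in\X_{\mathrm{lo}}}P(x)\ell(x)^2+\sum_{x\in\X_{\mathrm{hi}}}P(x)\ell(x)^2$. On $\X_{\mathrm{lo}}$, $\ell(x)^2\le 2M\,\ell(x)$, so that part is at most $2M\sum_x P(x)\ell(x)=2M\,\E{L}$. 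On $\X_{\mathrm{hi}}$, $\ell(x)>2M$ forces $-\log P(x)>2M\ge 2\log|\X|$, that is, $P(x)<|\X|^{-2}$; using $\ell(x)<-\log P(x)+1$, each summand is at most $P(x)\big(\log(1/P(x))+1\big)^2$. Since the map $p\mapsto p(\log(1/p)+1)^2$ is increasing on $(0,1/4]$ --- a one-line check after the substitution $p=2^{-u}$ --- and $P(x)<|\X|^{-2}\le 1/4$, every such summand is at most $|\X|^{-2}(2\log|\X|+1)^2$; there are at most $|\X|$ of them, so $\sum_{x\in\X_{\mathrm{hi}}}P(x)\ell(x)^2\le |\X|^{-1}(2\log|\X|+1)^2$, which is bounded by an absolute constant for all $|\X|\ge 2$.

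Assembling the pieces, $\E{L^2}/(2\E{L})\le M+O(1)/\E{L}\le M+O(1)$ by $\E{L}\ge 1$, and therefore $\bar A(e)\le \E{L}+M+O(1)\le 2\log|\X|+O(1)=O(\log|\X|)$. I expect the only non-routine point to be exactly the one that forces the thresholding: the age cost is a \emph{ratio} featuring $\E{L^2}$, which, unlike average length, is not monotone in the codeword lengths, so the crude bound $\E{L^2}\le\E{(-\log P(X)+1)^2}=O((\log|\X|)^2)$ only delivers $O(\log|\X|)$ after division by $2\E{L}$ when $\E{L}=\Omega(\log|\X|)$ --- which can fail for skewed $P$. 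Cutting at $2\log|\X|$ and exploiting that symbols with longer Shannon codewords carry total probability at most $|\X|^{-1}$ is precisely what controls this heavy tail and closes that gap.
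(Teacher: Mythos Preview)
Your argument is correct. The thresholding at $2M=2\lceil\log|\X|\rceil$ works exactly as you say: on $\X_{\mathrm{lo}}$ you trade $\ell(x)^2$ for $2M\,\ell(x)$, and on $\X_{\mathrm{hi}}$ the Shannon length forces $P(x)<|\X|^{-2}$, so the tail has total mass at most $|\X|^{-1}$ and contributes only a bounded amount to $\E{L^2}$. The monotonicity of $p\mapsto p(\log(1/p)+1)^2$ on $(0,1/4]$ is indeed a one-line derivative check after $p=2^{-u}$, and the lower bound $\E{L}\ge 1$ holds under the paper's standing assumption that $P(x)>0$ for all $x$ (so $P(x)<1$ when $|\X|\ge 2$).

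Your route is genuinely different from the paper's. The paper never splits $\X$; instead it bounds the ratio $\E{L^2}/\E{L}$ directly via an information-theoretic identity: it introduces the tilted pmf $P'(x)\propto \ell_S(x)P(x)$, applies $H(P')\le\log|\X|$, expands the entropy, and then uses $-\log P(x)\ge \ell_S(x)-1$ together with $\ln t\le (t^2-1)/(2t)$ to extract $(1-\tfrac{1}{2\ln 2})\,\E{L^2}/\E{L}-1\le\log|\X|$. Your truncation argument is more elementary (pure calculus, no tilting), and as a bonus it delivers a slightly sharper leading constant: you get $\E{L^2}/(2\E{L})\le\lceil\log|\X|\rceil+O(1)$, hence $\bar A(e)\le 2\log|\X|+O(1)$, whereas the paper's constant $(1-\tfrac{1}{2\ln 2})^{-1}\approx 3.6$ yields roughly $2.8\log|\X|+O(1)$. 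The paper's approach, on the other hand, highlights the entropy-of-a-tilt mechanism that recurs throughout the rest of the work.
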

\begin{proof}
\newer{Let $\ell(X)$ denote the lengths of Shannon code corresponding to $P$ (see Definition \ref{d:shannon}).} We establish the claim using the standard bound
  $H(P^\prime) \leq \log |\X|$ for an appropriately chosen pmf
  $P^\prime$ on $\X$. Specifically, for the tilting of $P$ given by
$P^\prime(x)\propto \ell(x)P(x)$, we get

 \begin{align*}
\log |\X| & \geq \sum_{x \in \X} \frac{ P(x)\ell(x)}{\E{\ell(X)}}
\log \frac{\E{\ell(X)}}{P(x)\ell(x)} 
\\ &= \sum_{x \in \X}
\frac{ P(x)\ell(x)(-\log P(x))}{\E{\ell(X)}} \\& \hspace{1cm} - \sum_{x \in \X}
\frac{P(x)\ell(x)}{\E{\ell(X)}}\log\frac{\ell(x)}{\E{\ell(X)}}
\\ &\geq \sum_{x \in \X} \frac{ P(x)\ell(x)(-\log P(x))}{\E{\ell(X)}}
\\ & \hspace{1cm} - \sum_{x \in \X: \ell(x) \geq
  \E{\ell(X)}}\frac{P(x)\ell(x)}{\E{\ell(X)}}\log\frac{\ell(x)}{\E{\ell(X)}}.
 \end{align*}
Using $-\log P(x) \geq \ell(x)-1$ and $\ln x \leq \frac{x^2-1}{2x}$
for $x\geq 1$, we obtain
 \begin{align*}
\log |\X| & \geq \frac{\E{\ell^2(X)}}{\E{\ell(X)}} - 1 \\&\hspace{0.5cm}- \frac
1{2\ln 2}\cdot \sum_{x \in \X: \ell(x) \geq \E{\ell(X)}} P(x)
\left(\frac{\ell^2(x)}{\E{\ell(X)}^2} - 1\right) 
\\
& \geq \frac{\E{\ell^2(X)}}{\E{\ell(X)}} - 1 \\&\hspace{1cm} - \frac
1{2\ln 2}\cdot \sum_{x \in \X: \ell(x) \geq \E{\ell(X)}} P(x)\cdot
\frac{\ell^2(x)}{\E{\ell(X)}^2}
\\ & \geq
\frac{\E{\ell^2(X)}}{\E{\ell(X)}} - 1 - \frac 1{2\ln 2}\cdot
\sum_{x \in \X}\frac{ P(x)\ell^2(x)}{\E{\ell(X)}^2}
\\& 
\geq \frac{\E{\ell^2(X)}}{\E{\ell(X)}} - 1 - \frac 1{2\ln 2}\cdot
\sum_{x \in \X}\frac{ P(x)\ell^2(x)}{\E{\ell(X)}}
 \\ &\geq
\left(1- \frac{1}{2\ln
  2}\right)\cdot\frac{\E{\ell^2(X)}}{\E{\ell(X)}} -1, 
 \end{align*}
where the second-last inequality follows from the fact that $\E{\ell^2(X)} \geq \E{\ell(X)}$, which in turn follows from the fact that $\ell(X) \geq 1$. The proof is
completed by rearranging the terms.
\end{proof}

It is of interest to examine if, in general, a Shannon code for $P$
itself has average age close to $\bar{A}^*$, as was the case for the
uniform distribution. In fact, it is not the case. Below we exhibit a
pmf $P$ where the average age of a Shannon code for $P$ is
$\Omega(\log |\X|)$, namely the previous bound is tight, and yet a
Shannon code for another distribution (when evaluated for $P$) has an
average age of only $O(\sqrt{\log |\X|})$.
\begin{example}\label{ex:1}
Consider $\X = \{0,...,2^n\}$ and a pmf $P$ on $\X$ given by \eq{
  P(x)=\begin{cases} 1-\frac 1n, \quad &x =0 \\ \frac 1{n2^n}, \quad
  &x \in \{1, \dots ,2^n\}.
\end{cases}
}Using \eqref{e:cost_theta}, the average age $\bar{A}(e_P)$ for a
Shannon code for $P$ can be seen to satisfy $\bar{A}(e_P) \approx (n +
2\log n)/2$. On the other hand, if we instead use a Shannon code
for the pmf $Q$ given by \eq{ Q(x)=
\begin{cases}
 \frac {1}{2^{\sqrt{n}}}, \quad &x =0 \\ \frac
       {1-2^{-\sqrt{n}}}{2^{n}}, \quad &x \in \{1, \dots ,2^n\},
\end{cases}
}we get $\E{L} \approx \sqrt{n}$ and $E{L^2} \approx 2n$, whereby
$\bar{A}(e_{Q}) \approx 2 \sqrt{n}$, just $O(\sqrt{\log
  |\X|})$.\qed
\end{example}
Thus, one needs to look beyond the standard { Shannon
  codes for $P$} to find codes with minimum average
age. Interestingly, we show that \newer{Shannon codes for a
  tilted version of $P$} attain the optimal asymptotic average age (up
to the constant loss of at most $2.5$ bits incurred by rounding-off
lengths to integers). In particular, for the example above, our
proposed optimal codes will have an average age of only $O(\sqrt{\log
  |\X|})$ in comparison to $\Omega(\log |\X|)$ of Shannon codes for
$P$.

\mnote{At HT: I  modified the comment above. Note that the optimal titling is not Q but something else.}
A key technical tool in design of our codes is a variational formula
that will allow us to linearize the cost function in
\eqref{e:cost_theta}, thereby rendering Shannon codes for a tilted
distribution optimal. We present this in the next section.
\section{A variational formula for  $p$-norm} \label{s:variational}
The expression for average age identified in
Theorem~\ref{t:average_age} involves the second moment of the random
codeword length $L$. This is in contrast to the traditional variable
length source coding problem where the goal is to minimize the average
codeword length $\E L$. For this standard cost, Shannon codes which
assign a codeword of length $\lceil-\log P(x)\rceil$ to the symbol $x$
come within $1$-bit of the optimal cost (see, for instance,
\cite{CovTho06}).  A variant of this standard problem was studied in
\cite{Campbell}, where the goal was to minimize the log-moment
generating function $\log \E {\exp(\lambda L)}$. A different approach
for solving this problem is given in \cite{SundaresanHanewal} where
the {\it Gibbs variational principle} is used to linearize the
nonlinear cost function $\log \E {\exp(\lambda L)}$. The next result
provides the necessary variational formula to extend the
aforementioned approach to another nonlinear function, namely
${\|L\|_p := }(\E {L^p})^{\frac 1p}$ for $p> 1$.

We believe that our result is of independent interest, and present it in a general 
  form that applies to general distributions (and not just the discrete random variables considered  in this paper). To state the general result, we recall a basic notation from probability theory. For two probability measures $P$ and $Q$ on the same probability space such that $Q$ is absolutely continuous with respect to $P$, denoted $Q\ll P$, denote by $\frac{d Q}{d P}$ the Radon-Nikodym derivative of $Q$ with respect to $P$. Note that $\frac{d Q}{dP}$, too, is a random variable measurable with respect to the underlying sigma-algebra.
  A reader not familiar with these notions can see a standard textbook on probability theory for definitions. For the discrete case, $Q\ll P$ corresponds to the condition\footnote{ \newer{${\tt supp}(P)$ denotes the support of distribution $P$ over an alphabet $\X$, $i.e.$,
${\tt supp}(P):=\{x \in \X: P(x)>0\}$.
      .}} ${\tt supp}(Q)\subset {\tt supp}(P)$ and $\frac{d Q}{d P}$ equals the ratio of the pmfs of the distributions $Q$ and $P$.

  Note that expectations are always taken with respect to the reference measure.  In particular, the expectations without any subscript in  Theorem~\ref{t:variational_formula} below and its proof denote the expectation with respect to $P$, which is the reference measure in this case. The expectation in Remark \ref{r:P(X=0)=0} denotes the expectation  with respect to $R$. 

\begin{thm}\label{t:variational_formula}
For a real-valued random variable $X$ with distribution $P$ and $p\geq 1$ such
that $\|X\|_p < \infty$, we have
\begin{equation}
\| X\|_p = \max_{Q\ll P} \E{ \left(\frac{dQ}{dP}\right)^{\frac 1
    {p^\prime}} |X|}, \nonumber
\end{equation}
where $p^\prime = p/(p-1)$ is the H\"older conjugate of $p$.
\end{thm}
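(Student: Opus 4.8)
The plan is to view the claimed identity as the familiar duality between $L^p$ and $L^{p'}$, recast so that the unit ball of $L^{p'}$ is parametrized by probability measures. Writing $f = dQ/dP$ for the Radon--Nikodym derivative, note that as $Q$ ranges over all probability measures with $Q \ll P$, the function $f$ ranges exactly over the nonnegative functions with $\E{f} = 1$; and with the further substitution $g = f^{1/p'}$ this becomes $g \geq 0$ with $\E{g^{p'}} = 1$, i.e.\ $\|g\|_{p'} = 1$ in $L^{p'}(P)$. Hence the right-hand side equals $\sup\{\E{g\,|X|} : g \geq 0,\ \|g\|_{p'} = 1\}$, which is precisely the variational (dual-norm) description of $\|X\|_p = (\E{|X|^p})^{1/p}$.

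For the ``$\leq$'' direction I would invoke H\"older's inequality directly: for any $Q \ll P$,
\[
\E{\left(\frac{dQ}{dP}\right)^{1/p'}|X|} \;\leq\; \left(\E{\frac{dQ}{dP}}\right)^{1/p'}\left(\E{|X|^p}\right)^{1/p} \;=\; \|X\|_p,
\]
since $\E{dQ/dP} = 1$, $Q$ being a probability measure; in particular the right-hand side is finite under the hypothesis $\|X\|_p < \infty$. For the reverse inequality --- which simultaneously shows the supremum is attained, justifying the ``$\max$'' --- I would exhibit the optimizer suggested by the equality case of H\"older, namely $f^{1/p'} \propto |X|^{p-1}$, i.e.\ the measure $Q^\star$ with $dQ^\star/dP = |X|^p/\E{|X|^p}$. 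This is a bona fide probability measure, absolutely continuous with respect to $P$, whenever $\E{|X|^p} > 0$; and a one-line computation using $p/p' = p-1$ gives $\E{(dQ^\star/dP)^{1/p'}|X|} = \left(\E{|X|^p}\right)^{1-1/p'} = \|X\|_p$. The degenerate case $\E{|X|^p} = 0$, i.e.\ $X = 0$ $P$-almost surely, is trivial: both sides vanish and $Q = P$ attains the maximum.

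I do not expect a serious obstacle, since the content is exactly $L^p$--$L^{p'}$ duality; the only points needing a little care are that the substitutions $Q \leftrightarrow f \leftrightarrow g$ do not secretly shrink the feasible set (they do not: $g \geq 0$ is automatic, and since $|X| \geq 0$ the objective is monotone in $g$, so relaxing $\|g\|_{p'} = 1$ to $\|g\|_{p'} \leq 1$ and dropping sign constraints are both harmless), checking that $Q^\star$ is genuinely a probability measure with $Q^\star \ll P$, and the boundary value $p = 1$, where $1/p' = 0$ and the statement collapses to the tautology $\E{|X|} = \max_{Q \ll P}\E{|X|}$.
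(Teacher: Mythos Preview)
Your proof is correct and takes a different route from the paper's. You argue directly via H\"older's inequality and $L^p$--$L^{p'}$ duality: the substitution $g=(dQ/dP)^{1/p'}$ identifies the feasible set with the nonnegative unit sphere of $L^{p'}(P)$, H\"older gives the upper bound, and the tilted measure $dQ^\star/dP = |X|^p/\E{|X|^p}$ realizes equality. The paper instead introduces the probability measure $P_p$ with $dP_p/dP = |X|^p/\|X\|_p^p$ and invokes the nonnegativity of the R\'enyi divergence $D_\alpha(P_p,Q)$ at order $\alpha = 1/p'$, obtaining both the bound and the equality case ($Q=P_p$) in one stroke. The two arguments are of course cousins---the R\'enyi nonnegativity for $\alpha\in(0,1)$ is itself a consequence of H\"older/Jensen---but the packaging differs: yours is the standard functional-analytic derivation and needs no information-theoretic machinery, while the paper's version situates the result alongside the Gibbs variational principle it cites as motivation and makes the optimizer $Q^\star=P_p$ emerge as the unique zero of a divergence. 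Your treatment also handles the boundary cases ($p=1$ and $X=0$ a.s.) explicitly, which the paper glosses over (indeed, it adds a remark to patch the case $P(X=0)>0$).
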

\begin{proof}
For $Q\ll P$ and $0<\alpha\neq 1$, let $D_\alpha(P,Q)$ denote the
R\'enyi divergence of order $\alpha$ between distributions $Q$ and $P$
(see~\cite{Ren61}), defined by
\begin{equation*}
D_\alpha(P,Q) := \frac 1{\alpha-1}\log \E
{\left(\frac{dQ}{dP}\right)^\alpha}.
\end{equation*}
It is well-known that $D_\alpha(P,Q)\geq 0$ with equality if and only if
$P=Q$. Consider the probability measure $P_p\ll P$ defined by 
\begin{equation*}
\frac{d P_p}{dP} :=\frac 1{\|X\|_p^p}\cdot |X|^p.
\end{equation*}
Then, for $\alpha = 1/p^\prime$,

\begin{align*}
0&\leq D_\alpha(P_p,Q)
= \frac 1{\alpha-1} \log\E{\left(\frac{dQ}{dP}\right)^\alpha
  \left(\frac{dP_p}{dP}\right)^{1-\alpha}} \\ &=-p
\log\E{\left(\frac{dQ}{dP}\right)^\alpha |X|}+p \log \|X\|_p,
\end{align*}
where the previous equality holds since $p(1-\alpha)=1$.  Thus, for
every $Q\ll P$,
\begin{equation*}
 \E{\left(\frac{dQ}{dP}\right)^\alpha |X|}\leq \|X\|_p,
\end{equation*}
with equality if and only if $P_p=Q$.
\end{proof}

\begin{rem}\label{r:P(X=0)=0}
 The given definition of R\'enyi divergence restricts
    Theorem \ref{t:variational_formula} to the case $P(X=0)=0$. To
    remove this restriction, the following general definition of
    R\'enyi divergence with respect to a common measure can be
    used: For all $Q, P \ll R$, define
\[
    D_{\alpha}(P,Q):= \frac 1{\alpha-1}
    \log\E{\left(\frac{dQ}{dR}\right)^\alpha
      \left(\frac{dP}{dR}\right)^{1-\alpha}} .
\]
    The proof then follows by using the positivity of
    $D_{\alpha}(P_p,Q)$, then by proceeding in the same manner as
    the previous proof.
  \end{rem}
Returning to the problem at hand, \newer{we apply the variational formula above to the $L_2$ norm of a discrete random variable. We highlight this special case separately below.}

\begin{cor}\label{c:2variational_formula}
For a discrete random variable $X$ with a pmf $P$ such
that $ \|X\|_2 < \infty$, we have
\[
{
\|X\|_2=\max_{{\tt supp}(Q)\subset {\tt supp}(P)}\sum_{x\in \X}\sqrt{Q(x)P(x)}x},
\]
where ${\tt supp}(P)$ denotes the support-set of the distribution $P$. 
\end{cor}

\section{Prefix-free codes with minimum average age}
\label{s:code_design} 

We now present a recipe for designing prefix-free codes with minimum
average age.  By Theorem~\ref{t:average_age}, we seek prefix-free
codes that minimize the cost
\begin{equation}
\E L +\frac{\E {L^2}}{2\E L},
\label{e:cost_simple}
\end{equation}
where $L= \ell(X)$ for $X$ with pmf $P$.  Recall that a prefix-free code
with lengths $\{\ell(x) {\in \N}, x\in \X\}$ exists if and only if lengths
satisfy Kraft's inequality ($cf.$~\cite{CovTho06}), $i.e.$, if and only if
\begin{equation}
\sum_{x\in \X}2^{-\ell(x)}\leq 1.
\label{e:Kraft_cond}
\end{equation}
Following the discussion leading to \eqref{e:ceil_loss}, we relax the
{integral} constraints for $\ell(x)$ and search over all real-valued
$\ell(x)\geq 0$ satisfying \eqref{e:Kraft_cond}. Specifically, we solve the relaxed optimization problem
\begin{equation}\label{e:relaxed}
\min_{\ell \in \Lambda} \E{L}+\frac{\E{L^2}}{2\E{L}},
\end{equation}
where \[\Lambda =\big\{\ell \in \R^{|\X|}: \sum_{x \in \X} 2^{-\ell(x)} \leq1, ~\ell(x) \geq 0 ~\forall x \in \X\big\}.\] As noticed
in~\eqref{e:ceil_loss}, this can incur a loss of only a constant.  A
key challenge in minimizing \eqref{e:cost_simple} is that it is
nonlinear.  We linearize this cost as follows:
\begin{enumerate}
\item \newer{Note first the identity below, which is obtained by maximizing the expression on the right-side:}
\begin{align}\label{e:varZ}
&\E L +\frac{\E {L^2}}{2\E L} = \max_{z\geq 0} \left(1 - \frac
  {z^2}{2}\right) \E L + z \|L\|_2.
\end{align}

\item Then, Corollary~\ref{c:2variational_formula} yields
\[
  {\|L\|_2=\max_{Q\ll
      P}\sum_{x\in \X}\sqrt{Q(x)P(x)}\ell(x)},
  \]
  which further leads to 
\begin{align*}
  \lefteqn{\E L +\frac{\E  {L^2}}{2\E L}}
\\
  &= \max_{z\geq 0} \left(1 -\frac {z^2}{2}\right) \E L + z\max_{Q\ll
  P}\sum_{x\in \X}\sqrt{Q(x)P(x)}\ell(x)
\\
&=\max_{z\geq
    0}\max_{Q\ll P} \sum_{x\in \X}g_{z,Q,P}(x)\ell(x),
\end{align*}
where
\begin{align}
\label{e:g_def}
g_{z,Q,P}(x) &:= \left(1- \frac{z^2}2\right)P(x) +z\sqrt{Q(x)P(x)}.
\end{align}
\end{enumerate}
As remarked earlier, as the source distribution $P$ is discrete, the constraint $Q \ll P$ simplifies to ${\tt supp}(Q)\subset {\tt supp}(P)$.
Thus, our goal is to identify the minimizer $\ell^*$ that achieves
\begin{align}
\Delta^*(P) = \min_{\ell \in \Lambda}\max_{z\geq 0}\max_{Q\ll P}
\sum_{x\in \X}g_{z,Q,P}(x)\ell(x).
\label{e:minmax_cost}
\end{align}

The result below captures our main observation and facilitates the
computation of optimal lengths attaining the minmax cost $\Delta^*(P)$.  
\begin{thm}[Structure of optimal codes]\label{t:update_optimal}
The optimal minmax cost $\Delta^*(P)$ in \eqref{e:minmax_cost}
satisfies
\begin{align}
\Delta^*(P) &= \max_{z\geq 0}\max_{Q\ll P} \min_{\ell \in \Lambda}
\sum_{x\in \X}g_{z,Q,P}(x)\ell(x) \nonumber \\ &=\max_{\substack{z \geq 0,Q\ll P, \\ (z,Q) \in \mathcal{G}}} \sum_{x\in \X}g_{z,Q,P}(x)
\log\frac{\sum_{x^\prime\in \X}g_{z,Q,P}(x^\prime)}{g_{z,Q,P}(x)},
\label{e:maxmin_cost}
\end{align}
where \eq{ \mathcal{G}:=\{z \geq 0 ,Q \in \R^{|\X|}:g_{z,Q,P}(x)\geq0
  \quad \forall x \in \X\}.  }  Furthermore, if $(z^*, Q^*)$ is the
maximizer of the right-side of \eqref{e:maxmin_cost}, then the minmax
cost \eqref{e:minmax_cost} is achieved uniquely by {the Shannon
  lengths}\footnote{{Recall that Shannon lengths for the pmf $P$ on $\X$ are given by $\ell(x)=-\log P(x) $, $x \in \X$, and are not necessarily integers.}} for the pmf
$P^*$ on $\X$ given by
\begin{equation}\label{e:tilted_distribution}
P^*(x) = \frac {g_{z^*,Q^*,P}(x)}{\sum_{x^\prime\in
    \X}g_{z^*,Q^*,P}(x^\prime)}.
\end{equation}

\end{thm}
 Thus, our prescription for design of source codes is simple: Use
  a Shannon code for $P^*$ instead of $P$. To compute $P^*$, we need
  to solve the optimization problem in \eqref{e:maxmin_cost}. Note that is unclear a priori that the minimum average age for the problem in $\eqref{e:relaxed}$ would correspond to Shannon lengths for some pmf since our cost function is
  not monotonic in expected length, whereby the optimal solution may
  not satisfy Kraft's inequality with equality. Nonetheless, we show
  that the Shannon lengths $-\log P^*(x)$ are optimal for the relaxed problem given by \eqref{e:relaxed}.

\newer{We note that our formal result above only provides a structural result for the optimal solution. But we believe that this structural result leads to a recipe to design practical algorithms for finding the optimal solution; we describe this recipe below.}
  Specifically, note that the resulting optimization problem for finding $P^*$
is one of entropy maximization for which several heuristic recipes are
available. Furthermore, we note the following structural
simplification for the optimal solution which shows that if
$P(x)=P(y)$, then $P^*(x)=P^*(y)$ must hold as well; the proof is
relegated to the Appendix.  Thus, the dimension of the optimization
problem~\eqref{e:maxmin_cost} can be reduced from $|\X|+1$ to $M_P+1$,
where $M_P$ denotes the number of distinct elements in the probability
multiset $\{P(x): x \in \X\}$. Let $A_1 \cdots A_{M_P}$ denote the
partition of $\X$ such that
\[
P(x)=P(y) \quad \forall x,y \in A_i, \quad \forall i \in [M_P].
\]

\begin{lem}\label{c:dim_red}
Suppose that $Q^*$ is an optimal $Q$ for \eqref{e:maxmin_cost}. Then,
$Q^*$ must satisfy
\begin{align}\label{e:property}
Q^*(x)=Q^*(y) \quad \forall x,y \in A_i, \quad \forall i \in [M_P].
\end{align}
\end{lem}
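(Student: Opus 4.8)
The plan is to exploit a symmetry (exchangeability) argument combined with the strict concavity of the objective in \eqref{e:maxmin_cost}. First I would observe that the function $z\mapsto g_{z,Q,P}(x)$ and the whole objective depend on $Q$ only through the values $\sqrt{Q(x)P(x)}$, and that for $x,y$ in the same block $A_i$ one has $P(x)=P(y)$. The key structural fact is that the map $(z,Q)\mapsto \sum_{x}g_{z,Q,P}(x)\log\frac{\sum_{x'}g_{z,Q,P}(x')}{g_{z,Q,P}(x)}$ is invariant under any permutation $\sigma$ of $\X$ that preserves $P$ (i.e.\ $P\circ\sigma=P$), where $\sigma$ acts on $Q$ by $Q\mapsto Q\circ\sigma^{-1}$. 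Indeed such a $\sigma$ merely relabels the summands and leaves $\sum_{x'}g_{z,Q,P}(x')$ fixed, so the objective value is unchanged and feasibility (the constraint $(z,Q)\in\mathcal{G}$, i.e.\ all $g_{z,Q,P}(x)\ge 0$, together with $Q$ being a sub-probability) is preserved.

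Next I would use averaging over the symmetry group. Given an optimal pair $(z^*,Q^*)$, for each block $A_i$ let $\bar Q^*(x) := \frac{1}{|A_i|}\sum_{y\in A_i}Q^*(y)$ for $x\in A_i$; equivalently, $\bar Q^*$ is the average of $Q^*\circ\sigma^{-1}$ over all $P$-preserving permutations $\sigma$ that permute within blocks. This $\bar Q^*$ satisfies \eqref{e:property} by construction, and it is feasible: it is still a sub-probability measure (averaging preserves $\sum_x Q(x)\le 1$ and $Q(x)\ge 0$), and, since $g_{z^*,Q,P}(x)=(1-z^{*2}/2)P(x)+z^*\sqrt{Q(x)P(x)}$ is \emph{concave} in $Q(x)$ (the square root is concave) while $P(x)$ is constant on each block, Jensen's inequality applied within each block gives $g_{z^*,\bar Q^*,P}(x)\ge \frac{1}{|A_i|}\sum_{y\in A_i}g_{z^*,Q^*,P}(y)\circ(\text{relabelling})\ge 0$, so $(z^*,\bar Q^*)\in\mathcal{G}$. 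The main obstacle is then to show that replacing $Q^*$ by $\bar Q^*$ does not \emph{decrease} the objective — in fact it can only increase it, which, combined with optimality of $Q^*$, forces $Q^*$ itself (by uniqueness) to already satisfy \eqref{e:property}.

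To handle that obstacle I would argue via concavity of the whole objective functional in the vector $(g_{z^*,Q,P}(x))_{x\in\X}$. Fixing $z=z^*$, write the objective as $\Phi(g) := \sum_x g_x \log\frac{\sum_{x'}g_{x'}}{g_x}$, which is a sum of the (concave, $1$-homogeneous) function $g\mapsto \bigl(\sum g\bigr)H\!\bigl(g/\sum g\bigr)\ln 2$ — up to the base-$2$ factor this is the Shannon entropy of the normalized vector times its total mass, a well-known concave function of $g$. Since $g\mapsto g_{z^*,g\text{-data}}$ is concave coordinatewise and $\Phi$ is concave and (being an entropy) monotone under the specific averaging we perform, averaging $Q^*$ over the within-block symmetry group and pushing through $\Phi$ gives $\Phi\bigl(g_{z^*,\bar Q^*,P}\bigr)\ge \frac{1}{|G|}\sum_{\sigma\in G}\Phi\bigl(g_{z^*,Q^*\circ\sigma^{-1},P}\bigr) = \Phi\bigl(g_{z^*,Q^*,P}\bigr)$, the last equality by the permutation-invariance from the first paragraph. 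Hence $(z^*,\bar Q^*)$ attains the maximum as well. Finally, by the strict concavity of the entropy-type objective in the relevant directions (or, more cleanly, by invoking the uniqueness of the maximizing $P^*$ stated in Theorem~\ref{t:update_optimal}, which pins down $g_{z^*,Q^*,P}$ up to scaling and hence $\sqrt{Q^*(x)P(x)}$ on each block), we conclude $\bar Q^* = Q^*$, i.e.\ $Q^*(x)=Q^*(y)$ for all $x,y\in A_i$, as claimed. The one point needing care is the strictness: if the objective is only weakly concave one gets \emph{an} optimal $Q$ with the symmetry property rather than \emph{every} optimal $Q$; I would resolve this by first establishing, from Theorem~\ref{t:update_optimal}, that the optimal $P^*$ (equivalently the normalized $g$-vector) is unique, which then uniquely determines $\sqrt{Q^*(x)P(x)}$ and forces the symmetry on $Q^*$ directly.
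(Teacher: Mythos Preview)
Your approach is essentially the same as the paper's: both arguments symmetrize an optimal $Q^*$ by averaging over the group of $P$-preserving permutations, use concavity of the objective in $Q$ (at fixed $z$) together with permutation invariance to conclude that the symmetrized $\bar Q^*$ is at least as good, and hence that a block-constant optimizer exists.

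The one substantive difference is in how concavity in $Q$ is obtained. The paper does it in one line via the inner-minimum representation
\[
c_P(z,Q)=\min_{\ell\in\Lambda}\sum_{x}g_{z,Q,P}(x)\,\ell(x),
\]
which is a pointwise minimum of functions that are concave in $Q$ (since $\sqrt{Q(x)}$ is concave and $\ell(x)\ge 0$), hence concave. Your route instead writes the objective as $\Phi\circ g$ with $\Phi(g)=\sum_x g_x\log\frac{\sum_{x'}g_{x'}}{g_x}$ and argues via concavity of $\Phi$, concavity of $Q\mapsto g_{z,Q,P}$, and monotonicity of $\Phi$. This is correct (indeed $\partial\Phi/\partial g_x=\log(\sum_y g_y/g_x)\ge 0$), but note that the monotonicity of $\Phi$ is genuinely needed here since concave-of-concave is not concave in general; you gesture at this (``monotone under the specific averaging we perform'') but should state and verify it explicitly. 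The paper's min-representation avoids this extra step entirely.

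On uniqueness: you are right that the symmetrization argument alone only produces \emph{an} optimal $Q$ with the block-constant property, not that \emph{every} optimizer has it. The paper's proof in fact stops at existence (which suffices for the stated application of reducing the dimension of the optimization). Your proposed fix via the uniqueness of $P^*$ from Theorem~\ref{t:update_optimal} is a reasonable idea, but as written it is not complete: uniqueness of the normalized vector $P^*$ does not immediately pin down $(z^*,Q^*)$, since different $(z,Q)$ pairs could in principle yield proportional $g$-vectors. If you want the stronger statement you would need to argue further (e.g., that at a fixed optimal $z^*$ the map $Q\mapsto g_{z^*,Q,P}$ is injective on the relevant set, which follows from $z^*>0$ and strict monotonicity of $\sqrt{\cdot}$).
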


In proving Lemma~\ref{c:dim_red}, we use the fact that the cost
function in \eqref{e:maxmin_cost} is concave in $Q$ for each fixed $z$
and is concave in $z$ for each fixed $Q$ (see Lemma
  \ref{c:concavity}). However, it may not be jointly concave in
$(z,Q)$. Nevertheless, we apply standard numerical packages to
optimize it in the next section to quantify the performance of our
proposed codes and compare it with Shannon codes for the original
distribution $P$.  
\section{Numerical results for Zipf distribution}\label{s:age_numerical}

We program all our optimization problems
  in \textit{AMPL} \cite{AMPL} and solve it using
  \textit{SNOPT}~\cite{SNOPT} and \textit{CONOPT}~\cite{CONOPT}
  solvers.
Specifically, for the pmfs $P$ we consider in this section, we solve the optimization problem given by  \eqref{e:maxmin_cost}  to find the corresponding optimal $(z^*, Q^*)$.  In order to check if we have indeed found the optimal $(z^*, Q^*)$, we once again use Theorem \ref{t:update_optimal}.
  In particular, it follows from Theorem \ref{t:update_optimal} that the necessary and sufficient condition for a particular $(z, Q)$ to be the optimal solution is that   the value of the maximization problem \eqref{e:maxmin_cost} at $(z, Q)$  equals
  \[\E{-\log P^{\prime}(X)}  +\frac{\E {(\log P^{\prime}(X))^2}}{2\E{-\log P^{\prime}(X)}},
  \]
  where
  \[
  P^{\prime}(X)=\frac {g_{z,Q,P}(x)}{\sum_{x^\prime\in
      \X}g_{z,Q,P}(x^\prime)};
  \]
in all our numerical evaluations, the solution found by the solver satisfies this condition, which establishes its optimality.

We now illustrate our recipe for construction of prefix-free codes
that yield minimum average age for memoryless update schemes when $P$
is a Zipf distribution. 
 Specifically, we illustrate our qualitative results using
the ${\tt Zipf}(s,N)$ distribution with alphabet $\X = \{1, \cdots, N\}$
and given by \eq{ P(i) = \frac{i ^{-s}}{\sum_{j=1}^N j^{-s}}, \quad
  1\leq i \leq N.  }

Heuristically, the  average age formula~\eqref{e:cost_theta} suggests that the differences between the performances of a code under average codeword length cost and the average age cost will be the most for ``peaky distribution,'' namely for distributions with heavy elements. The parameter $s$ of the Zipf distribution allows us to vary from a uniform distribution to a ``peaky distribution,'' making this family apt for our numerical study. Indeed, our numerical results confirm that our proposed scheme outperforms a Shannon code for $P$ when the parameter $s$ is high; see Figure~\ref{f:zipf_comparison}. When we
round-off real lengths to integers, the gains are subsided but still
exist. Further, when the parameter $s$ is close to $0$, { Shannon codes} for $P$ are close to
optimal.\newer{ With increase in $s$, the gain of our proposed schemes over Shannon codes starts becoming more prominent}. \newer{As an aside, Figure \ref{f:zipf_comparison} also provides an illustration of the non-monotonic nature of the average age function with respect to code lengths.}
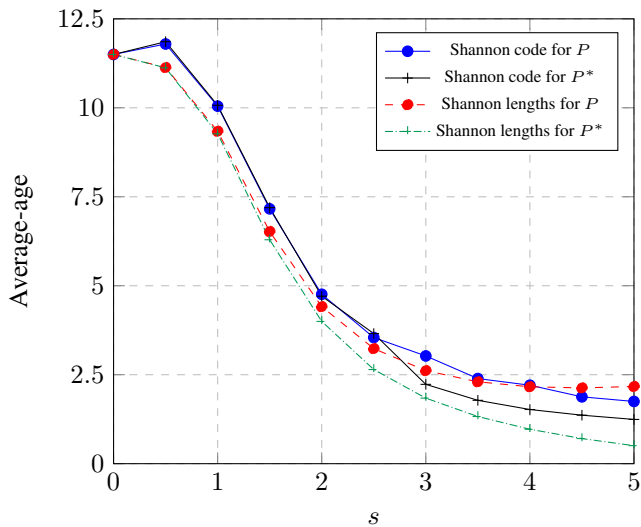
\begin{figure}[th]
\centering \begin{tikzpicture}[thick,scale=1, every node/.style={transform shape}]
\begin{axis}[ width=8.5cm,height=7.5cm,
    xlabel={$s$},
    ylabel={Average-age},
    xmin=0, xmax=5,
    ymin=0, ymax=12.5,
    xtick={0,1, 2,3, 4, 5},
    ytick={0,2.5,5,7.5,10,12.5},
    legend style={font=\fontsize{6.5}{8}\selectfont},
    legend pos=north east,
    xmajorgrids=true,
    ymajorgrids=true,
    grid style=dashed,
]

\addplot[color=blue , solid, mark=*, mark options={blue}]table [x=SV, y=PShannonAge, col sep=comma] {Figures/AOIZ_check.csv}; 
\addplot+[color=black ,solid, mark=+, mark options={black}] table [x=SV, y=PAvgAge, col sep=comma] {Figures/AOIZ_check.csv}; 
\addplot+[color=red ,dashed, mark=*, mark options={red}]table [x=SV, y=UBAvgAge, col sep=comma] {Figures/AOIZ_check.csv}; 
\addplot+[color=ForestGreen ,densely dashdotted, mark=+, mark options={ForestGreen}] table [x=SV, y=AvgAge, col sep=comma] {Figures/AOIZ_check.csv}; 
        \legend{Shannon code for $P$,Shannon code for $P^*$,
Shannon lengths for $P$,  Shannon lengths for $P^*$,}

\end{axis}
\end{tikzpicture}
\caption{ Comparison of proposed codes and Shannon codes for ${\tt
    Zipf}(s,256)$ with varying $s$.  The average age is computed using
  real-valued lengths as well as lengths rounded-off to integer
  values.} 

\label{f:zipf_comparison}
\end{figure}

The distribution $P^*$ we use to construct our codes seems to be a
flattened version of the original Zipf distribution; we illustrate the
two distributions for ${\tt Zipf}(1,8)$ in Figure~\ref{f:P_star}.
\begin{figure}[ht]
\centering \begin{tikzpicture}[thick,scale=1, every node/.style={transform shape}]
    \begin{axis}[
    width=8.5cm,height=6cm,
            ybar,            
            legend pos=north east,
            xtick={1,2,3,4,5,6,7,8},
            ymin=0,ymax=0.4,  
            ymajorgrids=true,
                grid style=dashed,        
        ]
        
             \addplot[
    black,
    fill=blue!60,
    postaction={
        pattern=north east lines
    }
] table [x=I, y=p, col sep=comma] {Figures/AOIZ_TD_S1_N8.csv};
             
             \addplot[
    black,
    fill=blue!20,
    postaction={
        pattern=north west lines
    }
]  table [x=I, y=tilteddistribution, col sep=comma] {Figures/AOIZ_TD_S1_N8.csv}; 
        \legend{$P$,$P^*$}
    \end{axis}
\end{tikzpicture}
\caption{The pmf for $P^*$ and $P$ for ${\tt Zipf}(1,8)$.}
\label{f:P_star} 
\end{figure}
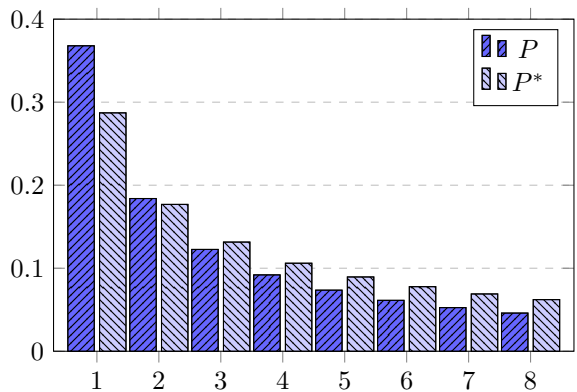
As we see in Figure~\ref{f:P_star}, $P^*$ and $P$ are very close in
this case.  Indeed, we illustrate in Figure~\ref{f:ent} that the
average length $\E{L}$ when Shannon lengths $-\log P(x)$ are used and
when $-\log P^*(x)$ are used are very close\footnote{The difference of these two average lengths (averaged
  w.r.t. $P$) is given by the Kullback-Leibler divergence $D(P\|P^*)$;
  see~\cite{CovTho06}. }.  In
Figure~\ref{f:ent}, we note the dependence of average age on the
entropy of the underlying distribution $P$. As expected, average age
increases as $H(P)$ increases.

\begin{figure}[h]
\centering \begin{tikzpicture}[thick,scale=1, every node/.style={transform shape}]
\begin{axis}[
    xlabel={$H(P)$},
    ylabel={},
    xmin=0, xmax=8,
    ymin=0, ymax=12,
    xtick={0,  2,  4, 6,8},
    ytick={0,2,4,6,8,10,12},
    legend style={font=\footnotesize},
    legend pos=north west,
    xmajorgrids=true,
    ymajorgrids=true,
    grid style=dashed,
]
 
\addplot[color=blue, mark=*] table [x=Entropy, y=CodeLength, col sep=comma] {Figures/AOIZ_check.csv};
\addplot[color=black ,solid, mark=*] table [x=Entropy, y=AvgAge, col sep=comma] {Figures/AOIZ_check.csv}; 
 \addplot +[color=gray ,densely dashdotted, mark= ]table [x=Entropy, y=Entropy, col sep=comma] {Figures/AOIZ_check.csv}; 
\legend{Average length (real lengths), Average age (real lengths), $Y=X$} 
\end{axis}
\end{tikzpicture}
\caption{ Average age and average length for our update codes as a
  function of $H(P)$ for ${\tt Zipf}(s,256)$ with $s$ varying from $0$
  to $5$ at step sizes of $0.5$.  }
\label{f:ent}
\end{figure}
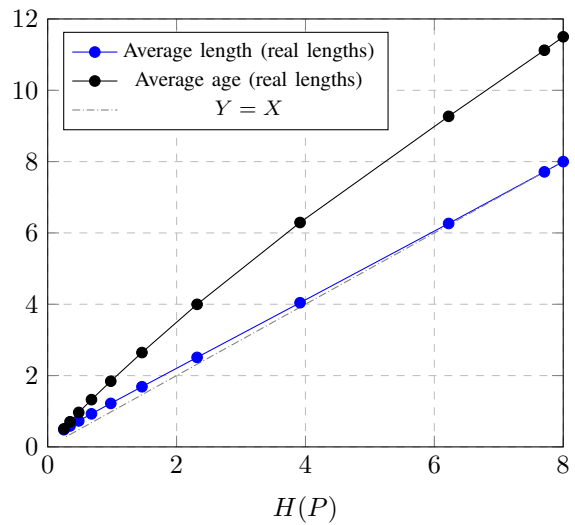

Thus, while Example~\ref{ex:1} illustrated high gains of the proposed
code over {Shannon codes} for $P$, for the specific case of Zipf
distributions the gains may not be large.  Characterizing this gain
for any given distribution is a direction for future research.

\section{Extensions}
\label{s:extensions}
\subsection{Randomization for Timely Updates}
\label{s:random_timely_updates}
We have restricted our treatment to deterministic {memoryless} update
schemes.  A natural extension to randomized memoryless schemes would
entail allowing the encoder to make a randomized decision to skip
transmission of a symbol even when the channel is free (we can
allocate a special symbol $\emptyset$ to signify no transmission to
the receiver).  Specifically, assume that we transmit the symbol
$\emptyset$ using a codeword of length $\ell(\emptyset)$ when we
choose not to transmit the observed symbol $x\in \X$. Denoting by
$\theta(x)$ the probability with which the encoder will transmit the
symbol $x$, the average age $\bar{A}(e,\theta)$ for the randomized scheme is given by
\begin{align}
\bar{A}(e,\theta)= \frac{\E {L(\theta)}}{\E{\theta(X)}}+
\frac{\E{L(\theta)^2}}{2\E {L(\theta)}} - \frac 12,
\label{e:randomized_avg_age}
\end{align}
where the random variable $L(\theta)$ is defined as follows:
\begin{align}\label{e:L(theta)} L(\theta) := \begin{cases}
      \ell(x),\quad w.p\quad P(x)\theta(x)\\ \ell(\emptyset),\quad
      w.p\quad 1-\E{\theta(X)}.
\end{cases}
\end{align}
Note that the expression in \eqref{e:randomized_avg_age} is a slight generalization of  Theorem~\ref{t:average_age} and is derived in Section~\ref{p:avg_age}.

\begin{example}
Consider $\X = \{1,...,64\}$ and the following pmf; \eq{P(x)=
 \begin{cases} 1/4, \quad & x \in
   \{1,\dots,3\},\\ 1/244, \quad & x \in \{4, \dots ,64\}. \end{cases}
} Since $H(P) = 3.483$, Corollary~\ref{c:age_bounds} yields that the
average age of the deterministic memoryless update scheme is bounded
below by $4.724$. Next, consider a randomized update scheme with
$\theta(x)=1$ for $x\in \{1,2,3\}$ and $0$ otherwise.
For this choice, the effective pmf $P_\theta$ is uniformly distributed
over the symbols $\{1,2,3\}\cup\{\phi\}$. Thus, the optimal length
assignment for this case assigns $\ell(x)=2$ to all the symbols and
the average age equals $3.17$, which is less than the lower bound of
$4.724$ for the deterministic scheme.
\end{example}
The idea of skipping available transmission opportunities, i.e., not
transmitting even when the channel is free, to
  minimize average age appears in the recent work~\cite{sun2017update}
  as well, albeit in a slightly different setting.  Heuristically,
the randomization scheme above operates as we expect -- it ignores the
rare symbols which will require longer codeword lengths. In practice,
however, these rare symbols might be the ones we are interested
in. But keep in mind that our prescribed solution only promises to
minimize the average age and does not pay heed to any other
consideration. Furthermore, for a given randomization vector
  $\theta$, we can establish a result similar to
  Theorem~\ref{t:update_optimal}. This will lead to the design of almost
  optimal source codes for a given randomization vector $\theta$.
However, the joint optimality over the class of randomized schemes and
source coding schemes is still unclear.

 In a more comprehensive
treatment, one can study the design of update codes with other
constraints imposed. We foresee the use of
Corollary~\ref{c:2variational_formula} in these more general settings as
well.  
In another direction, we can consider the extension of our results to
the case when the transmission channel is an erasure channel with
probability of erasure $\ep$. If we assume the availability of perfect
feedback, a natural model for the link or higher layer in a network,
and restrict to simple repetition schemes where the transmitter keeps
on transmitting the coded symbol until it is received, our formula for
average age extends with (roughly) an additional multiplicative factor
of $1/(1-\ep)$.  Formally the average age over an erasure channel with
$\ep$ probability of erasure; a source code $e$, along with a
randomization vector $\theta$ and a repetition channel-coding scheme
yields the following average age
\begin{align*}
\bar{A}_\ep(e,\theta) = \frac 1 {1-\ep}\cdot \bar{A}(e,\theta) +
\frac{\ep}{2(1-\ep)}.
\end{align*} However, the optimality of repetition scheme is unclear, and 
the general problem constitutes a new formulation in joint-source
channel coding which is of interest for future research.

\subsection{Source Coding for Minimum Queuing Delay}
\label{S:Min_Del}
Next, we \newer{point out} a use case for Corollary~\ref{c:2variational_formula}
in a minimum queuing delay problem introduced
in~\cite{Humblet78thesis}.  The setting is closely related to our
minimum average age update formulation with two differences: First,
the arrival process of source symbols is a Poisson process of rate
$\lambda$; and second, the encoder is not allowed to skip source
symbols.  Instead, each symbol is encoded and scheduled for
transmission in a first-come-first-serve (FCFS) queue.  Our goal is to
design a source code that minimizes the average queuing delay
encountered by the source sequence.  Formally, the symbols
$\{X_n\}_{n=1}^\infty$ are generated iid from a finite alphabet $\X$,
using a common pmf $P$. Every incoming symbol $x$ is encoded as $e(x)$
using a prefix-free code specified by the encoder mapping $e:\X
\to \{0,1\}^*$, and the bit string $e(x)$ is placed in a queue. The
queue schedules bits for transmission using a FCFS policy. Each bit in
the queue is transmitted over a noiseless communication channel.
Denote by $A_n$ the time of successful arrival of the $n$th symbol. Also, denote by $D_n$ the time instant of successful reception of the $n$th
symbol $X_n$.  {That is, }$D_n$ is the instant at which the last bit
of $e(X_n)$ is received\footnote{\newer{Note both $A_n$ and $D_n$ may not be integer valued, unlike the age setup.}}. The delay for the $n$th symbol is given by
$D_n - A_n$; see Figure~\ref{f:min_avg_age} for an illustration.

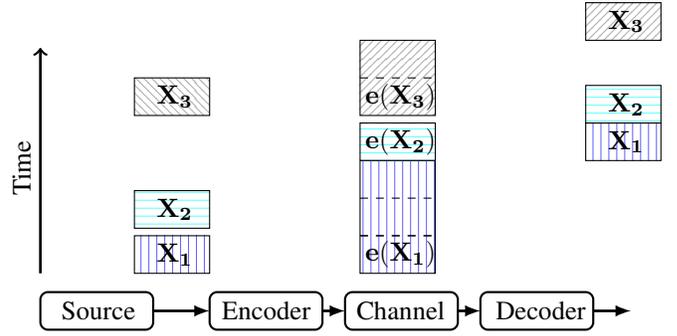
\begin{figure}[h]
\centering \begin{tikzpicture}
[scale=1, transform shape,
    pre/.style={=stealth',semithick},
    post/.style={->,shorten >=1pt,>=stealth',semithick},
dimarrow/.style={->, >=latex, line width=1pt}
    ]
\draw[<-,line width=1pt] (0.25,3) to (0.25,0);
\node[align=center, rotate=90] at (0.0,1.4) {Time}; 
\draw[thick, rounded corners=1.mm] (0.25, -0.25) rectangle (1.75, -.75);  
\node[align=center] at (1.02,-0.5) {Source};  
\draw[dimarrow] (1.75,-0.5) to (2.5,-0.5);
\draw[thick, rounded corners=1.mm] (2.5, -0.25) rectangle (4, -.75);
\node[align=center] at (3.25,-0.5) {Encoder};
\draw[dimarrow] (4,-0.5) to (4.3,-0.5);
\draw[thick, rounded corners=1.mm] (4.3,-0.25) rectangle (5.8, -0.75);
\node[align=center] at (5.1,-0.5) {Channel };
\draw[dimarrow] (5.8,-0.5) to (6.1,-0.5);
\draw[thick, rounded corners=1.mm] (6.1, -0.250) rectangle (7.6, -0.75);
\node[align=center] at (6.9,-0.5) {Decoder};
\draw[dimarrow] (7.6,-0.5) to (8.1,-0.5);
 
\draw[pattern=vertical lines, pattern color=blue!60] (1.5,0) rectangle (2.5,0.5); 
\node[align=center] at (2.1,0.25) {$\bold{X_1}$ };

\draw[pattern=horizontal lines, pattern color=cyan!60] (1.5,0.6) rectangle (2.5,1.1); 
\node[align=center] at (2.1,0.85) {$\bold{X_2}$ };


\draw[pattern=north west lines, pattern color=gray!60] (1.5,2.1) rectangle (2.5,2.6); 
\node[align=center] at (2.1,2.35) {$\bold{X_3}$ };


\draw[pattern=vertical lines, pattern color=blue!60] (4.5,0) rectangle (5.5,1.5); 
\node[align=center] at (5.1,0.25) {$\bold{e(X_1)}$ };
\draw [dashed] (4.5,0.5) to (5.5,0.5);
\draw [dashed] (4.5,1) to (5.5,1);

\draw[pattern=horizontal lines, pattern color=cyan!60] (4.5,1.5) rectangle (5.5,2); 
\node[align=center] at (5.1,1.75) {$\bold{e(X_2)}$ };

\draw[pattern=north east lines, pattern color=gray!60] (4.5,2.1) rectangle (5.5,3.1); 
\node[align=center] at (5.1,2.35) {$\bold{e(X_3)}$ };


\draw [dashed] (4.5,0.5) to (5.5,0.5);
\draw [dashed] (4.5,2) to (5.5,2);
\draw [dashed] (4.5,2.6) to (5.5,2.6);


\draw[pattern= vertical lines, pattern color=blue!60] (7.5,1.5) rectangle (8.5,2); 
\node[align=center] at (8.1,1.75) {$\bold{X_1}$ };

\draw[pattern=horizontal lines, pattern color=cyan!60] (7.5,2) rectangle (8.5,2.5); 
\node[align=center] at (8.1,2.25) {$\bold{X_2}$ };

\draw[pattern=north east lines, pattern color=gray!60] (7.5,3.1) rectangle (8.5,3.6); 
\node[align=center] at (8.1,3.35) {$\bold{X_3}$ };
\end{tikzpicture}
\caption{ Figure describes a typical sample-path for transmission of
  encoded symbols over a FCFS queuing system. Symbol $X_1$ arrives at
  some time instant $1$, it is encoded and transmitted over the
  channel. \newer{Recall that unlike the slotted setup of Figure~\ref{f:setup},
    the setup here is that of continuous time with Poisson arrivals.} 
  It is decoded at time instant $4$.  Symbol $X_2$ arrives
  in between time instants $2$ and $3$, and is placed in the queue, as
  the channel is busy transmitting $X_1$.  As soon as the channel
  becomes free at time instant $4$, an encoded version of $X_2$ is
  transmitted over it.
  Symbol $X_3$ arrives when the channel is free
  and is transmitted immediately.}
\label{f:min_avg_age}
\end{figure}

Thus, if $\ell(x)$ is the length of the encoded symbol $e(x)$ in bits,
then the number of channel uses to transmit this symbol is $\ell(x)$,
whereby the service time of the $n^{th}$ arriving symbol is given by
$S_n = \ell(X_n)$. Since $\{X_n\}_{n=1}^\infty$ is iid and the encoder
mapping $e$ is fixed, the sequence $(S_n)_{n \in \N}$, too, is iid
with common mean $\E{L}$. Therefore, the resulting queue is an M/G/1
queuing system with Poisson arrivals of rate $\lambda$ and iid service
times $(S_n)_{n \in \N}$.  Note that this queue will be stable only if
$\lambda\E{S_n} = \lambda\E{L}< 1$.

We are interested in designing prefix-free  codes $e$ that
minimize the average waiting time defined as follows:
\begin{defn}
The {\it average {waiting time}} $D(e)$ of a source code $e$ is given
by
\begin{align*}
D(e) &:= \limsup_{N\to \infty}\frac 1 N \sum_{n=1}^N \E{D_n-A_n},
\end{align*}
where the expectation is over source symbol realizations
$\{X_n\}_{n=1}^\infty$ and arrival instants $\{A_n\}_{n \in \N}$.
\end{defn}
We seek prefix-free codes $e$ with the least possible average
waiting time $D(e)$.  In fact, a closed-form expression for $D(e)$ was
obtained in~\cite{Humblet78thesis}.  For clarity of exposition, we
denote the load for the queuing system above for a fixed $\lambda$ by
$\rho(L) {:=} \lambda\E{L}$.  Since $\rho(L) < 1$ for the queue to be
stable, the average codeword length $\E L$ must be strictly less than
a threshold $L_{\tt th} {:=} \frac{\E{L}}{\rho(L)} =
\frac{1}{\lambda}$ for the queue to be stable.
\begin{thm}[\cite{Humblet78thesis}]\label{t:average_delay}
Consider a random variable $X$ with pmf $P$ and a source code $e$
which assigns a bit sequence of length $\ell(x)$ to $x\in \X$.  Let
$L$ denote the random variable $\ell(X)$.  Then, the average waiting
time $D(e)$ for $e$ is given by
\begin{align}
D(e) &= \begin{cases} \frac{ \E{L^2}}{2(L_{\tt th} - \E{L})} +
  \E{L} , &\E{L} < L_{\tt th},\\ \infty, &\E{L} \geq L_{\tt
    th}.
\end{cases}
%
\label{e:delay_formula}
\end{align}
\end{thm}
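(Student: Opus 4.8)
The plan is to recognize Theorem~\ref{t:average_delay} as the Pollaczek–Khinchine mean-value formula for the M/G/1 queue, specialized to the case where the service time of the $n$th symbol equals its codeword length, $S_n=\ell(X_n)$, so that $\E{S_n}=\E{L}$, $\E{S_n^2}=\E{L^2}$, and the load is $\rho=\lambda\E{L}$. Since $\rho<1$ is exactly the condition $\E{L}<L_{\tt th}=1/\lambda$, the dichotomy in \eqref{e:delay_formula} is just the classical stability dichotomy: when $\rho\geq 1$ the embedded chain is not positive recurrent, the backlog grows without bound, and the time-averaged sojourn time diverges, so it suffices to treat $\rho<1$ and establish $D(e)=\frac{\lambda\E{L^2}}{2(1-\rho)}+\E{L}$, which upon substituting $L_{\tt th}-\E{L}=(1-\rho)/\lambda$ is precisely the stated expression.

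For $\rho<1$ I would run the standard mean-value argument. Let $W_q$ be the stationary expected time a symbol spends in the queue before its first bit is sent, so that $D(e)=W_q+\E{L}$ after adding the full transmission time $\ell(X_n)$. Tag an arriving symbol: by the PASTA property it finds the system in its stationary state, and its wait decomposes as the residual transmission time of the symbol currently in service (if any) plus the total transmission time of the symbols already queued ahead of it. The expected residual-service contribution is $\rho\cdot\frac{\E{L^2}}{2\E{L}}=\frac{\lambda\E{L^2}}{2}$, using that the server is busy with probability $\rho$ and that, by the inspection-paradox/renewal-reward computation, the mean residual length of the symbol in service given that one is in service is $\E{L^2}/(2\E{L})$. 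The queued symbols have i.i.d. lengths distributed as $L$, and their number $N_q$ is a stopping time with respect to that sequence (it depends only on past arrivals and on service times already consumed), so a Wald-type identity gives expected contribution $\E{N_q}\,\E{L}$. Hence $W_q=\frac{\lambda\E{L^2}}{2}+\E{N_q}\,\E{L}$.

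To close the recursion I would invoke Little's law applied to the waiting room alone, $\E{N_q}=\lambda W_q$, which yields $W_q=\frac{\lambda\E{L^2}}{2}+\rho W_q$, hence $W_q=\frac{\lambda\E{L^2}}{2(1-\rho)}$, and therefore $D(e)=W_q+\E{L}=\frac{\lambda\E{L^2}}{2(1-\rho)}+\E{L}=\frac{\E{L^2}}{2(L_{\tt th}-\E{L})}+\E{L}$, as claimed. An equivalent and perhaps cleaner route is to study the workload process $V(t)$, which decreases at unit rate while positive and jumps by $\ell(X_n)$ at each arrival; a level-crossing/rate-conservation argument computes $\E{V}$ directly, and FCFS together with PASTA identify $W_q$ with $\E{V}$, packaging all the probabilistic subtleties in one place.

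The main obstacle is not the algebra but the rigorous justification of the ergodic statements underlying it: that the $\limsup$ defining $D(e)$ is in fact a limit equal to the stationary mean (positive recurrence of the embedded chain when $\rho<1$, and its failure when $\rho\geq 1$), that PASTA is applicable here, and that $N_q$ is indeed independent of the fresh service requirements summed over it. These are all classical facts about M/G/1 queues, carried out in the original reference~\cite{Humblet78thesis}; in the proofs section I would therefore either cite that analysis directly or reproduce the workload/level-crossing derivation, which handles the three subtleties most economically.
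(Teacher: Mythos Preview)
The paper does not actually prove Theorem~\ref{t:average_delay}; it is stated with attribution to~\cite{Humblet78thesis} and used as a black box, with no derivation appearing in the proofs section. Your proposal supplies exactly the standard Pollaczek--Khinchine mean-value argument for the M/G/1 sojourn time, correctly specialized to service times $S_n=\ell(X_n)$ and correctly translating the stability condition $\rho<1$ into $\E{L}<L_{\tt th}$; this is the natural proof and is what one would find in~\cite{Humblet78thesis} or any queueing text, so there is nothing to compare against and nothing to correct.
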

Thus, the problem of designing source codes with minimum average
waiting time reduces to that of designing a prefix-free code that
minimizes the cost in \eqref{e:delay_formula}. This problem was first considered in \cite{Humblet78thesis}.  In fact, it was noted
in~\cite[Chapter 1, Section 3]{Humblet78thesis} that codes which
minimize the first moment are robust for \eqref{e:delay_formula}.
We will justify this empirical observation in Corollary
  \ref{c:KL-bound}.  However, optimal codes can differ from Shannon
codes for $P$.  Indeed, an algorithm for finding the optimal length
assignments $\ell(x)$, $x\in \X$, for a prefix-free code that
minimizes $\bar{D}(e)$ was presented in \cite{larmore1989minimum} and
the optimal code can be seen to outperform Shannon codes for $P$.
While this algorithm has complexity that is polynomial in the alphabet
size, it is computationally expensive for large alphabet sizes -- the
case of interest for our problem.

Interestingly, the cost function in \eqref{e:delay_formula} resembles
closely the expression we obtained for asymptotic average age and our
recipe used to design minimum average age codes can be applied to
design minimum average delay codes as well. The underlying
optimization problem can be solved numerically rather quickly, much
faster than the optimization in \cite{larmore1989minimum}.  However,
as before, our procedure can only handle the real-relaxation of the
underlying optimization problem, and unlike the previous case, naive
rounding-off to integer lengths yields a sub-optimal solution when
$(1-\rho(L))$ is small. Nonetheless, the minimum average waiting time
computed using our recipe serves as an easily computable lower bound
for the optimal $D(e)$. In fact, we observe in our numerical
simulations that the resulting lower bound is rather close to the
optimal cost obtained using~\cite{larmore1989minimum}.

Now, we describe the modification of our recipe to design codes with
$\E{L} < L_{\tt th}$ that minimize the cost
\begin{align}
\norm{L}_1 +\frac{\norm{L}_2^2 }{2(L_{\tt th}-\norm{L}_1)},
\label{e:qcost_simple}
\end{align}
where $L= \ell(X)$ for $X$ with pmf $P$. As before, we first obtain a
variational form of \eqref{e:qcost_simple} which entails a linear
function of lengths. Specifically, we have the following steps.
\begin{enumerate}
\item First, we obtain a polynomial form from the rational function:
  \eq{
&\frac{\norm{L}_2^2}{2(L_{\tt th} - \norm{L}_1)} = \max_{z \geq 0} z
    \norm{L}_2 - \frac{z^2}{2} (L_{\tt th} - \norm{L}_1).
}

\item 
Then, Corollary~\ref{c:2variational_formula} yields that the cost
in~\eqref{e:qcost_simple} equals
\begin{align*}
&\max_{z\geq 0}\max_{Q\ll P} \sum_{x\in \X}g_{z,Q,P}(x)\ell(x) -
  \frac{z^2}{2} L_{\tt th}
\end{align*}
where the $g_{z,Q,P}(x)$ is defined as
\begin{align*}
g_{z,Q,P}(x) &:= \left(1+ \frac{z^2}2\right)P(x)+z\sqrt{Q(x)P(x)}.
\end{align*}
\end{enumerate}
Thus, our goal reduces to identifying the minimizer $\ell^* \in
\Lambda$ that achieves
\begin{align}
 \Delta^*(P) = \min_{\substack{\ell\in \Lambda,\\ \E{L}<L_{\tt th}}
 }\max_{z\geq 0}\max_{Q\ll P} \sum_{x\in \X}g_{z,Q,P}(x)\ell(x) -
 \frac{z^2}{2} L_{\tt th}.\label{q:minmax_cost}
\end{align}
The result below is the counterpart of Theorem~\ref{t:update_optimal}
for minimum delay source codes and is proved in Section~\ref{p:3}.
\begin{thm}\label{t:qmaintheorem} 
Under the condition
\begin{equation}\label{eq:a_qmaintheorem}
H(X) + \log(1 + 1/\sqrt{2})< L_{\tt th},
\end{equation}
the optimal minmax cost $\Delta^*(P)$ in \eqref{q:minmax_cost}
satisfies
\begin{align}
\Delta^*(P) &= \max_{z\geq 0}\max_{Q\ll P} \min_{\substack{\ell\in
    \Lambda,\\ \E{L}<L_{\tt th}} } \sum_{x\in \X}g_{z,Q,P}(x)\ell(x)-
\frac{z^2}{2} L_{\tt th} \nonumber \\ &=\max_{z\geq 0}\max_{Q\ll P}
\sum_{x\in \X}g_{z,Q,P}(x)\log \frac{\sum_{x^\prime\in
    \X}g_{z,Q,P}(x^\prime)}{g_{z,Q,P}(x)} \nonumber \\ & \hspace{5cm} - \frac{z^2}{2} L_{\tt th}.
\label{q:maxmin_cost}
\end{align}
Furthermore, if $(z^*, Q^*)$ is the maximizer of the right-side of
\eqref{q:maxmin_cost}, then the minmax cost \eqref{q:minmax_cost} is
achieved uniquely by Shannon lengths for pmf
$P^*$ on $\X$ given by
\[
P^*(x) = \frac {g_{z^,Q^*,P}(x)}{\sum_{x^\prime\in
    \X}g_{z^*,Q^*,P}(x^\prime)}.
\]
\end{thm}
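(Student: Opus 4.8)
The plan is to follow the proof of Theorem~\ref{t:update_optimal} essentially verbatim, isolating the two places where the delay problem genuinely differs. The preliminary steps are already done: the two manipulations preceding \eqref{q:minmax_cost}, which are an instance of Theorem~\ref{t:variational_formula} with $p=2$, show that $\Delta^*(P)$ equals the minmax on the right of \eqref{q:minmax_cost}. What remains is to interchange $\min_\ell$ with $\max_{z\ge0}\max_{Q\ll P}$ and then to carry out the resulting inner minimization of a \emph{linear} functional of $\ell$ over the Kraft polytope $\Lambda$. The ``$\ge$'' half of the first line of \eqref{q:maxmin_cost} is immediate from weak duality, so the content lies in the reverse inequality and in exhibiting the optimal $\ell$.

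The first difference is that, unlike in \eqref{e:maxmin_cost}, here $g_{z,Q,P}(x)=(1+z^2/2)P(x)+z\sqrt{Q(x)P(x)}$ is automatically nonnegative --- indeed strictly positive once we assume, without loss of generality, that $P(x)>0$ --- so no auxiliary constraint set analogous to $\mathcal G$ is needed. Consequently, for fixed $(z,Q)$ the minimization of $\sum_x g_{z,Q,P}(x)\ell(x)$ over $\Lambda$ reduces, by a Gibbs/Lagrange computation, to the unique optimizer $\ell_{z,Q}(x)=\log\bigl(G_{z,Q}/g_{z,Q,P}(x)\bigr)$ with $G_{z,Q}:=\sum_{x'}g_{z,Q,P}(x')$ (the constraint $\ell(x)\ge 0$ is inactive since $g_{z,Q,P}(x)\le G_{z,Q}$), attaining the value $\sum_x g_{z,Q,P}(x)\log\bigl(G_{z,Q}/g_{z,Q,P}(x)\bigr)$. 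These are the real Shannon lengths for $P^*_{z,Q}(x)=g_{z,Q,P}(x)/G_{z,Q}$; subtracting $\tfrac{z^2}{2}L_{\tt th}$ yields the second line of \eqref{q:maxmin_cost}, and the uniqueness of this inner minimizer (combined with minimax$=$maximin below) gives the uniqueness of the optimal code at the maximizing pair $(z^*,Q^*)$.

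The second difference is the extra requirement $\E{L}<L_{\tt th}$ in the inner minimization of \eqref{q:minmax_cost}, and this is where hypothesis \eqref{eq:a_qmaintheorem} is used. From $g_{z,Q,P}(x)\ge(1+z^2/2)P(x)$ and $G_{z,Q}\le(1+z^2/2)+z\le(1+z^2/2)(1+1/\sqrt2)$ --- the first bound by Cauchy--Schwarz, $\sum_x\sqrt{Q(x)P(x)}\le 1$, the second because $\max_{z\ge0}z/(1+z^2/2)=1/\sqrt2$ --- we get $G_{z,Q}/g_{z,Q,P}(x)\le(1+1/\sqrt2)/P(x)$ and hence
\[
\sum_x P(x)\,\ell_{z,Q}(x)=\sum_x P(x)\log\frac{G_{z,Q}}{g_{z,Q,P}(x)}\;\le\;H(X)+\log(1+1/\sqrt2)\;<\;L_{\tt th},
\]
uniformly over $(z,Q)$. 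Thus $\ell_{z,Q}$ is always feasible for \eqref{q:minmax_cost}, so the constrained and unconstrained inner minima coincide; and since, as $z\to\infty$, $g_{z,Q,P}(x)\sim\tfrac{z^2}{2}P(x)$ and $G_{z,Q}\sim\tfrac{z^2}{2}$ give $\sum_x g_{z,Q,P}(x)\log(G_{z,Q}/g_{z,Q,P}(x))-\tfrac{z^2}{2}L_{\tt th}\sim\tfrac{z^2}{2}(H(X)-L_{\tt th})\to-\infty$, the maximizer $(z^*,Q^*)$ exists by continuity and compactness of the $Q$-simplex.

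With these two points settled, the remaining interchange of $\min_\ell$ and $\max_{z,Q}$ is carried out exactly as in Theorem~\ref{t:update_optimal}, and I expect it to be the main obstacle. One eliminates $z$ first: its maximization is a one-dimensional concave problem on the feasible set, because the coefficient of $z^2$ there is $\tfrac12(\E{L}-L_{\tt th})<0$, and it leaves $\E{L}+\frac{\psi_Q(\ell)^2}{2(L_{\tt th}-\E{L})}$ with $\psi_Q(\ell)=\sum_x\sqrt{Q(x)P(x)}\,\ell(x)$; this is convex in $\ell$ (a pointwise supremum of affine maps) and, for each fixed $\ell$, concave in $Q$, since $\psi_Q(\ell)^2=\sum_{x,y}\sqrt{Q(x)Q(y)}\sqrt{P(x)P(y)}\,\ell(x)\ell(y)$ is a nonnegative combination of the concave geometric-mean functions $Q\mapsto\sqrt{Q(x)Q(y)}$. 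Sion's minimax theorem over the compact probability simplex then closes the argument, and unwinding the two maximizations reproduces \eqref{q:maxmin_cost}. The delicate point is precisely that the objective is not \emph{jointly} concave in $(z,Q)$ --- which is what forces the ``eliminate $z$ first'' step --- and that the ranges of $z$ and $\ell$ are a priori unbounded and must be reduced to effectively compact sets before Sion's theorem applies, something the coercivity estimate above makes possible.
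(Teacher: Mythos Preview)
Your approach differs from the paper's and contains a nice observation the paper does not make: the concavity in $Q$ of $F(\ell,Q)=\E{L}+\psi_Q(\ell)^2/\bigl(2(L_{\tt th}-\E{L})\bigr)$, via concavity of the geometric means $Q\mapsto\sqrt{Q(x)Q(y)}$, lets you apply Sion directly to swap $\min_\ell$ and $\max_Q$ on the original domain. The paper instead first swaps $(\ell,z)$ and only afterwards introduces $Q$ via the variational formula, so the order of the two Sion applications is reversed relative to yours.

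But your ``unwinding'' step is a genuine gap, not a formality. After Sion gives $\Delta^*(P)=\max_Q\min_\ell F(\ell,Q)=\max_Q\min_\ell\max_z h$, you still need $\min_\ell\max_z h=\max_z\min_\ell h$ for each fixed $Q$; this is a \emph{second} minimax exchange. Your coercivity estimate only shows the maxmin side is attained --- it does not by itself give equality, since neither the $\ell$-domain $\{\ell\in\Lambda:\E{L}<L_{\tt th}\}$ nor $[0,\infty)$ is compact, so Sion does not apply directly. What actually closes this is precisely the paper's mechanism (its Lemma~\ref{l:minmax_conditions}): restrict $\ell$ to $\{\E{L}\le L_{\tt th}'\}$ with $L_{\tt th}'=H(X)+\log(1+1/\sqrt2)$; on that set the maximizing $z$ is bounded by some $K$ (since $\psi_Q(\ell)\le\norm{L}_2\le\E{L}/\sqrt{\min_xP(x)}$); apply Sion on the compact interval $[0,K]$; and then use your own bound $\E{\ell_{z,Q}(X)}\le L_{\tt th}'$ to drop the restriction on both sides. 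So your outline can be completed, but the missing $(\ell,z)$ swap requires exactly the compactification argument the paper builds its proof around --- it is not ``carried out exactly as in Theorem~\ref{t:update_optimal}'' in the sense of being already done, and the coercivity of $\phi(z)=\min_\ell h(\ell,z,Q)$ alone does not suffice.
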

We remark that  $\eqref{e:delay_formula}$ implies that $H(X) < L_{\tt th}$ is essential for the existence of a prefix free source coding scheme with finite average delay. Thus, the condition $H(X)+ \log(1  + 1/\sqrt{2}) < L_{\tt th}$ is a mild one.

Thus, as before, the optimal codeword lengths for the relaxed problem (allowing real-valued lengths) correspond, once again, to Shannon lengths for a titled distribution
$P^*$.  As remarked earlier, the performance of the optimal source
code is known to be not too far from the Shannon code for $P$. This
observation can be justified by the following simple corollary of
Theorem~\ref{t:qmaintheorem}.
\begin{cor} \label{c:KL-bound}
The KL-Divergence between $P$, $P^*$ is bounded as \eq{D(P||P^*) \leq
  \log\left(1+\frac{1}{\sqrt{2}}\right) .}
  \label{q:corr}
\end{cor}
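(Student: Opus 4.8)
The plan is to combine the explicit description of the optimal tilted pmf furnished by Theorem~\ref{t:qmaintheorem} with two elementary estimates, and then carry out a one-variable maximization. Recall that the maximizer $(z^*,Q^*)$ of the right-hand side of \eqref{q:maxmin_cost} produces
\[
P^*(x) = \frac{g_{z^*,Q^*,P}(x)}{G}, \qquad G := \sum_{x'\in\X} g_{z^*,Q^*,P}(x'),
\]
with $g_{z,Q,P}(x) = (1+z^2/2)P(x) + z\sqrt{Q(x)P(x)}$ and $Q^*$ a pmf on $\X$. Since $g_{z^*,Q^*,P}(x) \ge (1+(z^*)^2/2)P(x) > 0$ whenever $P(x)>0$, we have $P\ll P^*$, so the divergence is finite and
\[
D(P\|P^*) = \sum_{x\in\X} P(x)\log\frac{G\,P(x)}{g_{z^*,Q^*,P}(x)}.
\]

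I would first bound each summand from above by discarding the nonnegative cross term in the denominator: from $g_{z^*,Q^*,P}(x)\ge (1+(z^*)^2/2)P(x)$ we get $P(x)/g_{z^*,Q^*,P}(x)\le (1+(z^*)^2/2)^{-1}$ for every $x$ with $P(x)>0$, hence, since $\log$ is increasing and the weights $P(x)$ sum to one,
\[
D(P\|P^*) \le \log\frac{G}{1+(z^*)^2/2}.
\]
Next I would bound $G$ from above: using $\sum_{x'}P(x')=1$ and then Cauchy--Schwarz together with $\sum_{x'}Q^*(x')=1$,
\[
G = \Bigl(1+\tfrac{(z^*)^2}{2}\Bigr) + z^*\sum_{x'\in\X}\sqrt{Q^*(x')P(x')} \le 1+\tfrac{(z^*)^2}{2}+z^*.
\]
Combining the two displays gives $D(P\|P^*) \le \log\!\bigl(1 + z^*/(1+(z^*)^2/2)\bigr)$.

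It then remains only to bound away the (unknown) value of $z^*\ge0$, i.e.\ to compute $\max_{z\ge 0} z/(1+z^2/2)$; a one-line differentiation shows the maximum is attained at $z=\sqrt2$ and equals $1/\sqrt2$, whence $D(P\|P^*)\le\log(1+1/\sqrt2)$, as claimed. There is no genuine obstacle here; the only point calling for care is the asymmetric treatment of the nonnegative term $z^*\sqrt{Q^*(x)P(x)}$ --- dropped inside the denominator of the logarithm to obtain the cleanest per-symbol bound, but retained in $G$ and estimated via Cauchy--Schwarz --- and it is precisely this choice that makes the resulting constant coincide with the quantity $\log(1+1/\sqrt2)$ that appears in the hypothesis~\eqref{eq:a_qmaintheorem} of Theorem~\ref{t:qmaintheorem}.
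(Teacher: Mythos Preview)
Your argument is correct and matches the paper's approach essentially line for line: the paper, in the course of proving Theorem~\ref{t:qmaintheorem}, establishes the pointwise bound $\ell^*_{z,Q}(x)\le -\log P(x)+\log(1+1/\sqrt{2})$ via exactly your two moves (drop $z\sqrt{Q(x)P(x)}$ in the denominator, apply Cauchy--Schwarz in the numerator, then use $z/(1+z^2/2)\le 1/\sqrt2$), and the corollary follows by taking expectation, which is precisely your computation repackaged as a direct bound on $D(P\|P^*)$.
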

\begin{proof}
The proof follows from \eqref{e:q_corr}, which is in turn derived in
the proof of theorem \ref{t:qmaintheorem} in section \ref{p:3}.
\end{proof}
Thus, the average length for Shannon codes and our codes do not differ
by more than $\log(1+1/{\sqrt{2}})$ ($cf.$~\cite{CovTho06}).
Indeed, we note in
  Figures~\ref{f:MinDelay_comparisona},~\ref{f:MinDelay_comparisonb}
  via numerical simulations that the optimal cost
  in~\eqref{q:maxmin_cost} is very close to the performance of optimal
  codes designed using~\cite{larmore1989minimum}. \newer{This suggests that 
    possibly there is 
an appropriate rounding-off procedure for real-valued lengths that can
yield integer lengths with close to optimal performance}; devising such a
rounding-off procedure is an interesting research direction for the
future. We close this section by noting that analogous versions of Lemma~\ref{c:dim_red}
and Lemma~\ref{c:concavity} in the Appendix can be obtained for optimization problem \eqref{q:maxmin_cost}.

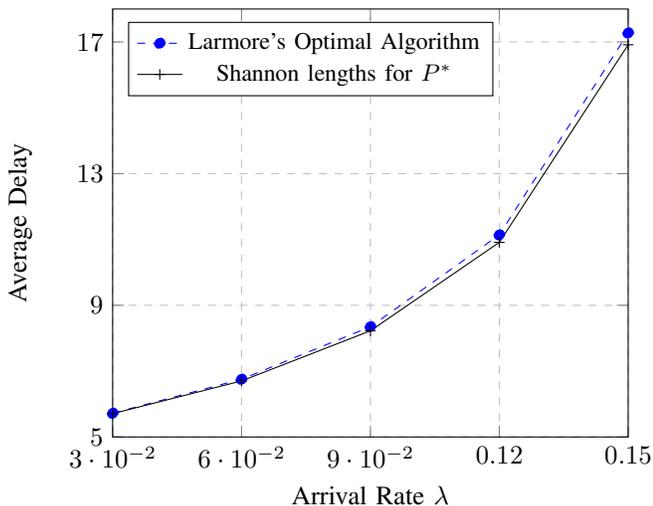
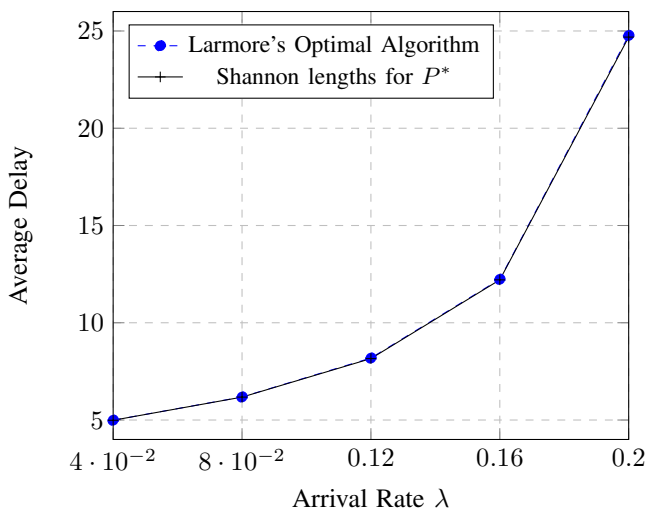
\begin{figure}[ht]
\centering
\begin{subfigure}[b]{\columnwidth}
\centering \begin{tikzpicture}[thick,scale=1, every node/.style={transform shape}]
\begin{axis}[
    xlabel={Arrival Rate $\lambda$},
    ylabel={Average Delay},
    xmin=0.03, xmax=.15,
    ymin=5, ymax=18,
    xtick={ .03,  .06, .09,.12, .15},
    ytick={5,9,13,17},
    legend style={font=\fontsize{9}{10}\selectfont},
    legend pos=north west,
    xmajorgrids=true,
    ymajorgrids=true,
    grid style=dashed,
]

\addplot+[color=blue, dashed, mark=*, mark options={blue}]table [x=ArrivalRate, y=OptimalDelay(LarmoresAlgo), col sep=comma] {Figures/Comprsn_Shan_MinDel_Codes.csv}; 
\addplot+[color=black, solid, mark=+] table [x=ArrivalRate, y=maxminLowerBound, col sep=comma] {Figures/Comprsn_Shan_MinDel_Codes.csv}; 
        \legend{
Larmore's Optimal Algorithm,  Shannon lengths for $P^*$,}
\end{axis}
\end{tikzpicture}
\caption{Comparison of proposed codes with
  Larmore's Algorithms \cite{larmore1989minimum} for the distribution
  $P(1)=0.5$, and $P(i)=\frac{0.5}{ 255} \quad \forall i \in
  \{2,\cdots 256\}$.}
\label{f:MinDelay_comparisona}
\end{subfigure}
\hspace{0.1cm}

\begin{subfigure}[b]{\columnwidth}
\centering \begin{tikzpicture}[thick,scale=1, every node/.style={transform shape}]
\begin{axis}[
    xlabel={Arrival Rate $\lambda$},
    ylabel={Average Delay},
    xmin=0.04, xmax=.20,
    ymin=4, ymax=26,
    xtick={ .04,  .08, .12,.16, .20},
    ytick={5,10,15,20,25},
    legend style={font=\fontsize{9}{10}\selectfont},
    legend pos=north west,
    xmajorgrids=true,
    ymajorgrids=true,
    grid style=dashed,
]

\addplot+[color=blue ,dashed,  mark=*, mark options={blue}]table [x=ArrivalRate, y=OptimalDelay(LarmoresAlgo), col sep=comma] {Figures/Comprsn_Shan_MinDel_Codes_1.csv}; 
\addplot+[color=black, solid, mark=+] table [x=ArrivalRate, y=maxminLowerBound, col sep=comma] {Figures/Comprsn_Shan_MinDel_Codes_1.csv}; 
        \legend{ 
Larmore's Optimal Algorithm,  Shannon lengths for $P^*$,}
                                                                                                                                                                                                                                                                                                                                                                                                                                                                                                                                                                                                                                                                                                                                                                                                                                                                                                                                                                                                                                                                                                                                                                                                                                                                                                                                                                                                                                                                                                                                                                                                                                                                                                                                                                                                                                                                                                                                                                                                                                                                                                                                                                                                                                                                                                                                                                                                                                            \end{axis}
\end{tikzpicture}
\caption{Comparison of proposed codes with 
  Larmore's Algorithms \cite{larmore1989minimum} for the distribution
  $P(1)=0.6$, and $P(i)=\frac{0.4}{ 255} \quad \forall i \in
  \{2,\cdots 256\}$.}
\label{f:MinDelay_comparisonb}
\end{subfigure}
\caption{Comparison of proposed codes with  Larmore's Algorithms}
\end{figure}
\section{Proofs}
\subsection{Proof of Theorem~\ref{t:average_age}}\label{p:avg_age}
We establish the expression for average age given in
\eqref{e:randomized_avg_age} for the more general class of randomized
schemes; Theorem~\ref{t:average_age} will follow upon setting
$\theta(x)=1$, for all $x \in \mathcal{X}$. Recall that the symbol $\emptyset$ is available only in the extended model in Section~\ref{s:extensions}, and not in the original model discussed in rest of the paper.
Note that the formula for
average age given in Theorem~\ref{t:average_age} is similar in form to
the expressions for average age derived in other settings;
see~\cite{KaulYatesGruteser12} for an example.

We will first set up some notation.
Let $S_0 := 0$ and \eq{ &S_k := \inf\{t > S_{k-1} :
  U(t)>U(t-1)\},~k \in \N.  }
\newer{Namely, $S_k$ is the time at which the decoder updates
its estimate for the symbol for the $k$th time.
Recall that $U(t)$ is incremented only
on successful reception at the receiver and is strictly increasing in $t$.
For brevity, we introduce the
notation $Y_k := S_k - S_{k-1}$
for the time between the $(k-1)$th and the $k$th information update at the decoder.
Further, denote by $Z_k:=S_{k}-U(S_k)$ the age at time $S_k$, which is simply the time taken for
  the successful reception of the symbol\footnote{This must be a symbol in $\X$ and not $\emptyset$ by the definition of $S_k$.} $x\in \X$ transmitted at time $U(S_k)$.}
Also, denote by $R_k$ \newer{the sum of instantaneous age  between $S_{k-1}$ and $S_k$} (the $k$th reward), namely
\eq{
R_k:= \sum_{t=S_{k-1}+1}^{S_{k}}(t-U(t)).
}
\mnote{"For Brevity," This  sentence is confusing too me. I preferred the preivious iteration of this sentence.}
\tnote{Is it okay now?}

Heuristically, our proof can be understood as follows. We note that the asymptotic average age is
roughly
\[
\frac{\sum_{k=1}^\infty R_k}{\lim_{k\to \infty} S_k}.
\]
It is easy to see that $\{Y_k\}_{k=1}^{\infty}$ is an iid sequence. Thus, if
$\{R_k\}_{k=1}^\infty$, too, was an iid sequence, we would obtain the
asymptotic average age to be $\E{R_1}/{\E{Y_1}}$ by the standard
Renewal Reward Theorem~\cite{ross1996stochastic}.  Unfortunately, this
is not the case.  But it turns out that the dependence in sequence
$\{R_k\}$ is only between consecutive terms.  Therefore, we can obtain
the same conclusion as above by dividing the sum $\sum_{k=1}^\infty
R_k$ into the sum of odd terms and even terms, each of which is in
turn a sum of iid random variables.

We will now proceed to prove that dependence in $R_k$ is between consecutive terms.  
Since $U(t)$ remains $U(S_{k-1})$ for all $t<S_k$, we get for $k\geq
1$ that
\begin{align}\label{e:Rk}
\nonumber
R_k &= \frac{(S_{k}-S_{k-1} -1)(S_{k}-S_{k-1})}{2} \\ \nonumber&\hspace{1cm} +(S_{k}-S_{k-1} 
  -1)\cdot(S_{k-1}-U(S_{k-1})) \\& \hspace{4.5cm} +S_{k}-U(S_{k}) \nonumber
\\
& =\frac 12 Y_k^2+Y_{k}\left(Z_{k-1} -  \frac 12 \right)+Z_k - Z_{k-1},
\end{align}
with $Z_0$ set to $0$.

Note that since the source sequence $\{X_n\}$ is iid and the randomization $\theta$ is stationary, the sequences $Y_k$ and $Z_k$
are iid, too. 
Therefore, the $(R_{2n})_{n \in \N}$ and $(R_{2n+1})_{n \in \N}$ are both\footnote{The initial term $R_1$ has a different distribution since $Z_0=0$.} 
iid sequences with $\E{R_{2n}} = \E{R_{2n+1}} = \E{R_2}$ for all $n$.   

Using this observation, we can obtain the following expression for the average age:
\begin{equation}
\bar{A}(e,\theta)=\frac{\E{R_2}}{\E{Y_1}}.
\label{eq:avgage_RY}
\end{equation}
Before we prove \eqref{eq:avgage_RY}, which is the main ingredient of
our proof, we evaluate the expression on the right-side. 

For $\E{Y_1}$, note that $Y_1$ gets incremented by $\ell(\emptyset)$ each
time $\emptyset$ is sent, and gets incremented finally by $\ell(x)$
once a symbol $x\in\X$ is sent. Thus, $Y_1$ takes the value
$\ell(x)+r\ell(\emptyset)$ with probability
$(1-\E{\theta(X)})^r\theta(x)P(x)$.
Denoting $\N_0=\N\cup \{0\}$, we get
\eq{
\E{Y_1}&= \sum_{x \in \X}\sum_{r \in \N_0} (\ell(x)+r\ell(\phi))P(x)\theta(x) (1-\E{\theta(X)})^r\\
&=\sum_{x \in \X} \sum_{r \in \N_0} \ell(x)P(x)\theta(x)(1-\E{\theta(X)})^r  \\&\hspace{1.3cm}+  
\sum_{x \in \X}\sum_{r \in \N_0}  r \ell(\phi)P(x)\theta(x) (1-\E{\theta(X)})^r\\
&=\frac{\sum_{x \in \X}\ell(x)P(x)\theta(x)}{\E{\theta(X)}}+\frac{\ell(\phi)(1-\E{\theta(X)})}{\E{\theta(X)}}\\
&=\frac{\E{L(\theta)}}{\E{\theta(X)}}.
}

For $\E{R_2}$, it follows from \eqref{e:Rk} that
\[
\E{R_2} = \frac 12 \E{Y_2^2} + \E{Y_2Z_1} -\frac12 \E{Y_2},
\]
since $\E{Z_2} = \E{Z_1}$. Also, note that $Z_1$ only depends on the symbol $x\in \X$ received at time $S_1$ which in turn can depend only on the symbols $X_n$ for $n \leq S_1-1$. 
On the other hand, $Y_2=S_2-S_1$ depends on symbols $X_n$ for $n\geq S_1$ and the outputs of the independent coin tosses corresponding to randomization $\theta$. 
Therefore, $Z_1$
is independent of $Y_2$, whereby
\[
\E{R_2} = \frac 12 \E{Y_2^2} + \E{Y_2}\left(\E{Z_1} -\frac12\right).
\]
Next, note that $Z_1$ takes the value $\ell(x)$, $x\in \X$, when the
symbol received at $S_1$ is $x$. This latter event happens with
probability 
\[
\sum_{r=0}^\infty (1- \E{\theta(X)})^r\theta(x)P(x) = \frac{\theta(x)P(x)}{\E{\theta(X)}},
\]
and so, by the definition of $L(\theta)$ in  \eqref{e:L(theta)},
\begin{align*}
\E{Z_1}&= \frac{\sum_x \ell(x)\theta(x)P(x)}{\E{\theta(X)}}
\\
&=\frac{\E{L(\theta)}}{\E{\theta(X)}}  - \frac{\ell(\emptyset)(1-\E{\theta(X)})}{\E{\theta(X)}}.
\end{align*}
Then by denoting $p_\emptyset=1-\E{\theta(X)}$, the second moment $\E{Y_1^2}$ can be computed by observing the following recursion:
\begin{align*}
&\E{Y_1^2} 
\\
&=\sum_{x \in \mathcal{X}}\sum_{r \in
  \N_0}(\ell(x)+r\ell(\emptyset))^2P(x)\theta(x) p_{\emptyset}^r 
\\ 
\nonumber 
&=\sum_{x \in \mathcal{X}}\ell(x)^2P(x)\theta(x)\\ \nonumber & \hspace{1cm} +p_{\emptyset}\sum_{x \in \mathcal{X}} \sum_{r \in \N} (\ell(x)+r\ell(\emptyset))^2P(x)\theta(x) p_{\emptyset}^{r-1}
\\ 
\nonumber 
&=\sum_{x \in \mathcal{X}}\ell(x)^2P(x)\theta(x)\\ \nonumber & \hspace{1cm}+p_{\emptyset}\sum_{x \in \mathcal{X}} \sum_{r \in \N} \big(\ell(x)+(r-1)\ell(\emptyset)\big)^2P(x)\theta(x) p_{\emptyset}^{r-1}
\\
&\hspace{1cm}+2\ell(\emptyset)p_{\emptyset}\sum_{x \in \mathcal{X}} \sum_{r \in \N}
 \big(\ell(x)+(r-1)\ell(\emptyset)\big)P(x)\theta(x) p_{\emptyset}^{r-1}
\\ 
&\hspace{1cm}+p_{\emptyset}\sum_{x \in \mathcal{X}} \sum_{r \in \N} 
\ell(\emptyset)^2P(x)\theta(x) p_{\emptyset}^{r-1}
\\
\nonumber &= \sum_{x \in \mathcal{X}}\ell(x)^2P(x)\theta(x)\\ \nonumber &\hspace{1cm}  +p_{\emptyset} \E{Y_1^2}
+2\ell(\emptyset)(1-\E{\theta(X)} )\E{Y_1}+\ell(\emptyset)^2p_{\emptyset},
\end{align*}
which upon rearrangement yields
\[
\E{Y_1^2} = \frac{\E{L(\theta)^2}}{\E{\theta(X)}} 
+2\E{Y_1}\cdot \frac{\ell(\emptyset)p_{\emptyset}}{\E{\theta(X)}}.
\]
Upon combining the relations derived above, we get
\[
\frac{\E{R_2}}{\E{Y_1}}= \frac{\E{L(\theta)^2}}{2\E{L(\theta)}}
+\frac{\E{L(\theta)}}{\E{\theta(X)}}-\frac 12,
\]
which with \eqref{eq:avgage_RY} completes the proof. 

It remains to establish \eqref{eq:avgage_RY}. The proof is a simple
extension of the renewal reward theorem to our sequence of rewards
$R_n$ in which adjacent terms\newer{ may be} dependent. We include it here for
completeness. Note that $(Y_n)_{n \in
  \N}$ is a sequence of non-negative iid random variables with mean
$\E{Y_1}$, and  $S_n=\sum_{k =1}^{n}Y_k$ for all $n\in \N$. The sequence
 $\{S_n\}$ serves as a sequence of renewal times  and $R_n$ denotes the reward accumulated in the $n$th
renewal interval (though not in the
standard iid sense). Define
$N(t)$ to be the number of receptions up to time $t>0$, $i.e.$,
\eq{
N(t)=\sup{\{n: S_n\leq t\}},
} 
and $R(t)$ to be the cumulative reward accumulated till time $t$,
$i.e.$, 
\eq{
R(t)=\sum_{k=1}^{N(t)}R_k.
}
With this notation, we have
\begin{align}
\frac{R(t)}{t}&=\frac{\sum_{k=1}^{N(t)}R_k }{t}
\\
&=\frac{\sum_{k=1}^{N(t)}R_k }{N(t)}.\frac{N(t)}{t}.
\label{e:avgage_factors}
\end{align}
Note that 
\eq{
\frac{\sum_{k=1}^{\floor{\frac{N(t)}{2}}}\sum_{i \in \{0,1\}
  }R_{2k+i}}{N(t)} & \leq \frac{\sum_{k=2}^{N(t)}R_k }{N(t)} \\& \leq
\frac{\sum_{k=1}^{\ceil{\frac{N(t)}{2}}}\sum_{i \in \{0,1\}
  }R_{2k+i}}{N(t)}.
}
We now analyze the two bounds in the previous set of inequalities. 
Since $\E{Y_1}$ is finite, we get (see \cite{ross1996stochastic} for a
proof) 
\begin{align}
\lim_{t \to \infty}\frac{N(t)}{t}\to \frac{1}{\E{Y_1}}
\quad a.s.,
\label{e:Nt_limit}
\end{align}
which also shows that $N(t) \to \infty \quad a.s.$ as $t
\to \infty$. 
Therefore, for $i\in\{0,1\}$,
\eq{
\frac{\sum_{k=1}^{\ceil{\frac{N(t)}{2}}}R_{2k+i}}{N(t)}
=\frac{\sum_{k=1}^{\ceil{\frac{N(t)}{2}}}R_{2k+i}}{\ceil{\frac{N(t)}{2}}}\cdot\frac{\ceil{\frac{N(t)}{2}}}{N(t)}  . 
}
Since $(R_{2k+i})_{k \in \N}$ is iid and $N(t) \to \infty
\quad a.s.$ as $t \to \infty$, strong law of large
numbers yields 
\eq{
\lim_{t \to 
    \infty}\frac{\sum_{k=1}^{\ceil{\frac{N(t)}{2}}}R_{2k+i}}{\ceil{\frac{N(t)}{2}}}= 
  \E{R_2} \quad a.s.  \quad \forall i \in \{0,1\},
}  
which further gives
\eq{
\lim_{t \to
    \infty}\frac{\sum_{k=1}^{\ceil{\frac{N(t)}{2}}}\sum_{i \in \{0,1\}
    }R_{2k+i}}{N(t)} =\E{R_2} \quad a.s.  .
}  
Similarly,
\eq{
\lim_{t \to
    \infty}\frac{\sum_{k=1}^{\floor{\frac{N(t)}{2}}}\sum_{i \in
      \{0,1\} }R_{2k+i}}{N(t)} =\E{R_2} \quad a.s.  
.}  
Combining the observations above, we get 
\eq{
\lim_{t \to \infty}\frac{\sum_{k=1}^{N(t)}R_k }{N(t)}
  =\E{R_2} \quad a.s.  
,}
which together with \eqref{e:avgage_factors} and \eqref{e:Nt_limit} yields \eqref{eq:avgage_RY}.

\subsection{Proof of Theorem~\ref{t:update_optimal}}
Our proof is based on noticing that the minmax cost $\Delta^*(P)$ in
\eqref{e:minmax_cost}  
involves weighted average length with weights $g_{z, Q, P}(x)$. In
fact, we will see below that there is no loss in restricting to
nonnegative weights, whereby our cost has a form of average length
with respect to a distribution that depends on $(z,Q)$. 
The broad idea of the proof is to establish that a optimal code corresponding
to the {\it least favorable} choice of $(z,Q)$ is minmax
optimal. However, the proof is technical since our cost function may not satisfy the assumptions in a standard
saddle-point theorem.

A simpler form of the minmax cost $\Delta^*(P)$ from \eqref{e:varZ} is given by
\begin{align}\label{eq:Delta_simple}
{\Delta^*(P)=\min_{\ell \in \Lambda}\max_{z \geq 0} f(\ell,z),} 
\end{align}
 where 
\begin{align}\label{e:f}
f(\ell,z):=-z^2 \frac{\E{L}}{2}+z\sqrt{\E{L^2}}+\E{L}.
\end{align}
We seek to apply the following version of  Sion's minmax theorem to the function $f$.
\begin{thm}[Sion's Minmax Theorem~\cite{sion1958general}]\label{t:Sion's t}
Let $\X$ be convex space and $\Y$ be a convex, compact space. Let $h$ be a function on $\X\times\Y$ which is convex on $\X$ for every fixed $y$ in $\Y$
and concave on $\Y$ for every fixed $x$ in $\X$. Then, 
$$\inf_{x \in \X} \sup_{y \in \Y} h(x,y)= \sup_{y \in \Y} \inf_{x \in \X} h(x,y).$$
\end{thm}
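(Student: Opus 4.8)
The plan is to follow the classical proof of Sion's minimax theorem. One inequality is immediate: for every $x_0\in\X$ and $y_0\in\Y$ we have $\inf_{x\in\X}h(x,y_0)\le h(x_0,y_0)\le\sup_{y\in\Y}h(x_0,y)$, so taking the supremum over $y_0$ on the left (the right side is free of $y_0$) and then the infimum over $x_0$ on the right yields $\sup_{y}\inf_{x}h\le\inf_{x}\sup_{y}h$. Hence the whole content is the reverse inequality, which I would prove by contradiction: suppose there is a constant $c$ with $\sup_{y}\inf_{x}h<c<\inf_{x}\sup_{y}h$ and derive a contradiction. (Throughout I use the standard regularity hypotheses attached to Sion's theorem --- $h(\cdot,y)$ lower semicontinuous for each $y$ and $h(x,\cdot)$ upper semicontinuous for each $x$ --- which are automatic for the finite-dimensional convex/concave functions arising here, being continuous on the relative interior of their domains; the general case reduces to this by a routine limiting argument.)

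First I would reduce to the case that $\X$ is also compact. Fix $c'$ with $\sup_{y}\inf_{x}h<c'<c$. For each $x\in\X$ the set $V_x:=\{y\in\Y:h(x,y)<c'\}$ is open in $\Y$, and these sets cover $\Y$ since $\inf_{x}h(x,y)<c'$ for every $y$. By compactness of $\Y$ there is a finite subcover $V_{x_1},\dots,V_{x_m}$; put $\X':=\mathrm{conv}\{x_1,\dots,x_m\}$, a compact convex subset of $\X$. Then $\sup_{y}\inf_{x\in\X'}h\le c'$ while $\inf_{x\in\X'}\sup_{y}h\ge\inf_{x\in\X}\sup_{y}h>c$ (shrinking the $x$-domain only raises the infimum), so the strict gap persists on $\X'\times\Y$; it therefore suffices to treat the case where both $\X$ and $\Y$ are compact convex.

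The heart of the argument is a level-set lemma of Sion type: if $C\subseteq\X$ is compact convex, $y_1,y_2\in\Y$, and the nonempty sets $A_i:=\{x\in C:h(x,y_i)\le c\}$ are disjoint, then some $y_0$ on the segment $[y_1,y_2]$ has $\{x\in C:h(x,y_0)\le c\}=\emptyset$, i.e. $\inf_{x\in C}h(x,y_0)>c$. I would prove it by a one-dimensional connectedness argument: writing $y_t=(1-t)y_1+ty_2$ and $A_{y_t}=\{x\in C:h(x,y_t)\le c\}$, concavity in $y$ gives $h(x,y_t)\ge\min(h(x,y_1),h(x,y_2))$, so every nonempty $A_{y_t}$ lies in $A_1\cup A_2$; thus if all $A_{y_t}$ were nonempty, the sets $T_i:=\{t\in[0,1]:A_{y_t}\cap A_i\ne\emptyset\}$ would be closed (a routine semicontinuity/compactness check), nonempty ($0\in T_1$, $1\in T_2$), and would cover the connected interval $[0,1]$, hence meet at some $t^{*}$; picking $x_1\in A_{y_{t^*}}\cap A_1$ and $x_2\in A_{y_{t^*}}\cap A_2$, convexity in $x$ of $A_{y_{t^*}}$ gives $[x_1,x_2]\subseteq A_{y_{t^*}}\subseteq A_1\cup A_2$, and then the connected segment $[x_1,x_2]$ is covered by the closed convex sets $A_1$ and $A_2$, forcing $A_1\cap A_2\ne\emptyset$ --- a contradiction. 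Granting the lemma, one proves by induction on $n$ that every finite $F=\{y_1,\dots,y_n\}\subseteq\Y$ satisfies $\inf_{x}\max_{y\in F}h(x,y)\le\sup_{y}\inf_{x}h$: the case $n=1$ is the definition, and for $n\ge2$, assuming the reverse with threshold $c$, one sets $B_i=\{x:h(x,y_i)\le c\}$, uses the induction hypothesis on $(n-1)$-element subsets to see that $C:=B_3\cap\dots\cap B_n$, $C\cap B_1$ and $C\cap B_2$ are all nonempty while $\bigcap_i B_i=\emptyset$, applies the lemma to $y_1,y_2$ on $C$ to produce $y_0$ with $h(\cdot,y_0)>c$ on $C$, and thereby gets $\inf_{x}\max\{h(\cdot,y_0),h(\cdot,y_3),\dots,h(\cdot,y_n)\}\ge c$, contradicting the induction hypothesis for $\{y_0,y_3,\dots,y_n\}$. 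Finally, with $\X$ compact and $h(\cdot,y)$ lower semicontinuous, the closed convex sets $\{x:h(x,y)\le r+\varepsilon\}$, where $r:=\sup_{F\text{ finite}}\inf_{x}\max_{y\in F}h(x,y)$, have the finite-intersection property, so their total intersection is nonempty; this gives $\inf_{x}\sup_{y}h\le r+\varepsilon$ for every $\varepsilon>0$, hence $\inf_{x}\sup_{y}h=\sup_{F\text{ finite}}\inf_{x}\max_{y\in F}h(x,y)\le\sup_{y}\inf_{x}h$ by the finite case --- contradicting the assumed strict gap and completing the proof.

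I expect the level-set lemma, and its deployment in the induction, to be the main obstacle: it is the one place where convexity in $x$ and concavity in $y$ genuinely interact, and making the ``sliding sublevel sets'' argument fully rigorous --- verifying closedness of the sets $T_i$, correctly handling the eventuality that some $A_{y_t}$ is empty (which is precisely the desired outcome), and carrying out the bookkeeping in the inductive step --- is the delicate part. Everything else (the two elementary inequalities, the open-cover reduction to compact $\X$, and the finite-intersection argument passing from finite $F$ to all of $\Y$) is routine.
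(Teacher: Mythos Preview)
Your proposal is a correct outline of the classical proof of Sion's minimax theorem. However, the paper does not prove this statement at all: it is quoted with a citation to~\cite{sion1958general} and used as a black box in the proofs of Theorems~\ref{t:update_optimal} and~\ref{t:qmaintheorem}. So there is nothing to compare against---the paper's ``proof'' is simply the reference. Your write-up follows Sion's original level-set argument (the sliding sublevel sets lemma plus induction on finite subsets of $\Y$, followed by the finite-intersection property), which is the standard route and is sound as sketched; the points you flag as delicate (closedness of the $T_i$, the empty-$A_{y_t}$ case, the induction bookkeeping) are exactly the places where care is needed, but none of them hides a genuine obstruction.
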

Indeed, the following lemma shows that our function $f$ satisfies the convexity requirements of Sion's minmax theorem. 
\begin{lem}\label{l:conv_conc_f}
 $f(\ell,z)$ is convex in $\ell$ for every fixed $z \geq 0$
  and concave in $z$ for a fixed $\ell \in \Lambda$.
\end{lem}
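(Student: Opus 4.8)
The plan is to verify the two one-variable claims separately, since each follows from an elementary structural observation about $f$. Throughout, write $L = \ell(X)$ with $X\sim P$, so that $\E{L} = \sum_{x\in\X}P(x)\ell(x)$ is linear in $\ell$, $\E{L^2} = \sum_{x\in\X}P(x)\ell(x)^2$ is quadratic in $\ell$, and $f(\ell,z) = \left(1-\tfrac{z^2}{2}\right)\E{L} + z\sqrt{\E{L^2}}$.

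For convexity in $\ell$ (fixed $z\in\R^+$), I would split $f$ into its two summands. The term $\left(1-\tfrac{z^2}{2}\right)\E{L}$ is linear in $\ell$ and hence convex. For $z\sqrt{\E{L^2}}$, the key point is that $\ell\mapsto\sqrt{\E{L^2}} = \big(\sum_{x\in\X}P(x)\ell(x)^2\big)^{1/2}$ is convex on $\R_+^{|\X|}$: it is the composition of the linear map $\ell\mapsto(\sqrt{P(x)}\,\ell(x))_{x\in\X}$ with the Euclidean norm on $\R^{|\X|}$, equivalently it is the $L^2(P)$-seminorm of $\ell$, which is convex by the triangle inequality and positive homogeneity. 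Since $z\geq 0$, multiplying by $z$ preserves convexity, and a sum of convex functions is convex, giving the claim. (Restriction to $\Lambda\subseteq\R_+^{|\X|}$ is immaterial since convexity holds on the whole orthant.)

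For concavity in $z$ (fixed $\ell\in\Lambda$), the quantities $\E{L}$ and $\sqrt{\E{L^2}}$ are now constants, so $z\mapsto f(\ell,z) = -\tfrac{\E{L}}{2}z^2 + \sqrt{\E{L^2}}\,z + \E{L}$ is a quadratic polynomial in $z$ with leading coefficient $-\E{L}/2\leq 0$ (nonpositive because $\ell\geq 0$ forces $\E{L}\geq 0$), and a quadratic with nonpositive leading coefficient is concave on $\R$, in particular on $\R^+$.

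I do not expect a genuine obstacle here. The only step worth a moment's care is the convexity of $\ell\mapsto\sqrt{\E{L^2}}$: the cleanest route is to recognize it as a seminorm (or a linear image of the Euclidean norm) rather than attempting a Hessian computation, since at length vectors $\ell$ with some $\ell(x)=0$ the map is not differentiable, so a second-order argument would need extra justification. Everything else is immediate from the affine, respectively quadratic, dependence of $f$ on $\ell$ and on $z$.
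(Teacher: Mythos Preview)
Your proof is correct and follows essentially the same approach as the paper: for convexity in $\ell$ both you and the paper reduce to the convexity of $\ell\mapsto\sqrt{\E{L^2}}$ via the triangle inequality (the paper states Minkowski's inequality explicitly, you phrase it as the $L^2(P)$-seminorm being convex), and for concavity in $z$ both arguments amount to observing that the quadratic has nonpositive leading coefficient (the paper phrases this as a second-derivative check). Your added remark about avoiding a Hessian argument at nondifferentiable points is a nice touch not present in the paper.
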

\begin{proof}
To show that $f(\ell,z)$ is a convex function of $\ell$ for every
fixed $z \geq 0$, it suffices to show that $\sqrt{\E{L^2}}$ is
convex in $L=\ell(X)$. To that end, let $L_1=\ell_1(X)$ and
$L_2=\ell_2(X)$, for some $\ell_1$ and $\ell_2$ in $\lambda$. For all
$\lambda \in [0,1]$,
\[
\sqrt{ \E{\left(\lambda L_1 + (1-\lambda)L_2 \right)^2}} \leq  \lambda \sqrt{\E{ L_1^2 }} + (1-\lambda) \sqrt{\E{ L_2^2 }},
\]
where the inequality is by  Minkowski inequality for $\norm{L}_2$.

The \newer{concavity} in $z$ can be seen easily by noticing that \newer{$\frac{\partial ^2f(\ell, z)}{\partial  z^2} \leq 0$} for all $\ell$ in $\lambda$.
\end{proof}

However, our underlying domains of optimization are not compact. Our proof below circumvents this difficulty by showing that we may replace one of the domains by a compact set. For ease of reading, we divide the proof into 3 steps; we begin by
summarize the flow at a high-level.  The first step is to show that this minmax cost remains unchanged
when the domain of $z$ is restricted to a bounded interval $[0,K]$ for
a sufficiently large $K$. 
This will allow us to interchange $\min_{l \in \Lambda}$ and $\max_{z\in [0,K]}$ in the second step by using Theorem~\ref{t:Sion's t} to obtain
\begin{align}
\Delta^*(P)=\max_{z \in [0,K]} \min_{\ell \in \Lambda}f(\ell,z). 
\label{e:restricted_range_z}
\end{align}
Furthermore, we then use Corollary~\ref{c:2variational_formula} to linearize the cost. But this brings in the maximization over an additional parameter $Q$, 
which we again interchange with the minimum over $\ell$ using Sion's minmax theorem (Theorem~\ref{t:Sion's t}). Note that the required convexity of the cost function is easy to see; we note it in the following lemma.

\begin{lem}\label{l:conv_conv_g}
For every fixed $z \geq 0$, $\sum_{x \in \X}g_{z,Q,P}(x)\ell(x)$ is convex in $\ell$ for a fixed $Q \ll P$
  and concave in $Q$ for a fixed $\ell \in \Lambda$.
 \end{lem}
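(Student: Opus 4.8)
The plan is to verify both assertions by direct inspection of the explicit expression for $g_{z,Q,P}$ in \eqref{e:g_def}, namely $g_{z,Q,P}(x) = \left(1 - z^2/2\right)P(x) + z\sqrt{Q(x)P(x)}$; no substantial argument is needed, and this is precisely why the lemma can be stated so briefly.

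For the convexity in $\ell$ (with $z$ and $Q\ll P$ held fixed), I would note that $\ell \mapsto \sum_{x\in\X} g_{z,Q,P}(x)\,\ell(x)$ is a fixed linear functional of the coordinate vector $(\ell(x))_{x\in\X}$, hence affine on the convex set $\Lambda$ and in particular convex. Note that this step does not use nonnegativity of the weights $g_{z,Q,P}(x)$, so it remains valid for the $(z,Q)$ ranges encountered before one restricts to $\mathcal{G}$.

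For the concavity in $Q$ (with $z$ and $\ell\in\Lambda$ held fixed), I would substitute \eqref{e:g_def} and use $\sum_x P(x)\ell(x) = \E{L}$ to write
\[
\sum_{x\in\X} g_{z,Q,P}(x)\,\ell(x) = \left(1 - \frac{z^2}{2}\right)\E{L} + z\sum_{x\in\X} \ell(x)\sqrt{P(x)}\,\sqrt{Q(x)}.
\]
The first summand is a constant independent of $Q$; in the second, each coefficient $z\,\ell(x)\sqrt{P(x)}$ is nonnegative because $z\geq 0$, $\ell(x)\geq 0$ on $\Lambda$, and $P(x)\geq 0$, while $Q\mapsto \sqrt{Q(x)}$ is concave on $\{Q : Q\ll P\}$. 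A nonnegative combination of concave functions is concave, and adding a constant preserves concavity, so the whole expression is concave in $Q$.

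There is essentially no obstacle here. The only points worth spelling out are that $\Lambda$ and $\{Q : Q\ll P\}$ are both convex, so the convexity/concavity assertions are well posed, and that the concavity step genuinely uses $\ell\geq 0$ (guaranteed by $\ell\in\Lambda$); this is exactly what allows a second invocation of Sion's minmax theorem (Theorem~\ref{t:Sion's t}) to interchange $\min_{\ell\in\Lambda}$ with $\max_{Q\ll P}$ in the proof of Theorem~\ref{t:update_optimal}.
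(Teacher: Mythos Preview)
Your proof is correct and follows essentially the same approach as the paper's: linearity in $\ell$ gives convexity, and concavity of $\sqrt{Q(x)}$ gives concavity in $Q$. You spell out more carefully than the paper does that the coefficients $z\,\ell(x)\sqrt{P(x)}$ are nonnegative (so that the nonnegative combination of concave functions is concave), which is a useful clarification but not a different argument.
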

 \begin{proof}
For every fixed $z \geq 0$, the cost function $\sum_{x \in \X}g_{z,Q,P}(x)\ell(x)$ is linear, and thereby convex, 
in $\ell$ for a fixed $Q$ . For concavity in $Q$, note that 
for a fixed $\ell \in \Lambda$, the function $\sqrt{Q(x)}$ is a concave function of $Q(x)$, for all $x$ in $\X$. 
\end{proof}
Thus, we obtain
\[
\Delta^*(P)=\max_{z \in [0, K], Q\ll P}\min_{\ell \in \Lambda}\sum_{x\in \X}g_{z,Q,P}(x)\ell(x).
\] 
In the final step, we will establish that the optimal code
for linear cost with weights corresponding to the least favorable $(z, Q)$ is minmax optimal.
We now present each step in detail. 

\paragraph*{Step 1} We begin by noting that there is no loss in
restricting to codes with\footnote{For simplicity, we assume that
  $\log \X$ is an integer.} $\E{L}\leq \log{|\mathcal{X}|}$. Indeed, 
note that for $\E{L}>\log{|\mathcal{X}|}$ the average age is bounded as
\begin{align}
\E{L}+\frac{\E{L^2}}{2\E{L}} \geq \frac{3}{2}\E{L}> \frac 3 2\log{|\mathcal{X}|},
\label{e:age_bound_large}
\end{align}
where we have used Jensen's inequality. 
On the other hand, a
fixed-length code with $\ell(x) = \log |\X|$ attains 
\begin{align}
\E{L}+\frac{\E{L^2}}{2\E{L}} =\frac 32 \log{|\mathcal{X}|},
\label{e:fixed-length}
\end{align}
which gives the desired form
\begin{align}
\nonumber
\Delta^*(P) & = \min_{\ell \in \Lambda, \E{L}\leq
  \log{\mathcal{X}}}\E{L}+\frac{\E{L^2}}{2\E{L}}\\ & = \min_{\ell \in \Lambda, \E{L}\leq \log{\mathcal{X}} }\max_{z \in \R} f(\ell,z),
\label{e:restrict_min}
\end{align}
\newer{where $f(\ell,z)$ is defined in \eqref{e:f}}.
Also, for a fixed $\ell$ in $\Lambda$ the function $f(\ell,z)$ attains its maximum at 
$z^*(\ell)$ given by
\[
z^*(\ell):= \frac{\sqrt{\E{L^2}}}{\E{L}}.
\]
For $\E{L}\leq \log |\X|$, the maximizer $z^*(\ell)$ is bounded
as\footnote{We assume without loss of generality that $P(x)>0$ for every $x\in \X$.}
\begin{align*}
z^*(\ell) &\leq \frac{\sqrt{\E{L^2}}}{H(X)} \\ &=
\frac{\sqrt{\sum_{x}P(x)\ell(x)^2}}{H(X)} \\ &\leq
\frac{\E{L}}{H(X)}\sqrt{\max_{x\in \X}\frac 1{P(x)}} \\ &\leq
\frac{\log |\X|}{H(X)}\sqrt{\frac 1{\min_{x\in \X}P(x)}},
\end{align*}
where the first inequality uses $\E{L}\geq H(X)$, which holds for
every prefix-free code, and the second holds since $\|a\|_2 \leq
\|a\|_1$ for any sequence $a= (a_1, ..., a_n)$. 
Denoting
\[
K := \frac{\log |\X|}{H(X)}\sqrt{\frac 1{\min_{x\in \X}P(x)}},
\]
\eqref{e:restrict_min} yields \eq{ \Delta^*(P)&=\min_{\ell \in
    \Lambda, \E{L}\leq \log{|\X| }}\max_{z \in [0, K]} f(\ell, z).  }
Next, we show that the minmax cost above remains unchanged when we
drop the constraint $\E{L}\leq \log |\X|$ in the outer
minimum,
 which will complete the first step
of the proof and establish \eqref{e:restricted_range_z}.
Indeed, since by \eqref{e:fixed-length} the minimum over $\ell\in
\Lambda$ such that $\E{L}\leq \log|\X|$ is at most $(3/2)\log |\X|$, it suffices to show that 
\begin{align}
\min_{\ell \in \Lambda, \E{L} > \log |\X|}\max_{z \in [0, K]} f(\ell,
z) > \frac 32 \log |\X|.
\label{e:lb_larger_EL}
\end{align}
We divide the proof of this fact into two cases. 
First consider the case when $\ell$ in $\Lambda$ is such that  $\E{L}> \log |\X|$ and $K \geq
z^*(\ell)$. Then,  $\max_{z\in [0,K]}f(\ell,z)$ equals $\max_{z \geq 0}f(\ell,z)$, which is bounded below by $(3/2)\log |\X|$ using
\eqref{e:age_bound_large} and the definition of $f(\ell,z)$.
For the second case when $\E{L}> \log |\X|$ and $K < z^*(\ell)$, we
have
\eq{
\max_{z \in[0,K]}f(\ell,z)
&=-K^2\frac{\E{L}}{2}+K\sqrt{\E{L^2}}+\E{L}
\\
&>K^2\frac{\E{L}}{2}  +\E{L}
\\
&>\frac 32\cdot \E{L} 
\\
&>\frac{3}{2}\cdot\log|\X|,} 
where the first inequality uses $K< z^*(\ell)=\sqrt{\E{L^2}}/{\E{L}}$
and the second holds since $K\geq 1$ from its definition. Therefore, we have established~\eqref{e:lb_larger_EL}, and so we have
\eq{
\Delta^*(P)
=\min_{\ell \in \Lambda, \E{L}\leq \log{|\X| }}\max_{z \in [0, K]} f(\ell, z)
=\min_{\ell \in \Lambda}\max_{z \in [0, K]} f(\ell, z).
}

\paragraph*{Step 2} 
By lemma \ref{l:conv_conc_f} , $f(\ell,z)$ is convex in $\ell$ for every fixed $z \geq 0$
  and concave in $z$ for a fixed $\ell \in \Lambda$, $z$ takes values in a convex compact set $[0,K]$, and 
the set  $\{\ell: \ell \in \Lambda\}$ is convex, we get from Sion's minmax theorem (Theorem~\ref{t:Sion's t}) that 
\eq{
\Delta^*(P)
=\min_{\ell \in
      \Lambda}\max_{z \in [0, K]} f(\ell,z)= \max_{z \in [0,
        K]}\min_{\ell \in \Lambda}f(\ell,z).
} 
Using Corollary~\ref{c:2variational_formula}, we have
 \eq{\|L\|_2 =\max_{Q\ll P}\sum_{x\in \mathcal{X}}Q(x)^{\frac{1}{2}}
    P(x)^{\frac{1}{2}}\ell(x),
}
which by the definition of $f$ in  \eqref{e:f} further yields
\begin{align}\label{eq:Delta_3}
{f(\ell,z)=\max_{Q\ll P}\sum_{x\in \X}g_{z,Q,P}(x)\ell(x),}
\end{align}
where
\begin{align*}
g_{z,Q,P}(x) = \left(1- \frac{z^2}2\right)P(x) +z\sqrt{Q(x)P(x)}.
\end{align*}
We have obtained 
 \begin{align}
 \Delta^*(P)=\max_{z \in [0, K]}\min_{\ell
   \in \Lambda}\max_{Q\ll P}\sum_{x\in \X}g_{z,Q,P}(x)\ell(x).
 \end{align}
From Lemma~\ref{l:conv_conv_g}, $\sum_{x \in \X}g_{z,Q,P}(x)\ell(x)$ is convex in $\ell$ , for all $Q \ll P$,
  and concave in $Q$, for a fixed $\ell \in \Lambda$.
 Furthermore, since the set $\{Q:Q\ll P\}$ is convex compact for a pmf $P$ on finite alphabet, using Sion's minmax
theorem (Theorem~\ref{t:Sion's t}) once again, we get 
\begin{align}
& \Delta^*(P)=\max_{z \in [0, K]}\max_{Q\ll P}\min_{\ell \in
    \Lambda}\sum_{x\in \X}g_{z,Q,P}(x)\ell(x),
\label{eq:Delta_4}
\end{align}
which completes our second step.

 \paragraph*{Step 3} By \eqref{eq:Delta_4}, we get
\begin{align}
& \Delta^*(P)\leq\max_{z \geq 0}\max_{Q\ll P}\min_{\ell \in
    \Lambda}\sum_{x\in \X}g_{z,Q,P}(x)\ell(x).
\nonumber
\end{align}
On the other hand, by \eqref{eq:Delta_simple} and \eqref{eq:Delta_3}  we have 
\begin{align} \nonumber
\Delta^*(P)& =\min_{\ell \in
    \Lambda}\max_{z \geq 0}\max_{Q\ll P}\sum_{x\in \X}g_{z,Q,P}(x)\ell(x)\\& \geq\max_{z \geq 0}\max_{Q\ll P}\min_{\ell \in
    \Lambda}\sum_{x\in \X}g_{z,Q,P}(x)\ell(x),
\nonumber
\end{align}
whereby
\begin{align}
\nonumber
 \Delta^*(P) &=\min_{\ell \in
    \Lambda}\max_{z \geq 0}\max_{Q\ll P}\sum_{x\in \X}g_{z,Q,P}(x)\ell(x) \\ &=\max_{z \geq 0}\max_{Q\ll P}\min_{\ell \in
    \Lambda}\sum_{x\in \X}g_{z,Q,P}(x)\ell(x),
\label{e:max_min}
\end{align}
which proves the first part of theorem \ref{t:update_optimal}.

Next, we claim that in the maxmin formula above, the maximum is attained by a $(z,Q)$ for which $g_{z,Q,P}(x)$ is non-negative for every $x$. Indeed, if for some $z,Q$ there exists an $x^\prime$ in $\X$ such that $g_{z,Q,P}(x^\prime)$ is negative, then the cost $\sum_{x\in \X}g_{z,Q,P}(x)\ell(x)$ is minimized by any $\ell$ such that $\ell(x^\prime)=\infty$ and the minimum value is $-\infty$. Such $z,Q$ clearly can't be the optimizer of the maxmin problem, 
since for $z=0$, we have $g_{z,Q,P} \ge 0$, which in turn leads to $\min_{\ell \in
    \Lambda}\sum_{x\in \X}g_{z,Q,P}(x)\ell(x) \geq 0$.

Finally, consider  $(z, Q)$ such that $g_{z,Q,P}(x)\geq 0$ for all $x \in \X$. For such a $(z, Q)$, we seek to identify the minimized $\ell$ below:
\begin{align}\label{eq:Delta_5}
&\min_{\ell \in \Lambda}\sum_{x\in \X}g_{z,Q,P}(x)\ell(x) \nonumber \\&=\sum_{x'\in
  \X}g_{z,Q,P}(x')\min_{\ell \in \Lambda} \sum_{x\in
  \X}\frac{g_{z,Q,P}(x)}{\sum_{x'\in \X}g_{z,Q,P}(x')}\ell(x).
\end{align}
Thus, our optimization problem reduces to the standard problem of designing minimum average length prefix-free codes for the pmf 
\[
P_{z,Q}(x) = \frac{g_{z,Q,P}(x)}{\sum_{x'\in \X}g_{z,Q,P}(x')}.
\]
By Shannon's source coding theorem for variable length codes, the minimum is achieved by 
\[
\ell^*_{z,Q}(x):=\log\newer{ \frac{\sum_{x^{\prime}\in
      \X}g_{z,Q,P}(x)}{g_{z,Q,P}(x)}.}
\] 
Furthermore, $\ell^*_{z,Q}$ is the unique minimizer in $\Lambda$. 

Consider now a maximizer $(z^*, Q^*)$ of the maxmin problem in \eqref{e:max_min}, and let $\ell^o = \ell^*_{z^*, Q^*}$. 
Then, by Lemma~\ref{l:unique_spl} in the appendix,$(\ell^o, (z^*, Q^*))$ is a saddle-point for the minmax problem in \eqref{e:max_min}. 
  Moreover, $\ell^o$ is the unique minmax optimal solution.

\subsection{Proof of Theorem~\ref{t:qmaintheorem}}\label{p:3}
Denoting 
\begin{align}\label{e:f2_def}
f( \ell,z)= -z^2 \frac{(L_{\tt
      th}-\E{L})}{2}+z\sqrt{\E{L^2}}+\E{L},
\end{align}     
the optimal cost $\Delta^*(P)$ can be written as 
\begin{align*}
\Delta^*(P)&= \inf_{\ell \in \Lambda , \E{L} < L_{\tt
    th}}\frac{\E{L^2}}{2(L_{\tt th}-\E{L})}+\E{L}\\&
= \min_{\ell \in \Lambda, \E{L} < L_{\tt
    th}}\max_{z \geq 0} f(\ell,z).
\end{align*}
This form is similar to the one we had in Theorem~~\ref{t:update_optimal}. But the proof there does not extend to the case at  hand. 
Specifically, note that for each $\ell$, $f(\ell,z)$ attains its maximum value for $z^*(\ell)=\frac{\sqrt{\E{L^2}}}{(L_{\tt th}-\E{L})}$ which, unlike the quantity that we obtained in the proof of Theorem~\ref{t:update_optimal}, is unbounded over the set of $\ell\in \Lambda$ such that $\E{L}\leq L_{\tt th}$. However, under the additional assumption $H(X)+\log (1+1/\sqrt{2})< L_{\tt th}$, we can provide a simpler alternative proof. 
We rely on the following lemma.
\begin{lem}\label{l:minmax_conditions}
Consider a function $h:\X\times \Y \to \R$ such that the set $\X$ is compact convex, the set $\Y$ is convex, $h(x, y)$ is a convex function of $x$ for every fixed $y$ and a concave function of $y$ for every fixed $x$. Suppose additionally that there exist a convex subset $\X_0$ of $\X$ and a compact convex subset $\Y_0$ of $\Y$ such that
\begin{enumerate}
\item for every for every $x\in \X_0$, an optimizer $y^*(x)\in \arg\max_{y\in \Y}h(x,y)$ belongs to $\Y_0$; and 
\item for every $y\in \Y_0$, an optimizer $x^*(y)\in \arg\min_{x\in \X} h(x,y)$ belongs to $\X_0$.
\end{enumerate}
Then, 
\[
\min_{x\in \X}\max_{y\in \Y}h(x,y)= \max_{y\in \Y}\min_{x\in \X}h(x,y).
\]
\end{lem}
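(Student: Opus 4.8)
The plan is to invoke Sion's minmax theorem (Theorem~\ref{t:Sion's t}) \emph{only} on the sub-domain $\X_0\times\Y_0$, where all of its hypotheses hold even though $\Y$ itself need not be compact: $\X_0$ is convex, $\Y_0$ is convex and compact, and the restriction of $h$ to $\X_0\times\Y_0$ is still convex in its first argument and concave in its second. Sion's theorem then yields
\[
\inf_{x\in\X_0}\sup_{y\in\Y_0}h(x,y)=\sup_{y\in\Y_0}\inf_{x\in\X_0}h(x,y)=:\alpha.
\]

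Next I would use conditions (1) and (2) to show that enlarging the domains from $\X_0\times\Y_0$ back to $\X\times\Y$ changes neither side of this identity. Fix $x\in\X_0$; by condition (1) a maximizer $y^*(x)\in\arg\max_{y\in\Y}h(x,y)$ already lies in $\Y_0$, so $\sup_{y\in\Y}h(x,y)=h(x,y^*(x))\le\sup_{y\in\Y_0}h(x,y)\le\sup_{y\in\Y}h(x,y)$, forcing $\sup_{y\in\Y}h(x,y)=\sup_{y\in\Y_0}h(x,y)$ for every $x\in\X_0$. Since restricting the outer variable to $\X_0$ can only raise an infimum, this gives
\[
\inf_{x\in\X}\sup_{y\in\Y}h(x,y)\le\inf_{x\in\X_0}\sup_{y\in\Y}h(x,y)=\inf_{x\in\X_0}\sup_{y\in\Y_0}h(x,y)=\alpha.
\]
Symmetrically, for $y\in\Y_0$ condition (2) supplies a minimizer $x^*(y)\in\X_0$ of $h(\cdot,y)$ over $\X$, whence $\inf_{x\in\X}h(x,y)=\inf_{x\in\X_0}h(x,y)$, and since restricting the outer variable to $\Y_0$ can only lower a supremum,
\[
\sup_{y\in\Y}\inf_{x\in\X}h(x,y)\ge\sup_{y\in\Y_0}\inf_{x\in\X}h(x,y)=\sup_{y\in\Y_0}\inf_{x\in\X_0}h(x,y)=\alpha.
\]

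Finally I would close the loop with weak duality, $\sup_{y\in\Y}\inf_{x\in\X}h\le\inf_{x\in\X}\sup_{y\in\Y}h$, which holds unconditionally. Chaining it with the two displays above,
\[
\inf_{x\in\X}\sup_{y\in\Y}h\le\alpha\le\sup_{y\in\Y}\inf_{x\in\X}h\le\inf_{x\in\X}\sup_{y\in\Y}h,
\]
so all four quantities coincide, which is exactly the asserted equality $\min_{x\in\X}\max_{y\in\Y}h=\max_{y\in\Y}\min_{x\in\X}h$ (the outer $\min$ and $\max$ being attained by compactness of $\X$ and of $\Y_0$ together with the continuity of $h$ present in our applications, or directly from conditions (1)--(2)). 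I do not expect a real obstacle here: the argument is short once one notices that Sion's theorem should be applied to $\X_0\times\Y_0$ rather than to $\X\times\Y$, with conditions (1)--(2) used purely to transport the resulting saddle value back to the full domains. The only points needing care are checking that the restricted problem genuinely meets Sion's hypotheses and that the two ``enlargement'' identities $\sup_{y\in\Y}h(x,\cdot)=\sup_{y\in\Y_0}h(x,\cdot)$ (for $x\in\X_0$) and $\inf_{x\in\X}h(\cdot,y)=\inf_{x\in\X_0}h(\cdot,y)$ (for $y\in\Y_0$) hold pointwise before one passes to the outer $\inf$/$\sup$.
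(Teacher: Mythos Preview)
Your proposal is correct and follows essentially the same argument as the paper's own proof: apply Sion's theorem on the restricted domain $\X_0\times\Y_0$, then use conditions (1) and (2) to pass from the restricted domains back to $\X$ and $\Y$, and close with weak duality. The paper presents the same chain of inequalities in slightly more compressed form, but the structure and the key steps are identical.
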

\begin{proof}
Note that since for $x$ in $\X_0$, the $y$ that maximizes $h(x,y)$ over $\Y$ is in $\Y_0$, we get 
\[
\min_{x\in \X}\max_{y\in \Y}h(x,y)\leq
\min_{x\in \X_0}\max_{y\in \Y}h(x,y)
=\min_{x\in \X_0}\max_{y\in \Y_0}h(x,y).
\]
Further, by Sion's minmax theorem (Theorem~\ref{t:Sion's t}), the right-side equals
$\max_{y\in \Y_0}\min_{x\in \X_0}h(x,y)$. But by our second assumption, the restriction $x\in \X_0$ can be dropped, and we have
\[
\max_{y\in \Y_0}\min_{x\in \X_0}h(x,y)=
\max_{y\in \Y_0}\min_{x\in \X}h(x,y)\leq \max_{y\in \Y}\min_{x\in \X}h(x,y).
\]
Thus, we have shown $\min_{x\in \X}\max_{y\in \Y}h(x,y)\leq \max_{y\in \Y}\min_{x\in \X}h(x,y)$, which completes the proof since the inequality in the other direction holds as well.
\end{proof}
For our minmax cost, we will verify that both the conditions of the lemma above hold under the assumption $H(X)+\log (1+1/\sqrt{2}) <L_{\tt th}$. Indeed, first note that for any fixed $\ell\in \Lambda$ with $\E{L}\leq H(X)+\log(1+1/\sqrt{2})$, the maximizer $z$ of $f(\ell,z)$  given by $\sqrt{\E{L^2}}/(L_{\tt th} - \E{L})$ satisfies
\begin{align*}
&\frac{\sqrt{\E{L^2}}}{L_{\tt th} - \E{L}}\\&\leq 
\sqrt{\frac 1{\min_x P(x)}}\cdot \frac{\E{L}}{L_{\tt th} - \E{L}}
\\
&\leq \sqrt{\frac 1{\min_x P(x)}}\cdot \frac{H(X)+\log (1+1/\sqrt{2})}{
L_{\tt th} - H(X)-\log (1+1/\sqrt{2})}.
\end{align*}
Denote the right-side above by $K$ and $L_{\tt th}' = H(X)+\log (1+1/\sqrt{2})$. 
Therefore, with the set $\{\ell\in \Lambda, \E{L}\leq L_{\tt th}'\}$ in the role of $\X_0$ in Lemma~\ref{l:minmax_conditions}, 
the set $[0,K]$ can play the role of $\Y_0$.

To apply Lemma~\ref{l:minmax_conditions}, we require two
conditions to hold: first, that 
$f(l,z)$ is a convex function of $\ell$ for every fixed $z$ and a concave function of $z$ for every fixed $\ell$, second, that
for every $z\in [0,K]$, the minimizing $\ell$ 
satisfies $\E{L}\leq L_{\tt th} ^ \prime$.The first easily follows from \eqref{e:f2_def}. The proof of this fact is exactly the same as Lemma~\ref{l:conv_conc_f}. However, while the second condition can be
shown to be true, the proof of
this fact is almost the same as the proof of our theorem.  For
simplicity of presentation, we instead present an alternative proof of
the theorem that uses a slight extension of the lemma above.
Note that from our foregoing discussion and following the proof of the
lemma, 
we already have obtained
\[
\Delta^*(P)\leq \max_{z \in[0,K]}\min_{\ell \in \Lambda, \E{L} \leq {L^{\prime}}_{\tt th}}f(\ell,z).
\]
By using Corollary~\ref{c:2variational_formula}  and using Sion's minmax theorem once again, we get
\begin{align*}
  \lefteqn{\Delta^*(P)}
\\
  &
  \leq \max_{z \in [0, K]}\max_{Q\ll P}\min_{\ell \in \Lambda, \E{L} \leq {L^{\prime}}_{\tt th}}\sum_{x\in \X}  g_{z,Q,P}(x)\ell(x)- \frac{z^2}{2} L_{\tt th},
\end{align*}
where 
\begin{align*}
g_{z,Q,P}(x) := \left(1+ \frac{z^2}2\right)P(x)+z\sqrt{Q(x)P(x)}.
\end{align*}    
In the preceding argument, we can use Sion's minmax theorem as the following two conditions hold. First, for every fixed $z \geq 0$, the function $\sum_{x \in \X}  g_{z,Q,P}(x)\ell(x)- \frac{z^2}{2} L_{\tt th}$ is concave in $Q$ for a fixed $\ell \in \Lambda$ and convex in $\ell$ for a fixed $Q \ll P$. Second, the sets $\{Q: Q\ll P\}$ and $\{\ell \in \Lambda: \E{L} \leq {L^{\prime}}_{\tt th}\}$ are compact and convex. Proof of the first is exactly the same as  that of \ref{l:conv_conv_g}. Second is true as we have restricted to a finite alphabet $\X$.
Thus, we can proceed as in the proof of the lemma, but  we need to show now that for every $z\in [0,K]$ and $Q\ll P$, the optimal $\ell^*(z,Q)$ satisfies $\E{L^*}\leq L_{\tt th}^{\prime}$
. 
Indeed, consider the following optimization problem for a fixed  $z$, $Q$:
\begin{align*}
\min_{\ell \in \Lambda}& \sum_{x\in \X}g_{z,Q,P}(x)\ell(x) \\&=\left(\sum_{x'\in
  \X}\newer{g_{z,Q,P}(x^{\prime})}\right) \min_{\ell \in \Lambda} \sum_{x\in
  \X}\frac{g_{z,Q,P}(x)}{\sum_{x'\in \X}g_{z,Q,P}(x')}\ell(x).
\end{align*}
Since $\frac{g_{z,Q,P}(x)}{\sum_{x'\in \X}g_{z,Q,P}(x')}$ are nonnegative and add to $1$,
in the optimization problem above, we are minimizing the expected prefix
free lengths for a finite alphabet for a particular distribution. Thus,
by Shannon's Source Coding Theorem, the optimal $\ell^*_{z,Q}$ is given by
\newer{
\begin{align*}
\ell^*_{z,Q}(x):=\log \frac{\sum_{x^{\prime}\in\X}g_{z,Q,P}(x^{\prime})}
{g_{z,Q,P}(x)};
\end{align*}}
in fact, this optimizer is unique. But then for  every $x$ in $\X$, \newer{
\eq{
&\ell^*_{z,Q}(x)\\&=\log \frac{\sum_{x^{\prime}\in \X}g_{z,Q,P}(x^{\prime})}{g_{z,Q,P(x)}}\\ 
 &=\log\frac{\sum_{x^{\prime} \in \X}\left(1+\frac{z^2}2\right)P(x)+\sum_{x\in \X}z\sqrt{Q(x)P(x)}}{\left(1+\frac{z^2}2\right)P(x)+z\sqrt{Q(x)P(x)}}
  \\
& \leq \log
  \frac{1}{P(x)} \\ & ~ +\log\left(\frac{\left(1+ \frac{z^2}2\right)}{\left(1+
    \frac{z^2}2\right)+z\sqrt{\frac{Q(x)}{P(x)}}}
  +\frac{z}{\left(1+
    \frac{z^2}2\right)+z\sqrt{\frac{Q(x)}{P(x)}}}\right)
 \\
& \leq \log
  \frac{1}{P(x)}+\log\left(\frac{\left(1+ \frac{z^2}2\right)}{\left(1+
    \frac{z^2}2\right)}
  +\frac{z}{\left(1+
    \frac{z^2}2\right)}\right)        
     \\
    &\leq \log
  \frac{1}{P(x)}+\log\left(1+\frac{1}{\sqrt{2}}\right),
   }}
where the first inequality is by the Cauchy-Schwarz inequality, the second inequality follows upon noting that $\frac{Q(x)}{P(x)}$ is nonnegative, and the last inequality follows from the fact that $z^2/2+1 \geq\sqrt{2}z$ \newer{(which holds with equality at $z=\sqrt{2}$)}. 
Thus as a consequence of this inequality the expected code length of
such a code is upper bounded as
follows, 
\begin{align}\label{e:q_corr}
\E{L^*_{z,Q}}\leq 
    H(x)+\log\left(1+\frac{1}{\sqrt{2}}\right),
\end{align}
which in the manner of Lemma~\ref{l:minmax_conditions} gives
\begin{align}
\nonumber \Delta^*(P) = \max_{z \geq 0}\max_{Q\ll P}\min_{\ell \in
  \Lambda, \E{L} \leq {L}_{\tt th}}\sum_{x\in \X} g_{z,Q,P}(x)\ell(x)-
\frac{z^2}{2} L_{\tt th}.
\end{align}
Finally, it remains to establish that $\ell^*_{z^*,Q^*}$ is the unique minmax optimal solution. This can be shown in exactly the same manner as it was shown for Theorem~\ref{t:update_optimal} in the previous section; we skip the details.\qed

\section*{Acknowledgement} We thank Rajesh Sundaresan for pointing out
connection to Campbell's work and to the
reference~\cite{SundaresanHanewal} which led to
Theorem~\ref{t:variational_formula}, the main technical contribution
of this work. Also, we thank Roy Yates for bringing to our notice references~\cite{YatesDCC, zhong2017backlog} and illuminating discussions on connection between our formulation and that in~\cite{YatesDCC}.

This work has been supported in part by the indigenous 5G Testbed
project supported by the Department of Telecommunications, Ministry of
Communications, India, 
by the Science and Engineering Research Board (SERB) under Grant
No. DSTO-1677, 
and by a PhD research grant from the Robert
Bosch Center for Cyber-Physical Systems at the Indian Institute of
Science, Bangalore.
\appendix
\section*{A saddle-point lemma}
The following simple result is needed to establish the minmax
optimality of our scheme. The first part of the result claims that any
pair of minmax optimal $x$ and maxmin optimal $y$ forms a saddle
point, a well-known fact. The second part claims that if the minimizer
for the maxmin optimal $y$ is unique, then it must also be minmax
optimal and thereby constitute a saddle-point with $y$.

\begin{lem}\label{l:unique_spl} 
Consider the minmax problem $\displaystyle{\min_{x\in \X}\max_{y\in \Y}h(x,y)}$, and
assume that
\[
\min_{x\in \X}\max_{y\in \Y}h(x,y)=\max_{y\in \Y}\min_{x\in \X}h(x,y).
\]
Then, for every pair $(x^*, y^*)$ such that $x^*\in\arg\min_{x \in
  \mathcal{X}}\max_{y \in \mathcal{Y}}h(x,y)$ and $y^*\in\arg\max_{y
  \in \mathcal{Y}}\min_{x \in \mathcal{X}}h(x,y)$  
constitutes a saddle-point. Furthermore, if the minimizer $x^o(y^*)$
of $\min_{x \in \mathcal{X}}h(x,y^*)$ is unique, then $x^*=x^o(y^*)$ 
is the unique minmax optimal solution. 
\end{lem}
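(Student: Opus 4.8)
The plan is to work entirely from the scalar identity $v := \min_{x\in\X}\max_{y\in\Y}h(x,y) = \max_{y\in\Y}\min_{x\in\X}h(x,y)$ that we are permitted to assume, and to sandwich the number $h(x^*,y^*)$ between $v$ and $v$. First I would fix a minmax optimal $x^*$ and a maxmin optimal $y^*$. Minmax optimality of $x^*$ gives $h(x^*,y)\le \max_{y'\in\Y}h(x^*,y') = v$ for every $y\in\Y$, hence in particular $h(x^*,y^*)\le v$. Maxmin optimality of $y^*$ gives $h(x,y^*)\ge \min_{x'\in\X}h(x',y^*) = v$ for every $x\in\X$, hence in particular $h(x^*,y^*)\ge v$. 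Combining the two, $h(x^*,y^*)=v$.

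With this equality in hand the saddle-point property is immediate: for every $x\in\X$ we have $h(x,y^*)\ge v = h(x^*,y^*)$, and for every $y\in\Y$ we have $h(x^*,y)\le v = h(x^*,y^*)$, so $h(x^*,y)\le h(x^*,y^*)\le h(x,y^*)$ for all $x,y$, which is exactly the definition of a saddle-point. As a by-product, $x^*$ attains the minimum $\min_{x\in\X}h(x,y^*)$, since its value there equals $v$, the minimum value; I will reuse this observation in the second part.

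For the second part I would invoke the by-product just noted: every minmax optimal point $x^*$ lies in $\arg\min_{x\in\X}h(x,y^*)$. If this set is the singleton $\{x^o(y^*)\}$, then necessarily $x^* = x^o(y^*)$, so the set of minmax optimal points contains at most one element; as it is nonempty by hypothesis, it must equal $\{x^o(y^*)\}$, which is the claim.

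I do not expect a genuine obstacle here: this is the classical argument that a value-attaining pair forms a saddle-point, specialized to the present setting. The only point needing mild care is that everything hinges on the assumed equality of minmax and maxmin --- without it one merely gets $h(x^*,y^*)$ trapped between $\max_{y\in\Y}\min_{x\in\X}h(x,y)$ and $\min_{x\in\X}\max_{y\in\Y}h(x,y)$, which need not coincide --- together with the (implicit) nonemptiness of the relevant $\arg\min$ and $\arg\max$ sets, which holds in all our applications by compactness.
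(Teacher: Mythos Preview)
Your argument is correct and essentially identical to the paper's: both derive the chain $h(x,y^*)\ge v \ge h(x^*,y)$ from the assumed equality of minmax and maxmin values, substitute $x=x^*$, $y=y^*$ to pin down $h(x^*,y^*)=v$ and the saddle-point property, and then use that any minmax optimal $x^*$ must minimize $h(\cdot,y^*)$ to conclude uniqueness. The only cosmetic difference is that the paper writes the two inequalities as a single displayed chain, whereas you split them into two separate observations before recombining.
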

\begin{proof}
Since minmax and maxmin costs are assumed to be equal, by the definition of $x^*$ and
$y^*$, we have
\begin{align}
h(x,y^*) &\geq \max_{y^\prime \in \mathcal{Y}}\min_{x^\prime \in \mathcal{X}}h(x^\prime,y^\prime)
\nonumber
\\
&=\min_{x^\prime \in \mathcal{X}}\max_{y^\prime \in \mathcal{Y}}h(x^\prime,y^\prime)\geq
h(x^*,y),
\label{e:minmax_exchange}
\end{align}
for all $x$ in $\mathcal{X}$ and $y$ in $\mathcal{Y}$.
Upon substituting $ x^*$ for $x $ and $y^*$ for $y$, we get that $x^*$
   is a minimizer of $h(x,y^*)$ and $y^*$ a maximizer of $h(x^*,
y)$. Therefore, $(x^*, y^*)$ forms a saddle-point and $h(x^*,
y^*)=\min_{x\in\X}\max_{y\in\Y}h(x,y)$.

Turning now to the second part, suppose that $x^\prime$, too, is minmax optimal. 
Then, using \eqref{e:minmax_exchange} with $x=x^\prime$ and $y=y^*$, we get that $x^\prime$ must be a minimizer
of $h(x,y^*)$ as well. 
But since this minimizer is unique,  $x^\prime$ must coincide with $x^o$.

\end{proof}
\section*{Proof of Lemma~\ref{c:dim_red}}
Denoting
\[
c_P(z,Q):=\sum_{x\in \X}g_{z,Q,P}(x) \log\frac{\sum_{x^\prime\in
    \X}g_{z,Q,P}(x^\prime)}{g_{z,Q,P}(x)},
\]
 we begin by observing the
concavity of $c_P(z,Q)$.  Recall the notations $\G=\{z \geq 0, Q \in \R^{|\X|}:g_{z,Q,P}(x)\geq0 \quad \forall x \in \X\}$
and $g_{z,Q,P}(x)  = (1-z^2/2)P(x) + z\sqrt{Q(x)P(x)}$.

\begin{lem}\label{c:concavity}
The function $c_P(z,Q) $ is concave in $Q$ for each fixed $z$
and is concave in $z$ for each fixed $Q$, over the set $\G$.
\end{lem}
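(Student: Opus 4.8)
The plan is to recognize $c_P(z,Q)$ as the composition of a \emph{fixed} concave, coordinatewise nondecreasing function with a map whose components are concave (in $Q$ for fixed $z$, and in $z$ for fixed $Q$), and then invoke the standard composition rule: if $\varphi$ is concave and nondecreasing in each argument and each $h_x$ is concave, then $\varphi(h_1,\dots,h_n)$ is concave. It suffices to work with natural logarithms, since changing the base of the logarithm only rescales $c_P$ by a positive constant and affects neither concavity nor monotonicity. Writing $g = g_{z,Q,P}(\cdot)$ and $G = \sum_{x'\in\X} g(x')$, we have on $\G$
\[
c_P(z,Q) = \varphi(g), \qquad \varphi(g) := \sum_{x\in\X} g(x)\ln\frac{G}{g(x)},
\]
with the convention $0\ln 0 = 0$; this $\varphi$ is the unnormalized (Shannon) entropy on the nonnegative orthant $\{g\in\R^{|\X|}: g(x)\geq 0\ \forall x\}$.

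The crux is to prove that $\varphi$ is concave and coordinatewise nondecreasing there. A direct differentiation gives $\partial\varphi/\partial g(x) = \ln(G/g(x))$, which is $\geq 0$ precisely because every coordinate of $g$ is nonnegative, so $g(x)\leq G$; hence $\varphi$ is nondecreasing in each coordinate on $\G$. Differentiating once more yields the Hessian $\nabla^2\varphi(g) = \tfrac1G\mathbf{1}\mathbf{1}^\top - \mathrm{diag}\!\big(1/g(x)\big)$, and for any $v\in\R^{|\X|}$ the Cauchy--Schwarz inequality gives $\big(\sum_x v(x)\big)^2 = \big(\sum_x \tfrac{v(x)}{\sqrt{g(x)}}\,\sqrt{g(x)}\big)^2 \leq \big(\sum_x \tfrac{v(x)^2}{g(x)}\big) G$, so $v^\top\nabla^2\varphi(g)\,v \leq 0$; the boundary points with some $g(x)=0$ are covered by continuity of $\varphi$ on the closed orthant. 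Thus $\varphi$ is concave and nondecreasing.

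It remains to record that the inner maps are concave and the relevant domains convex. For fixed $z\geq 0$, each $Q\mapsto g_{z,Q,P}(x) = (1-z^2/2)P(x) + z\sqrt{Q(x)P(x)}$ is concave (affine plus $z$ times a square root), and $\{Q:(z,Q)\in\G\}$ is convex, being a finite intersection of superlevel sets of these concave functions; the composition rule then gives concavity of $Q\mapsto c_P(z,Q)$. For fixed $Q$, each $z\mapsto g_{z,Q,P}(x) = -\tfrac{P(x)}{2}z^2 + \sqrt{Q(x)P(x)}\,z + P(x)$ is a downward parabola, hence concave in $z$, and $\{z:(z,Q)\in\G\}$ is an interval; the same composition rule gives concavity of $z\mapsto c_P(z,Q)$.

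The main obstacle is the middle step---establishing concavity of the unnormalized entropy $\varphi$---for which the Hessian identity together with a single application of Cauchy--Schwarz does the job; the only subtleties are the harmless boundary behavior at $g(x)=0$ and the observation that the \emph{monotonicity} of $\varphi$, which is what makes the composition rule applicable, is exactly what the nonnegativity constraint cutting out $\G$ buys us.
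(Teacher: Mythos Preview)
Your argument is correct. The paper, however, takes a different and somewhat shorter route: it invokes the identity
\[
c_P(z,Q)=\min_{\ell\in\Lambda}\sum_{x\in\X}g_{z,Q,P}(x)\,\ell(x),
\]
which is just Shannon's source coding lower bound (recorded earlier in the paper as the step leading to the optimal lengths $\ell^*_{z,Q}$). Since every $\ell\in\Lambda$ has nonnegative coordinates and each $g_{z,Q,P}(x)$ is concave in $Q$ (for fixed $z$) and in $z$ (for fixed $Q$), the inner sum is concave in each variable for every fixed $\ell$; the pointwise infimum of concave functions is concave, and the lemma follows immediately.

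At bottom the two proofs are the same fact viewed from two sides. Your outer function $\varphi(g)=\sum_x g(x)\log(G/g(x))$ admits exactly the variational representation $\varphi(g)=\min_{\ell\in\Lambda}\langle g,\ell\rangle$, and that representation simultaneously yields concavity of $\varphi$ (infimum of linear functions) and coordinatewise monotonicity (because $\ell\geq 0$)---precisely the two properties you establish by differentiating and applying Cauchy--Schwarz. The paper's route is shorter and reuses machinery already set up for the main theorem; your route is self-contained, makes the role of the nonnegativity constraint defining $\G$ explicit (it is what forces the monotonicity needed for the composition rule), and does not require the reader to recall the Shannon coding identity.
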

\begin{proof}
For the first part, \eqref{eq:Delta_5} yields that for every $(z,Q)\in
\mathcal{G}$,
\eq{
\sum_{x\in \X}g_{z,Q,P}(x) &\log\frac{\sum_{x^\prime\in
    \X}g_{z,Q,P}(x^\prime)}{g_{z,Q,P}(x)}\\ &=\min_{\ell \in
  \Lambda}\sum_{x\in \X}g_{z,Q,P}(x)\ell(x).
}
Also, for every fixed $z$, the function $g_{z,Q,P}(x)$ is concave in $Q$, and thereby $\sum_{x\in \X}g_{z,Q,P}(x)\ell(x)$, is concave in
 $Q$. Thus, since the minimum of concave functions is concave,
$c_P(z,Q)$ is concave in $Q$ for a fixed $z$. Similarly, we can show
concavity in $z$ for a fixed $Q$ since $g_{z,Q,P}(x)$ is concave in
$z$, too, for every fixed $Q$.  
\end{proof}

We now complete the proof of Lemma~\ref{c:dim_red}.
 We will show that for any $(z,Q)$ which is feasible for optimization
 problem \eqref{e:maxmin_cost}, we can find a feasible
 $(z,Q^\prime)$ with $Q^\prime$ satisfying \eqref{e:property}, and
$$ c_P(z,Q) \leq c_P(z,Q^\prime). $$

Indeed, consider
  $Q^\prime(x):=Q(A_i)/|A_i|$ for all $x \in
  \X$.  The remainder of the proof is divided into two parts, the
first proving the feasibility of $Q^\prime$ and the second 
 proving $ c_P(z,Q) \leq c_P(z,Q^\prime). $
\paragraph{Feasibility of $(z,Q^\prime)$}  

From the feasibility of
  $(z,Q)$, for all symbols $x$ in $A_i$ and for all $i$ in
$[M_P]$, $g_{z,Q,P}(x) \geq 0$, whereby 
\begin{align*}
\sum_{x \in A_i}
    g_{z,Q,P}(x) 
&=\sum_{x \in A_i}
 \left(1- \frac{{z}^2}2\right)P(x)
  \\&\hspace{3cm}  +{z}\sum_{x \in A_i}\sqrt{Q(x)P(x)} 
\\
&= \left(1- \frac{{z}^2}2\right)P(A_i)
    +{z}\sum_{x \in A_i}\sqrt{Q(x)P(x)} 
\\
&\geq \left(1-\frac{{z}^2}2\right)P(A_i)
+{z}\sqrt{Q^\prime(A_i)P(A_i)} 
\\
&= |A_i| g_{z,Q^{\prime},P}(x) 
\\
&\geq 0,
\end{align*}
where the first inequality is by Cauchy-Schwarz inequality, the positivity
 of $z$, and the assumption that $P(x) = P(A_i)/|A_i|$ for every $x$ in
 $A_i$, and the final identity uses definition of $Q^\prime$. This
 proves the feasibility of $(z, Q^\prime)$ for the optimization
 problem \eqref{e:maxmin_cost}.

\paragraph{Proof of optimality}  
Denoting by $\Pi(A_1)$ the set of all permutations of the elements of
$A_1$, let $Q^{\pi}$ be the distribution given by
\[ 
Q^\pi(x)=\begin{cases}
&Q(\pi(x)),\quad \forall x\in A_1
\\
&Q(x), \quad \text{otherwise}.
\end{cases}
\]
Then, the distribution $\overline{Q} = (1/|\Pi(A_1)|)\cdot\sum_{\pi\in \Pi(A_1)}Q^\pi$ satisfies
\[ 
\overline{Q}(x)=\begin{cases}
&\frac 1 {|A_1|} \cdot Q(A_1),\quad \forall x\in A_1
\\
&Q(x), \quad \text{otherwise}.
\end{cases}
\]
Since by Lemma \ref{c:concavity} $c_P(z,Q)$ is concave in $Q$ for every fixed $z$, we get
\[
c_P(z,\overline{Q}) \geq \frac{1}{|\Pi(A_1)|}\cdot\sum_{\pi \in\Pi(A_1)}
c_P(z,Q^\pi).
\]
Furthermore, note that $g_{z,Q^\pi, P}(x)= g_{z, Q, P}(\pi(x))$ 
since
$P(x)=P(A_1)/|A_1|$ for every $x$ in $A_1$, and thereby 
$c_P(z, Q^\pi) = c_P(z, Q)$ for every $\pi\in \Pi(A_1)$ . Therefore,
combining the observations above, we obtain $c_P(z,\overline{Q}) \geq
c_P(z, Q)$.

Repeating this argument by iteratively using permutations of $A_i$ for $i\geq 2$, we obtain the required inequality 
\[
c_P(z, Q') \geq c_P(z,Q).
\]

\qed

\begin{IEEEbiographynophoto}
{Prathamesh Mayekar} (S'18)  received the B.E.
degree in Electronics and Telecom. Engineering from Mumbai University, India in 2013 and the  M.Tech. degree in Industrial Engineering and Operation Research from the Indian Institute of Technology Bombay, India in 2015.
Currently, he is a Ph.D. candidate at the Department of Electrical Communication Engineering, Indian Institute of Science, India. Broadly, his  research interests lie at the intersection of information theory and optimization. 
He is a  recipient of  2018 Jack Keil Wolf ISIT Student Paper Award and 
  Wipro PhD fellowship.
\end{IEEEbiographynophoto}

\begin{IEEEbiographynophoto}{Parimal Parag}
(S'04-M'11) is an assistant professor in the ECE department at Indian Institute of Science. 
Prior to that, he was a senior system engineer (R\&D) at Assia Inc. in Redwood City (2011-2014). 
He received a Ph.D. degree from the Texas A\&M University in 2011, 
the M.Tech. and B.Tech. degrees in 2004 from IIT Madras, all in electrical engineering. 
His research interests lie in the design and analysis of large scale distributed systems. 
He was a co-author of the 2018 IEEE ISIT student best paper, 
and a recipient of the 2017 early career award from the Science and Engineering Research Board.  
\end{IEEEbiographynophoto}

\begin{IEEEbiographynophoto}
{Himanshu Tyagi}
(S'04--M'14--SM'19) received the B.Tech. degree in
electrical engineering and the M.Tech. degree in
communication and information technology, both from the Indian
Institute of Technology, Delhi, India in 2007. He received 
the Ph.D. degree from 
the University of Maryland, College Park in 2013. From 2013 to 2014, he was a
postdoctoral researcher at the Information Theory and Applications (ITA) 
Center, University of California, San Diego. Since January 2015, he has been
 a faculty member at the Department of Electrical Communication Engineering, Indian Institute of Science in Bangalore. His research interests broadly lie in information theory and its application in cryptography, statistics, machine learning, and computer science. Also, he is interested in communication and automation for city-scale systems. 
\end{IEEEbiographynophoto}

\end{document}